\newcommand{\R}{\mathbb{R}}
\newcommand{\Ro}{{\cal R}}
\newcommand{\pd}{\partial}
\def\e{\epsilon}
\newcommand{\lag}{\mathfrak{g}}
\def\fpd#1#2{\frac{\partial #1}{\partial #2}}
\newcommand{\F}{\mathbb{F}}
\newtheorem{theorem}{Theorem}
\newtheorem{proposition}{Proposition}
\newtheorem{definition}{Definition}
\newtheorem{remark}{Remark}
\newenvironment{proof}{{\bf Proof.\ }}{}
\begin{document}
\title{Routh reduction for singular Lagrangians}
\author{Bavo Langerock\thanks{BL is an honorary postdoctoral researcher at the Department of Mathematics of Ghent Univeristy and associate academic staff at the Department of Mathematics of KULeuven.}\\ Sint-Lucas school
of Architecture\\ B-9000
Ghent, Belgium\\[1cm]
Marco Castrill\'on L\'opez\\
ICMAT (CSIC-UAM-UC3M-UCM), Dept. Geometr\'{\i}a y Topolog\'{\i}a\\
Facultad de Matem\'aticas\\
Universidad Complutense de Madrid\\ 28040 Madrid, Spain}
\maketitle
\begin{abstract}
\noindent This paper concerns the Routh reduction procedure for
Lagrangians systems with symmetry. It differs from the existing
results on geometric Routh reduction in the fact that no
regularity conditions on either the Lagrangian $L$ or the momentum
map $J_L$ are required apart from the momentum being a regular
value of $J_L$. The main results of this paper are: the
description of a general Routh reduction procedure that preserves
the Euler-Lagrange nature of the original system and the presentation of a
presymplectic framework for Routh reduced systems. In addition, we provide a detailed description and interpretation of the Euler-Lagrange equations for the reduced system. The proposed procedure includes Lagrangian systems with a non-positively
definite kinetic energy metric.

\noindent MSC2000: Primary 70H33. Secondary 70G65, 70H03, 53D20.
\end{abstract}

{\em Keywords:\ }Constrained Lagrangian systems;
momentum map; reduction; Routhian; singular Lagrangians; symmetry;
symplectic form.

\tableofcontents
\markboth{B. Langerock and M. Castrill\'on L\'opez}{Routh reduction for singular Lagrangians}

\section{Introduction}
In geometric accounts to Routh
reduction~\cite{CMR01,mestcram,jalna,marsdenrouth} it is custom to
first consider the restriction of the dynamics associated to a
Lagrangian system with symmetry to the level set of the momentum
map $J^{-1}_L(\mu)$ and then reduce to the quotient manifold under
the action of the isotropy group $G_\mu$. Because this procedure
is very similar to symplectic or Marsden-Weinstein reduction, one often states
that {\em Routh reduction is the Lagrangian analogue of cotangent
bundle reduction} (see e.g.~\cite{jalna,marsdenrouth}). Another
essential ingredient in the work on Routh reduction is the
existence of a regularity condition on the momentum map $J_L$: it
guarantees that the manifold $J^{-1}_L(\mu)/G_{\mu}$ is
diffeomorphic to $T(Q/G)\times Q/G_\mu$.  Roughly said, this
condition provides the manifold $J^{-1}_L(\mu)/G_{\mu}$ with a
tangent bundle structure, an essential feature to reinterpret the
reduced dynamics as being Euler-Lagrange equations. We say that
Lagrangians satisfying this regularity condition are $G$-regular.
For example, the classical $T-V$ Lagrangians fit in this context.
Unfortunately, there are many Lagrangians where this procedure can
not be carried out. This is the case for singular Lagrangians,
though there are also instances of regular Lagrangians where it is
impossible to perform Routhian reduction in this sense. A simple
one is the Lagrangian
$L(q^1,q^2,\dot{q}^1,\dot{q}^2)=(\dot{q}^1)^2+\dot{q}^1\dot{q}^2-V(q^1)$
defined in $Q=\mathbb{R}^2$ and symmetric with respect to the
action of the group $G= \mathbb{R}$ of translations along the
$q^2$ coordinate (using the old language in Mechanics, $q^2$ is a
cyclic coordinate). The momentum map is $J_L=\dot{q}^1$. For a
frozen value $\mu = \dot{q}^1 $ of $J_L$, the quotient
$J^{-1}_L(\mu)/G_{\mu}$ can not be identified with $T(Q/G)\times
Q/G_\mu=T(Q/G)$ as the coordinates of the quotient space
$J^{-1}_L(\mu)/G_{\mu}$ are precisely $(q^1,\dot{q}^2)$ whereas
the coordinates of $T(Q/G)$ are $(q^1,\dot{q}^1)$. See \S
\ref{suy} below. The main result of this paper is the
generalization of Routh reduction to a reduction technique that
holds for arbitrary Lagrangians with the only requirement of $\mu$
being a regular value of $J_L$. The price paid for that is that
the reduction process will be carried out in the entire space $TQ$
and not only on $J^{-1}_L (\mu)$. The fact that the reduction takes
place in the whole space does not mean that we loose control of the
momentum, at the contrary, the reduction process will keep track of
the value of $J_L$ in the same spirit of the classical Routh reduction.

In order to relate the presented reduction technique to existing results on geometric Routh reduction, we first mention that the correspondence between Routh
reduction and Marsden-Weinstein reduction holds at a much more fundamental
level: Routh-reduction is Marsden-Weinstein reduction applied to
the Poincar\'e-Cartan symplectic structure on the tangent bundle (see~\cite{BC}).
This correspondence was then used to generalize Routh reduction to
Lagrangians that are quasi-invariant (up to a total time
derivative).

Here we wish to generalize Routh reduction from a different point
of view. In classical treatments on Routh reduction of a
Lagrangian system with cyclic coordinates, the reduced system is
again a Lagrangian system with a new Lagrangian which is called
the Routhian. It is this observation what we take as a starting
point in this paper: Routh reduction is a reduction technique that
preserves the Euler-Lagrange nature of the original system.

The different steps in the proposed reduction are best illustrated
by means of the schematic diagram in Fig.~\ref{fig:schema}. For
that purpose, we shall denote for now a Lagrangian system as a
couple $(TQ,L)$, $Q$ being the configuration manifold and $L$ a
function on $TQ$ called the Lagrangian. The first step is to
consider an {\em equivalent} Lagrangian system on $Q$, with
Lagrangian $R^\mu$. This new Lagrangian has the property that
solutions of the Euler-Lagrange equations of $(TQ,L)$ with
momentum $\mu$ are solutions to $(TQ,R^\mu)$ with {\em momentum}
0. For the sake of completeness we mention here that $(TQ,R^\mu)$
is {\em not conservative} and additional force terms should be
taken into account; and that the function $R^\mu$ is only $G_\mu$
invariant but, as will become clear, we can make sense to a
momentum map of $R^\mu$ taking values in $\lag^*$, the dual of the
Lie-algebra of $G$. This step is not new and was carried out, for
instance, in~\cite{jalna}. The Routh reduction technique described
in~\cite{jalna} is schematically presented in
Fig.~\ref{fig:schema} by the arrows $1$ and $2$, followed in
this order.

\begin{figure}[t]\centering
\includegraphics{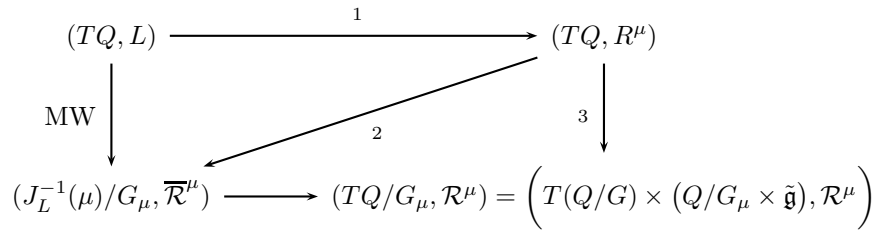}
\caption{Schematic drawing of reduction on a tangent
bundle}\label{fig:schema}
\end{figure}

We shall follow a different path: in our approach the second step
is to perform a reduction on the system $(TQ,R^\mu)$ that is
similar to Lagrange-Poincar\'e reduction, i.e. we reduce the
system to $(TQ/G_\mu, \Ro^\mu)$, where $\Ro^\mu$ is the quotient
of $R^\mu$ (represented by arrow $3$ in the above diagram). We
show that this step, in contrast to general Lagrange-Poincar\'e
reduction, preserves the Euler-Lagrange nature of the original
system. To explain this statement in more detail, we mention here
that $TQ/G_\mu$ can be identified with the fibred product of three
bundles over $Q/G$: $T(Q/G)$, $Q/G_\mu$ and $\tilde\lag$ (the
latter being the bundle associated to $\lag$). Roughly said, the
Routhian $\Ro^\mu$ depends on velocities $v_x$ in $T(Q/G)$ and on
points $y$ and $\tilde\xi$ in the fibres of $Q/G_\mu$ and
$\tilde\lag$ above $x$ respectively. When we say that
$\Ro^\mu(v_x,y,\tilde\xi)$ is a Lagrangian, we understand that the
variables $y$ and $\tilde\xi$ are to be interpreted as {\em
configuration} coordinates. This might be surprising, since
$\tilde\xi$ is the projection of the vertical part of a velocity
upstairs, i.e. a tangent vector in $TQ$.

It is immediately clear now that the Lagrangian $\Ro^\mu$ is {\em
singular}, i.e. it does not depend on the velocities of the
variables $y$ and $\tilde\xi$. This is another price we pay for
keeping the Euler-Lagrange nature of the system: we have to work
with singular Lagrangians. But the {\em singularity} of these
systems is of a specific type, which we call throughout this paper
{\em intrinsically constrained Lagrangian systems}. They are
studied in detail in Section~\ref{sec:lagsyst}. It is well known
that singular Lagrangians contain {\em constraints} on the
solutions to the associated Euler-Lagrange equations. It is shown
that one of these intrinsic constraints embedded in the system
$(TQ/G_\mu,\Ro^\mu)$ is a reduced version of the {\em  fixed
momentum condition} $J_L=\mu$. As a result of this Lagrangian
interpretation of the reduced system $(TQ/G_\mu,\Ro^\mu)$, we
shall define a presymplectic formulation and, in the most general
case, one can apply the presymplectic constraint
algorithm~\cite{gotaya,gotayb,gotayc} to find solutions to the
Euler-Lagrange equations.

Depending on the nature of these constraints we describe in more
detail two distinct cases where this reduced singular Lagrangian
system is equivalent to a variational problem on the space
$J^{-1}_L(\mu)/G_\mu$; the first case leads to standard Routh
reduction and the second case is new and, for instance, is
applicable to geodesics in general relativity where the metric is
invariant under a lightlike vector field.

Finally, we wish to mention that besides the importance of Routh
reduction for Mechanical systems by itself, the special features
of this process may shed light to other situations. In particular,
this might be the case for Field Theories. The covariant reduction
of these Lagrangian systems under the action of a group of
symmetries has been developed in recent years (see for example
\cite{CaRa}), though the Hamiltonian counterpart is much less
explored. A generalization of Routh techniques to this setting
could be of much interest to tackle some of the difficulties
encountered in this context (for instance, the lack of meaning of
the notion of fixing the value of the momentum map).

Throughout this paper manifolds are always assumed to be smooth
finite dimensional (Hausdorff, second countable, $C^\infty$) and
smooth always means of class $C^\infty$. We will often consider
the pull-back of a bundle, a function, a one- or two-form along a
map $f:M\to N$ between two manifolds. To reduce the notational
complexity in this paper, we sometimes denote the pull-backed
object with the same symbol as the object itself. It should be
clear from the context what is meant. Given two bundles
$\pi_1:M_1\to N$ and $\pi_2:M_2\to N$ over a manifold $N$, then we
often consider the fibred product bundle $M_1\times_NM_2$, or
simply $M_1\times M_2$ when no confusion is possible. Elements in
such a fibred product  are denoted as a couple $(m_1,m_2)$, with
$m_1\in M_1$ and $m_2\in M_2$ such that $\pi_1(m_1)=\pi_2(m_2)$.
Finally, velocities at a point $m\in M$ are typically denoted by
$v_m,w_m \in T_mM$. When we consider the tangent vector to a curve
$m:I\to M$ at a time $t$, we shall write $\dot m(t)$ for the
corresponding curve in $TM$.

\section{Lagrangian systems on fibred manifolds}\label{sec:lagsyst}

We start with recalling some standard concepts in Lagrangian
mechanics. Assume that $L$ is a Lagrangian function defined on the
tangent space of a manifold $M$.
\begin{definition} A Lagrangian system is a triple $(M,L,F)$ with $L$ a
function on $TM$ and $F$ a co-vector valued function $TM$, i.e.
$F:TM\to T^*M$, fibred over the identity. The map $F$ is called
the force term of the Lagrangian system. A Lagrangian system
$(M,L,F)$ is said to be conservative if $F=0$.
\end{definition}
In the above notion of conservative systems we assume that
conservative forces are always taken into account by means of
their potential energy in the Lagrangian function $L$.
\begin{definition}
A curve $m: I=[a,b] \to M$ is said to be a {\em critical} curve
for the Lagrangian system $(M,L,F)$ if for arbitrary variations
$\delta m$ with fixed endpoints $\delta m(a,b)=0$,
\[
\delta \int_I L(\dot m(t))dt +\int_I \langle F(\dot m(t)),\delta
m(t)\rangle dt=0
\]
holds.
\end{definition}
It is well-known that a critical curve satisfies
\[
{\cal EL}(L)(\ddot m(t)) +F(\dot m(t)) = 0,
\]
with ${\cal EL}$ the Euler-Lagrange operator ${\cal EL}(L):
T^{(2)}M \to T^*M$. When expressed in a local coordinate system
$(X^i)$ on $M$, we get
\[
{\cal EL}(L)(\ddot m(t))= \left(\fpd{L}{X^i}(X,\dot X) -
\frac{d}{dt}\left(\fpd{L}{\dot X^i}\right)(X,\dot X,\ddot X)
\right)dX^i.
\]
Thus a critical curve satisfies
\[
\frac{d}{dt}\left(\fpd{L}{\dot X^i}\right)-\fpd{L}{X^i}=
F_i,\mbox{ for } i=1,\ldots,\dim M.
\]
We now assume that the manifold $M$ is fibred over $N$, i.e. there
is a bundle map $\pi: M\to N$. We denote the kernel of $T\pi$ by
$V\pi\subset TM$ and is called the bundle of vertical tangent
vectors. A cotangent vector $\alpha_m\in T^*_mM$ can be restricted
to $V_m\pi$, and such a restriction will be denoted by
$\alpha_m^v: V_m\pi \to \R$. Moreover, we denote the product
bundle $TN\times_N M$ by $T_MN$. Elements in this bundle are
denoted by $(v_n,m)$, where $v_n\in T_nN$ and $m \in
M_n=\pi^{-1}(n)$. The map $(T\pi,\tau_M):TM\to T_MN; v_m\mapsto
(T\pi(v_m),m)$ is denoted by $p_1$.

\begin{definition}
  An intrinsically constrained Lagrangian system is a triple
  $(\pi:M\to N,L,F)$, with $L$ a function on $T_MN$ and $F$ the
  force term $F:TM\to T^*M$ fibred over the identity on $M$. A
  curve $m(t)\in M$ is said to be a {\em critical curve}  of the
  intrinsically constrained Lagrangian system if it is a critical
  curve of the associated Lagrangian system $(M,p_1^*L,F)$.
\end{definition}

It should be clear that the Lagrangian system $(M,p_1^*L, F)$ is
singular since $p_1^*L$ is independent of the velocities in the
fibre coordinates. We now focus on the Euler-Lagrange equations
for the Lagrangian system $(M,p_1^*L,F)$ associated to the
intrinsically constrained system $(\pi:M\to N,L,F)$ (from now on,
we denote $L$ and $p_1^*L$ by the same symbol). For that purpose,
let $(x^i,y^a)$ be a coordinate system on $M$ that is adapted to
the fibration, i.e. $(x^i; i=1,\ldots,\dim N)$ are coordinates on
$N$ and $(y^a; a=1,\ldots ,\dim M-\dim N)$ are coordinates for the
typical fibre of $\pi$. It is not hard to see that the
Euler-Lagrange operator for an intrinsically constrained
Lagrangian system, when restricted to vertical directions becomes
\[
{\cal EL}(L)^v = \fpd{L}{y^a}dy^a.
\]
A critical curve $m(t)=(x^i(t),y^a(t))$ of the Lagrangian system
on $\pi:M\to N$ then satisfies the (possibly nonholonomic)
condition
\begin{equation}\label{eq:constraint}
\fpd{L}{y^a}(x,\dot x,y)+F_a(x,y,\dot x,\dot y)=0, \mbox{ for }
a=1,\ldots,\dim M-\dim N.
\end{equation}

The appearance of this {\em constraint} on the set of critical
curves justifies the previous definitions. We now specialize to
three distinct cases for constrained Lagrangian systems. We focus
on a local treatment.

\subsection{Gyroscopic forces}\label{sec:velcon}

We study here the specific case where the non-conservative force
term $F$ is of the form $F(v_m) = -i_{v_m} \beta_m$ with $\beta$ a
closed 2-form on $M$. In this case we say that $F$ is a {\em
gyroscopic} force term associated to $\beta$. As will become clear
further on, Routh reduction of a Lagrangian system with
non-abelian symmetry is {\em in general} a Lagrangian system
subjected to gyroscopic forces. This type of forces therefore
deserves special attention. Let $v_m=(x^i,y^a, \dot x^i,\dot y^a)$
in a coordinate system $(x^i,y^a)$, then $F$ assumes the following
form
\[
F(v_m) = \left(\beta_{ij} \dot x^j  + \beta_{ia}\dot
y^a\right)dx^i + \left(\beta_{ab}\dot y^b-\beta_{ia}\dot
x^i\right)dy^a.
\]
We say that the gyroscopic force term is {\em regular} if
$\beta_{ab}$ is nondegenerate at all points in $M$, i.e. if
$\beta$ is a nondegenerate 2-form when restricted to $V\pi$. For
gyroscopic forces, the intrinsic constraint~(\ref{eq:constraint})
becomes:
\[
\fpd{L}{y^a}  + \left(\beta_{ab}\dot y^b-\beta_{ia}\dot
x^i\right)=0,
\]
and if $\beta$ is regular, it should be clear that this constraint
can be brought into the form
\[
\dot y^a = - \left((\beta|_{V\pi})^{-1}\right)^{ab}
\left(\fpd{L}{y^b}  -\beta_{ib}\dot x^i\right).
\]

An interesting property of intrinsically constrained Lagrangian
systems with a gyroscopic force term is that they admit a
presymplectic formulation on the manifold $T_M N$. To show this,
we introduce the following maps:
\begin{enumerate}
\item \label{def:legmap} the Legendre map $\F_{1} L: T_MN\to
T_M^*N$, defined by \[\langle \F_{1} L (v_n,m) , (w_n,m)\rangle =
\left.\frac{d}{d\epsilon} \right|_{\epsilon=0} L (v_n+\epsilon
w_n,m),\] for $(v_n,m),(w_n,m)\in T_MN$ arbitrary; \item the
energy $E_L$ of the Lagrangian $L$ is the function on $T_MN$
defined by \[E_L = \langle \F_1 L(v_n,m),(v_n,m)\rangle -L(v_n,m);\]
\item the projections $\pi_{1}: T_MN \to TN$ and $\pi_{2}: T_MN
\to M$, and their analogues on the cotangent bundle:
$\overline\pi_1: T_M^*N\to T^*N$ and $\overline\pi_2: T_M^*N \to
M$.
\end{enumerate}
We denote the Liouville form on $T^*N$ by $\theta_N$ and the
associated symplectic 2-form $\omega_N=d\theta_N$. The following
result is then easily proven in a local coordinate system.
\begin{proposition}\label{prop:presymp}
Let $m(t)$ denote a critical curve of the intrinsically
constrained Lagrangian system $(M,L,F)$ with gyroscopic force term
associated to $\beta$. Then $\gamma(t)=(\dot n(t),m(t))\in T_MN$,
with $n(t)=\pi(m(t))$, satisfies the following presymplectic
equation
\[
\left(i_{\dot \gamma (t)
}\big((\overline\pi_1\circ\F_1L)^*\omega_N + \pi^*_2\beta\big) =
-dE_L\right)|_{\gamma}.
\]
The converse also holds: any curve $\gamma(t)\in T_MN$ of the form
$\gamma(t)=(\dot n(t),m(t))$ with $n(t)=\pi(m(t))$ and that solves
the above presymplectic equation, projects to a critical curve
$m(t)=\pi_2(\gamma(t))$ of $(M,L,F)$.
\end{proposition}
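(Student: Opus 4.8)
The plan is to verify the identity by a direct computation in an adapted coordinate system, exactly as the proposition anticipates, and to recognise that, component by component, $i_{\dot\gamma}\Omega + dE_L = 0$ (where $\Omega := (\overline\pi_1\circ\F_1 L)^*\omega_N + \pi_2^*\beta$) is nothing but the critical-curve equations already derived. I would fix coordinates $(x^i,y^a)$ on $M$ adapted to $\pi$, so that $T_MN$ carries coordinates $(x^i,\dot x^i,y^a)$ with $\dot x^i$ the fibre coordinates of $TN$. The structural point to keep in mind throughout is that an admissible curve $\gamma(t)=(\dot n(t),m(t))$ has its $\dot x^i$-slot equal to $\tfrac{d}{dt}x^i(t)$, so that $\dot\gamma = \dot x^i\partial_{x^i}+\ddot x^i\partial_{\dot x^i}+\dot y^a\partial_{y^a}$; this holonomy is what lets the interior product reproduce total time-derivatives along the curve.

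First I would compute the three ingredients. From the definition of the Legendre map one gets $\F_1 L(x,\dot x,y)=(x,\partial L/\partial\dot x^i,y)$; writing $p_i:=\partial L/\partial\dot x^i$ as a function on $T_MN$, the energy is $E_L=\dot x^ip_i-L$, the first pullback is $(\overline\pi_1\circ\F_1 L)^*\omega_N=dp_i\wedge dx^i$, and
\[
\pi_2^*\beta=\tfrac12\beta_{ij}dx^i\wedge dx^j+\beta_{ia}dx^i\wedge dy^a+\tfrac12\beta_{ab}dy^a\wedge dy^b,
\]
with all coefficients functions of $(x,y)$ only. Differentiating the energy gives, with $L_{x^i\dot x^j}$ etc.\ denoting second partial derivatives of $L$,
\[
dE_L=\big(\dot x^jL_{x^i\dot x^j}-L_{x^i}\big)dx^i+\dot x^jL_{\dot x^i\dot x^j}\,d\dot x^i+\big(\dot x^jL_{y^a\dot x^j}-L_{y^a}\big)dy^a.
\]

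The heart of the argument is the cancellation in $i_{\dot\gamma}(dp_i\wedge dx^i)+dE_L$. Here I expect the $d\dot x^i$-components to cancel identically, which is exactly the degeneracy of $\Omega$ in the velocity directions; the mixed second derivatives $\dot x^jL_{x^i\dot x^j}$ in the $dx^i$-slot to cancel against their counterparts from $dE_L$; and the surviving $dx^i$-coefficient to assemble, via the chain rule $\tfrac{d}{dt}p_i=L_{x^j\dot x^i}\dot x^j+L_{\dot x^j\dot x^i}\ddot x^j+L_{y^a\dot x^i}\dot y^a$, into a total derivative, leaving
\[
i_{\dot\gamma}(dp_i\wedge dx^i)+dE_L=\Big(\tfrac{d}{dt}p_i-\fpd{L}{x^i}\Big)dx^i-\fpd{L}{y^a}dy^a.
\]
Adding the contribution $i_{\dot\gamma}(\pi_2^*\beta)$, whose $dx^i$-component is $-(\beta_{ij}\dot x^j+\beta_{ia}\dot y^a)=-F_i$ and whose $dy^a$-component is $\beta_{ia}\dot x^i-\beta_{ab}\dot y^b=-F_a$, reproduces the force term componentwise. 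Hence $i_{\dot\gamma}\Omega+dE_L=0$ is equivalent to the dynamical equations $\tfrac{d}{dt}(\partial L/\partial\dot x^i)-\partial L/\partial x^i=F_i$ together with the intrinsic constraint $\partial L/\partial y^a+(\beta_{ab}\dot y^b-\beta_{ia}\dot x^i)=0$ of~(\ref{eq:constraint}).

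Both implications of the proposition then drop out of this single equivalence. The direction where I would be most careful is the converse: since $\Omega$ carries no $d\dot x^i$-component, the presymplectic equation does not by itself impose the second-order condition $\dot x^i=\tfrac{d}{dt}x^i$, so the hypothesis that $\gamma$ has the holonomic form $(\dot n(t),m(t))$ with $n=\pi\circ m$ is genuinely needed and must be invoked precisely at the point where $\tfrac{d}{dt}p_i$ is read off from $\dot\gamma$. The main obstacle is therefore not conceptual but the index-and-sign bookkeeping of the interior products, the verification that every second-derivative term cancels as claimed, and the correct placement of the holonomy assumption in the converse.
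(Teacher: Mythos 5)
Your computation is correct and is exactly the route the paper intends: the text merely asserts the result "is easily proven in a local coordinate system," and your coordinate expansion of $\F_1L$, $E_L$, $\omega_N$ and $\pi_2^*\beta$, the cancellation of the $d\dot x^i$-components and of the mixed second derivatives, and the resulting identification with the Euler--Lagrange equation in the $x^i$ together with the intrinsic constraint $\partial L/\partial y^a+F_a=0$ is precisely that verification. Your remark that the holonomic form $\gamma=(\dot n(t),m(t))$ must be assumed in the converse (since the presymplectic form is degenerate in the $\partial_{\dot x^i}$ directions) correctly identifies why the proposition is phrased as it is.
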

If (i) $\F_1L:T_MN\to T_M^*N$ is a diffeomorphism and if (ii)
$\beta$ determines a regular gyroscopic force term, then the
2-form $\big((\overline\pi_1\circ\F_1L)^*\omega_N +
\pi^*_2\beta\big)$ is symplectic. In~\cite{BC} the symplectic
structure of a Routh reduced system is proven to be of this type
and it is identified with the the Marsden-Weinstein reduced
symplectic structure of a Lagrangian system with symmetry. Without
the above regularity assumptions (i) and (ii), the presymplectic
system may only have solutions on a submanifold of $T_MN$. A next
step would be to find these submanifolds, however, this is not the
scope of this paper and we refer the reader to the presymplectic
constraint algorithm~\cite{gotaya,gotayb}.

We end this paragraph with a slightly more general formulation of
the above proposition. In general, the intrinsically constrained
Lagrangian systems obtained from Routh reduction will have a force
term $F=F_1+F_2$ that consists of two parts: a gyroscopic force
term $F_1$ associated to $\beta$ and a force term $F_2=
T^*\pi\circ\hat F\circ p_1$, with $\hat F:T_MN\to T^*N$. Again
from local coordinate expressions, we conclude that critical
curves of $(\pi:M\to N,L,F)$ solve the presymplectic equation (and
vice versa)
\[
\left(i_{\dot \gamma (t)
}\big((\overline\pi_1\circ\F_1L)^*\omega_N + \pi^*_2\beta\big)  =
-dE_L+ \pi_1^* \hat F\right)|_{\gamma},
  \]
with $\pi_1^* \hat F:T_MN \to T^*(T_MN)$ such that $\langle
\pi_1^*\hat F(v_n,m), V_{(v_n,m)}\rangle =\langle \hat F(v_n,m),
T\pi_1(V_{(v_n,m)})\rangle$, with $(v_n,m)\in T_MN$ and
$V_{(v_n,m)}\in T_{(v_n,m)}(T_MN)$ arbitrary.

\subsection{Regular configuration constraints}\label{sec:regconf}

Throughout this paragraph we assume that the force term $F$ is of
the form $F=T^*\pi \circ \hat F\circ p_1$ with $\hat{F}\colon T_M
N \to T^* N$ as above, that is, $F(v_m) = T^*_m \pi(\hat F(v_n,m))$,
with $v_n=T\pi(v_m)$. Then the constraint
equation~(\ref{eq:constraint}) reduces to
\begin{equation}\label{eq:con2}
\pd L/\pd y^a(x,\dot x,y) = 0.
\end{equation}
Intrinsic constraints of this form are called {\em configuration
constraints} and they are said to be {\em regular} if a map
$\gamma:TN\to M$, with $\pi\circ\gamma=\tau_N$ exists such that,
in local coordinates, Eq.~(\ref{eq:con2}) is equivalent to
$y^a=\gamma^a(x,\dot x)$. As we show here, in this case critical
curves of the Lagrangian system on $M$ are in a one to one
correspondence to critical curves of a new Lagrangian system on
$N$. The local existence of such a map $\gamma$ is guaranteed if
the (vertical) Hessian of the Lagrangian is non-degenerate, i.e.
if
\[
\mbox{det}\fpd{^2L}{y^a\partial y^b}\neq 0.
\]
Consider the Lagrangian system $(N,L',F')$ on $N$, with
\[
L' = L(v_n,\gamma(v_n)) \mbox{ and } \langle F'(v_n), w_n\rangle
=\langle \hat F(v_n,\gamma(v_n)),w_n\rangle.
\]
Let $m(t)=(x^i(t),y^a(t))$ be a curve in $M$. It is easily seen
that the condition for $m$ to be critical w.r.t $(M\to N, L, F)$
is equivalent to $n(t)=\pi(m(t))$ being a critical curve of
$(N,L',F')$.  More precisely,
\begin{eqnarray*}
&&\frac{d}{dt}\left(\fpd{L}{\dot x^i}(x,\dot x,y)\right)
=\fpd{L}{x^i}
(x,\dot x,y) + F_i(x,\dot x,y), \mbox{ and}\\
&&y^a = \gamma^a(x,\dot x).
\end{eqnarray*}
is equivalent to
\[
\frac{d}{dt}\left(\fpd{L'}{\dot x^i}(x,\dot x)\right)
=\fpd{L'}{x^i}(x,\dot x) + F_i(x,\gamma(x,\dot x),\dot x)
\]
The correspondence is made explicit by the map $\gamma$:
$\gamma(n(t))=m(t)$. In this sense the intrinsically constrained Lagrangian system
$(M\to N,L,F)$ is `equivalent' to the Lagrangian system
$(N,L',F')$ on $N$.

The relation between Routh reduced systems and intrinsically
constrained systems can now be made more explicit, and for this purpose we go ahead of some definitions and notations. We believe however that the reader familiar with geometric Routh reduction will benefit from the following. In general, a
Routh reduced Lagrangian system will be an intrinsically
constrained system on $M=(Q/G_\mu\times \tilde\lag)$ fibred over
$N=Q/G$, where $Q$ is a manifold on on which a Lie group $G$ acts freely
and properly. The bundle $M$ is a product bundle, and the Routhian
is a Lagrangian on $TM$ independent of the velocities in the
fibres of $M$. As will be made more precise in the following, this
leads to constraints on the set of critical curves and due to the
fact that $M$ is a product bundle we will distinguish two
constraints. The constraints arising from the bundle $Q/G_\mu\to
Q/G$ will always belong to the regular gyroscopic class. The
constraints arising from the bundle $\tilde\lag\to Q/G$ will be
configuration constraints. If these configuration constraints are
regular, then we retrieve the standard Routh reduction procedure.
The irregular case will sometimes lead to a variational problem on
$N$ as is described in the next paragraph. In general, a Routh
reduced system is associated to a presymplectic structure on
$T_MN$, and its solutions can be constructed by means of the
presymplectic constraint algorithm.

\subsection{Linear constraints and Lagrangian multipliers}
Throughout this paragraph we describe a special case of
configuration constraints for a Lagrangian system on $M$ with the
property that its critical curves correspond to the solutions of a
variational problem on $N$. We assume that (i) $\pi:M\to N$ is a
linear fibration and (ii) the force term $F$ is of the form $F=
T^*\pi\circ \hat F \circ T\pi$, with $\hat F:TN\to T^*N$.

We say that the intrinsically constrained system $(M,L,F)$ is {\em
linear} if $L$ is of the form
\[
L(v_n,m) = L_0(v_n) + \langle \alpha(v_n),m\rangle.
\]
with $\alpha: TN\to M^*$, fibred over the identity and $L_0$ a
function on $TN$. The constraint equation~(\ref{eq:constraint}) for critical curves $m(t)=(x^i(t),y^a(t))$ now becomes
\[
\fpd{L}{y^a}(x,\dot x,y) = \alpha_a(x,\dot x) = 0.
\]

Assume that $0$ is a regular value of $\alpha$, and consider the
submanifold $C$ of $TN$ defined by $v_n\in C$ if and only if
$\alpha(v_n)=0$. The following proposition states that critical
curves of $(\pi:M\to N,L,F)$ coincide with critical curves of the
variational problem on $N$ with Lagrangian $L_0$ constrained to
the submanifold $C$ in $TN$. Therefore, we can say that the
initial Lagrangian system on $M$ admits a variational
interpretation on the base manifold $N$, independent of the total
space $M$. When working in a coordinate system, one might
interpret the fibre coordinates $y^a$ as Lagrangian multipliers
for the constrained variational problem on the base manifold.
\begin{definition}
A critical curve $p(t)$ of a Lagrangian system $(P,L',F')$ on a
manifold $P$ {\em constrained} to a submanifold $C'$ of $TP$ is a
curve satisfying
\[
\delta \int_I L'(\dot p(t))dt +\int_I \langle F'(\dot p(t)),\delta
p(t)\rangle dt=0,
\]
for any variation $\delta p$ with fixed endpoints originating from
a deformation $p_\e(t)$ of $p(t)$ which is entirely contained in
$C'$, i.e. $\dot p_\e(t)\in C'$.
\end{definition}

\begin{proposition}\label{prop:lagcon}
Given a linear intrinsically constrained Lagrangian system,
$(\pi:M\to N,L=L_0+\langle\alpha,m\rangle,F)$ on $M$. Then the
critical curves of $(\pi:M\to N,L,F)$ on $M$ project under $\pi$
onto critical curves of the Lagrangian system $(N,L_0,\hat F)$ on
$N$ constrained to $C$.
\end{proposition}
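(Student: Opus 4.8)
The plan is to argue directly from the variational definition of a critical curve of the associated singular system $(M,p_1^*L,F)$, rather than from the Euler--Lagrange equations in coordinates; the linearity of $L$ in the fibre variable together with the constrained nature of the target problem will make the fibre dependence drop out almost automatically. So I would start from the defining identity: for a critical curve $m(t)$ with $n(t)=\pi(m(t))$ and an arbitrary variation $\delta m$ with fixed endpoints,
\[
\delta\int_I L(\dot n(t),m(t))\,dt+\int_I\langle F(\dot m(t)),\delta m(t)\rangle\,dt=0.
\]
Substituting $L=L_0+\langle\alpha,m\rangle$ and $F=T^*\pi\circ\hat F\circ T\pi$, the force pairing collapses to $\langle\hat F(\dot n),\delta n\rangle$ with $\delta n=T\pi(\delta m)$, because $T^*\pi$ annihilates $V\pi$; this is precisely the force term of the system $(N,L_0,\hat F)$ evaluated on the projected variation $\delta n$.

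Next I would extract the constraint by testing against purely vertical variations, i.e. those with $\delta n=0$. For these $\dot n$ is unchanged, so the $L_0$-contribution and the force contribution both vanish, and by linearity of $\langle\alpha,m\rangle$ in $m$ the identity reduces to $\int_I\langle\alpha(\dot n(t)),\delta m(t)\rangle\,dt=0$ for every vertical $\delta m$. The fundamental lemma of the calculus of variations then forces $\alpha(\dot n(t))=0$, that is $\dot n(t)\in C$; this is just the configuration constraint~(\ref{eq:con2}) specialised to the linear case, and it uses that $0$ is a regular value of $\alpha$ so that $C$ is a genuine submanifold of $TN$.

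The heart of the argument, where the Lagrange-multiplier mechanism shows up, is to test against variations lifted from constraint-preserving deformations of the base. Given any deformation $n_\e(t)$ of $n(t)$ with fixed endpoints and with $\dot n_\e(t)\in C$ for all $\e$, I would lift it through the fibration $\pi$ to a deformation $m_\e(t)$ of $m(t)$ with $\pi(m_\e)=n_\e$ and fixed endpoints. The key observation is that $\alpha(\dot n_\e(t))=0$ holds identically along such a deformation, whence $\langle\alpha(\dot n_\e(t)),m_\e(t)\rangle\equiv 0$ and its $\e$-derivative vanishes. Thus the entire $\langle\alpha,m\rangle$ term disappears from the first variation, and the defining identity for $m$ becomes
\[
\delta\int_I L_0(\dot n(t))\,dt+\int_I\langle\hat F(\dot n(t)),\delta n(t)\rangle\,dt=0
\]
for every constrained variation $\delta n$, which is exactly the statement that $n(t)=\pi(m(t))$ is a critical curve of $(N,L_0,\hat F)$ constrained to $C$.

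I expect the one genuinely delicate point to be this lifting step: one must be sure that every admissible (constraint-preserving) variation $\delta n$ on $N$ is realised as $T\pi(\delta m)$ for some variation $\delta m$ of $m$ with fixed endpoints, so that the critical-curve hypothesis on $M$ can legitimately be applied to it. This is where assumption (i), that $\pi$ is a linear fibration, is really used: being a surjective submersion with vector-space fibres, $\pi$ admits lifts of deformations with prescribed endpoints, for instance by transporting $n_\e$ in a local trivialisation while holding the fibre component fixed near the endpoints. Everything else is a routine rearrangement of the first variation.
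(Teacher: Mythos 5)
Your argument is correct and follows essentially the same route as the paper's proof: rewrite the first variation with the force term pushed down to $N$, note that critical curves satisfy $\alpha(\dot n)=0$ and hence project into $C$, and observe that along constraint-preserving deformations the $\langle\alpha,m\rangle$ term contributes nothing, leaving exactly the constrained variational principle for $(N,L_0,\hat F)$. The only difference is that you make explicit two points the paper leaves implicit --- extracting the constraint from purely vertical variations via the fundamental lemma, and checking that every constraint-preserving base variation lifts to a fixed-endpoint variation on $M$ --- both of which are routine given that $\pi$ is a linear fibration.
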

\begin{proof}
The proposition is easily proven by noting that critical curves of
$(\pi:M\to N,L,F)$ satisfy
\[
\delta \int_I L(\dot m(t))dt +\int_I \langle \hat F(\dot
n(t)),\delta n(t)\rangle dt=0
\]
for arbitrary variations with vanishing endpoints in $M$ and
project onto curves in $C$. Among all these variations, we can now
consider variations originating from deformations $m_\e(t)$ of
$m(t)$ that project onto deformations $n_\e(t)$ of $n(t)$ that are
contained in $C$. In that case the first term in the latter
equation is precisely
\[
\left.\frac{d}{d\e}\right|_\e \int_I L(\dot m_\e(t))dt =
\left.\frac{d}{d\e}\right|_\e\int_I L_0(\dot n_\e(t))dt,
\]
and we thus find that $n(t)$ is a critical curve of $(N,L_0,\hat
F)$ constrained to $C$.
\end{proof}

\subsection{Connections}

We now return to the general setting of intrinsically constrained
Lagrangian systems and conclude this section with a proposition
which describes how to split up the Euler-Lagrange equations
making use of a chosen connection on the fibration $\pi:M\to N$. A
connection on $M$ is a direct sum decomposition of $TM$, i.e. a
distribution $H$ on $M$ such that $TM=H\oplus V\pi$, and the
associated map $h:T_MN \to H$ is called the horizontal lift. Using
this connection we are now able to say that a variation $\delta
m(t)$ of $m(t)$ is {\em horizontal} if $\delta m(t) \in H$, or if
equivalently $\delta m(t) = h(m(t),\delta n(t))$, where $\delta
n(t)$ is the corresponding variation of $n(t)=\pi(m(t))$. We
denote the horizontal and vertical components of an arbitrary
tangent vector $w$ by $w^h$ and $w^v$ respectively. The
restriction of a cotangent vector $\alpha$  to the horizontal
distribution is denoted by $\alpha^h$.
\begin{proposition}\label{splittingel}
The Euler-Lagrange operator of an arbitrary Lagrangian system
$(M,L,F)$ on $M$ is completely determined if one only considers
horizontal and vertical variations, i.e. if $m(t)$ is a critical
curve of $(M,L,F)$ then it satisfies:
\begin{eqnarray*}
&&\delta \int_I L(\dot m(t))dt + \int_I \langle F(\dot m(t)),\delta
m(t)\rangle dt= 0 \mbox{ with } \delta m^v(t) =0,\\
&&\delta \int_I L(\dot m(t))dt + \int_I \langle F(\dot m(t)),\delta
m(t)\rangle dt= 0 \mbox{ with } \delta m^h(t) =0.
\end{eqnarray*}
The first and second equations are equivalent to, respectively,
\begin{eqnarray*}
&&{\cal EL}(\ddot m(t))^h +F(\dot m(t))^h=0\in T^*_nN\cong H_m,\\
&&{\cal EL}(\ddot m(t))^v + F(\dot m(t))^v=0 \in V_m^*\pi.
\end{eqnarray*}
\end{proposition}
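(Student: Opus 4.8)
The plan is to reduce the single, unsplit Euler-Lagrange equation to its horizontal and vertical components by testing it against the two complementary classes of variations. As recalled at the start of this section, a curve $m(t)$ is critical for $(M,L,F)$ exactly when the first variation, after integration by parts and use of the fixed-endpoint condition $\delta m(a)=\delta m(b)=0$, can be written as $\int_I\langle{\cal EL}(L)(\ddot m(t))+F(\dot m(t)),\delta m(t)\rangle\,dt=0$; against \emph{all} variations $\delta m$ this is equivalent to the covector identity ${\cal EL}(L)(\ddot m(t))+F(\dot m(t))=0$ in $T^*_{m(t)}M$. The connection supplies the decomposition $T_mM=H_m\oplus V_m\pi$, and a covector in $T^*_mM$ vanishes if and only if both its restriction $\alpha^h$ to $H_m$ and its restriction $\alpha^v$ to $V_m\pi$ vanish. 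Everything therefore comes down to showing that the horizontal and vertical variations test precisely these two restrictions.

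First I would treat the vertical variations, those with $\delta m^h(t)=0$. Since the pairing of a covector with a vertical vector depends only on the vertical restriction of the covector, the first variation becomes $\int_I\langle({\cal EL}(L)(\ddot m)+F(\dot m))^v,\delta m\rangle\,dt$. Every smooth section $\delta m$ of $V\pi$ along $m$ with vanishing endpoints is admissible --- it is the variational derivative $\left.\frac{d}{d\e}\right|_{\e=0}m_\e$ of some deformation $m_\e$ of $m$ --- so the fundamental lemma of the calculus of variations forces $({\cal EL}(L)(\ddot m)+F(\dot m))^v=0$ in $V^*_m\pi$, and conversely. This is the second equivalence.

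For the horizontal variations, those with $\delta m^v(t)=0$, I would use that such a $\delta m$ is necessarily of the form $h(m(t),\delta n(t))$ with $\delta n=T\pi(\delta m)$, the horizontal lift $h$ being a fibrewise isomorphism $T_nN\to H_m$. Now the pairing sees only $\alpha^h$, so the first variation reads $\int_I\langle({\cal EL}(L)(\ddot m)+F(\dot m))^h,\delta m\rangle\,dt$; transporting through $h$ and letting $\delta n$ range over all base variations with fixed endpoints, the fundamental lemma yields $({\cal EL}(L)(\ddot m)+F(\dot m))^h=0$, viewed as an element of $H^*_m\cong T^*_nN$ via the dual of $T\pi|_{H_m}$. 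This is the first equivalence. Finally, because an arbitrary variation splits as $\delta m=\delta m^h+\delta m^v$ and the pairing is linear, the two restricted equations together are equivalent to criticality against all variations; this justifies the opening assertion that the Euler-Lagrange operator is completely determined by horizontal and vertical variations alone.

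The step demanding the most care is the bookkeeping of the connection-induced splitting: one must keep straight that $\alpha^h$ and $\alpha^v$ are the restrictions of $\alpha$ to $H$ and to $V\pi$, so that a horizontal variation returns only $\alpha^h$ and a vertical variation only $\alpha^v$, and that $\alpha=0$ is equivalent to $\alpha^h=0$ together with $\alpha^v=0$; one must also record the identification $H^*_m\cong T^*_nN$ dual to $T\pi|_{H_m}$, under which the horizontal component becomes a genuine covector on $N$. A secondary point to verify is that every horizontal (respectively vertical) variation field with fixed endpoints is realised by an admissible deformation $m_\e$, so that the fundamental lemma may legitimately be applied to the full space of such fields.
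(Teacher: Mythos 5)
Your proof is correct. The paper itself offers no explicit proof of this proposition: it states the result and then immediately records the horizontal and vertical parts of the Euler--Lagrange equations in an adapted chart $(x^i,y^a)$ with connection coefficients $\Gamma^a_i$, which is the implicit, coordinate-based verification. Your argument is the intrinsic version of the same idea: criticality against all variations is equivalent to the covector equation ${\cal EL}(L)(\ddot m)+F(\dot m)=0$ in $T^*_mM$; the splitting $T_mM=H_m\oplus V_m\pi$ dualizes so that a covector vanishes iff its restrictions to $H_m$ and to $V_m\pi$ both vanish; and each class of variations, via the fundamental lemma, tests exactly one restriction. What the coordinate route buys is the explicit formulas (in particular the observation, made in the paper right after the displayed equations, that substituting the vertical equation into the horizontal one eliminates the $\Gamma^a_i$, so the split system is connection-independent); what your route buys is independence from any choice of chart and a cleaner justification of the identification $H_m^*\cong T_n^*N$ via the dual of $T\pi|_{H_m}$, which the paper's statement writes somewhat loosely as $T^*_nN\cong H_m$. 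You are also right to flag the realizability of an arbitrary horizontal or vertical variation field with fixed endpoints by an actual deformation $m_\e$; this is standard (e.g.\ via flows of extensions of the variation field) and is tacitly assumed throughout the paper.
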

Assume an adapted coordinate chart is chosen and that the
horizontal distribution is spanned by the following set of vector
fields
\[
\fpd{}{x^i} - \Gamma^a_i(x,y)\fpd{}{y^a},
\]
with $\Gamma^a_i$ the connection coefficients, then the horizontal
and vertical parts of the Euler-Lagrange equations take the
following form
\begin{eqnarray*}
&&\fpd{L}{x^i} -\frac{d}{dt} \left(\fpd{L}{\dot x^i}\right) +F_i-
\left( \fpd{L}{y^a}+F_a\right)\Gamma^a_i =0,\\
&&\fpd{L}{y^a}+ F_a =0.
\end{eqnarray*}
In fact, by substituting the second in the first equation, we may
conclude that these equations are independent of the connection
coefficients.

\section{Lagrangian systems with symmetry}\label{sec:lagsym}

We consider a Lagrangian system $(Q,L,F)$ on a manifold $Q$ and we
assume that a Lie group $G$ freely and properly acts on $Q$. This ensures that $Q/G$
has a manifold structure and that $\pi :Q\to Q/G$ is a principal fibre bundle. The
(right) action is denoted by $\Psi:G\times Q \to Q: (g,q)\mapsto
\Psi(q,g)=qg$. The associated maps $\sigma_q: G\to Q$ and $R_g:
Q\to Q$ are defined by
\[
qg=\sigma_q(g) = R_g(q).
\]
For the sake of simplicity, the infinitesimal action induced by
$\sigma$ will be also denoted by $\sigma$: $\sigma_q(\xi) =
T\sigma_q(\xi)$, with $\xi\in\mathfrak{g}$ a Lie-algebra element
of $G$. The vector field $\xi_Q$ on $Q$ is defined by $\xi_Q(q) =
\sigma_q(\xi)$. We can consider the lifted action of $G$ on $TQ$,
and using the previous definition on $TQ$, we shall write
$\xi_{TQ}$ to denote the symmetry vector field associated with the
Lie algebra element $\xi\in\lag$. Throughout this section we shall
extensively rely on the choice of a principal connection on $Q$.
This is a $\lag$-valued one-form $\omega$ on $Q$ satisfying
$\omega(\xi_Q)=\xi$ and which is equivariant in the sense that,
for arbitrary $g\in G$,
\[
R^*_g\omega= Ad_{g^{-1}}\cdot \omega,
\]
where $Ad$ denotes the adjoint action on $\lag$. We assume that
the notions of horizontal lift, associated bundles and covariant
derivatives, etc are well-known (see~\cite{koba}). One typically
defines the momentum map associated with the Lagrangian $L$ in the
following way.

\begin{definition}
The momentum map $J_L: TQ \to \lag^*$ is the fibre derivative of
$L$ restricted to vertical directions:
\[
J_L(v_q)(\xi)=\left.\frac{d}{d\e}\right|_{\e=0} L(v_q+\e
\xi_Q(q)).
\]
\end{definition}

It is well-known that if the Lagrangian is invariant under the
(prolonged) action of $G$ on $TQ$, then (i) $J_L$ is an
equivariant function, i.e. $R_g^*J_L = \mbox{Ad}^*_{g}\cdot J_L$
and (ii) $J_L$ is a first integral of the critical curves of the
{\em conservative} Lagrangian system $(Q,L,F=0)$.
\begin{definition}\label{def:invarfor}
A force term $F: TQ\to T^*Q$ is said to be invariant under the
action of $G$ (or simply $G$-invariant) if the following two
conditions are satisfied:
\begin{enumerate}
\item $\langle F(v_qg), w_qg\rangle =\langle F(v_q),w_q\rangle$
\label{disone} \item $\langle F(v_q), \xi_Q(q)\rangle =0$,
\label{distwo}
\end{enumerate}
for $v_q,w_q\in T_qQ$ and $\xi \in \lag$ arbitrary.

A Lagrangian system $(Q,L,F)$ is $G$-invariant if the Lagrangian
$L$ is an invariant function under the action of $G$ and if the
force term $F$ is $G$-invariant.
\end{definition}
A straightforward consequence of the definition of a $G$-invariant
force term $F$ is that $F$ is reducible to a map $f$ between the
quotient spaces $TQ/G$ and $T^*(Q/G)$. Moreover, the second
condition in definition~\ref{def:invarfor} guarantees that the
momentum map $J_L$ is conserved along critical curves of a
$G$-invariant Lagrangian system $(Q, L,F)$ (it is sufficient, but
not necessary). In the following proposition we study the
necessary conditions for a Lagrangian system $(Q,L,F)$ such that
$J_L$ is conserved along its critical curves. This generality will
be of importance further on when considering the Routhian and its
momentum map, both associated with an invariant Lagrangian.
\begin{proposition}\label{prop:cm}
The momentum map for a (not necessarily $G$-invariant) Lagrangian
system $(Q,L,F)$ is conserved if and only if the force term is
such that $\langle dL, \xi_{TQ}\rangle=-\langle F,\xi_Q\rangle$
for all $\xi\in\lag$ along the critical curves. If this holds we
say that the Lagrangian system $(Q,L,F)$ has conserved momentum
$J_L$ w.r.t. the action of $G$.
\end{proposition}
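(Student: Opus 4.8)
The plan is to prove both directions by computing the time-derivative of $J_L$ along a critical curve and relating it to the Euler-Lagrange equations. First I would fix $\xi\in\lag$ and consider the momentum component $J_L(\dot m(t))(\xi)$ along a critical curve $m(t)$ of the system $(Q,L,F)$. By the definition of the momentum map as the vertical fibre derivative, in an adapted or arbitrary coordinate system $(X^i)$ this component reads $J_L(\dot m)(\xi) = (\partial L/\partial\dot X^i)\,\xi_Q^i$, where $\xi_Q = \xi_Q^i\,\partial/\partial X^i$ are the components of the fundamental vector field. The core computation is then to differentiate this expression with respect to $t$ using the product rule.

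The key step is the substitution of the Euler-Lagrange equations. Along a critical curve we have $\frac{d}{dt}(\partial L/\partial\dot X^i) = \partial L/\partial X^i + F_i$ from the statement in Section~\ref{sec:lagsyst}. Substituting this into the differentiated momentum yields
\[
\frac{d}{dt}\,J_L(\dot m)(\xi) = \left(\fpd{L}{X^i}+F_i\right)\xi_Q^i + \fpd{L}{\dot X^i}\,\frac{d\xi_Q^i}{dt}.
\]
I would then recognize the combination $(\partial L/\partial X^i)\,\xi_Q^i + (\partial L/\partial\dot X^i)\,\frac{d}{dt}\xi_Q^i$ as precisely $\langle dL,\xi_{TQ}\rangle$ evaluated along the prolonged curve $\ddot m(t)$, since the prolonged symmetry vector field $\xi_{TQ}$ on $TQ$ has exactly these horizontal and vertical components with respect to the natural coordinates $(X^i,\dot X^i)$. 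Likewise $F_i\,\xi_Q^i = \langle F(\dot m),\xi_Q\rangle$. Hence $\frac{d}{dt}J_L(\dot m)(\xi) = \langle dL,\xi_{TQ}\rangle + \langle F,\xi_Q\rangle$ along the critical curve.

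With this identity in hand both directions follow immediately. The momentum $J_L(\dot m)(\xi)$ is conserved along every critical curve for all $\xi$ if and only if the right-hand side vanishes identically along critical curves, which is exactly the condition $\langle dL,\xi_{TQ}\rangle = -\langle F,\xi_Q\rangle$. I expect the main obstacle to be bookkeeping rather than conceptual: one must verify carefully that the expression $(\partial L/\partial X^i)\xi_Q^i + (\partial L/\partial\dot X^i)\frac{d}{dt}\xi_Q^i$ genuinely equals the intrinsic pairing $\langle dL,\xi_{TQ}\rangle$, i.e.\ that the prolongation of $\xi_Q$ to $TQ$ produces the vertical component $\frac{d}{dt}\xi_Q^i = (\partial\xi_Q^i/\partial X^j)\dot X^j$ so that the chain rule matches the total derivative along the curve. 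This is the standard relation between a lifted vector field and its generating field, but it is where a coordinate-based argument must be stated cleanly to avoid confusing the partial time-dependence of $\xi_Q$ (which is none, as $\xi_Q$ is a fixed vector field on $Q$) with the convective derivative along the curve.
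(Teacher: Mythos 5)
Your proposal is correct and follows essentially the same route as the paper: compute $\frac{d}{dt}\bigl(J_L(\dot q)(\xi)\bigr)$ in coordinates, substitute the Euler-Lagrange equations, and recognize the result as $\xi_{TQ}(L)+\langle F,\xi_Q\rangle$, from which both directions of the equivalence are immediate. The only (cosmetic) difference is that you state the resulting identity explicitly before drawing the ``if and only if'' conclusion, whereas the paper folds the hypothesis into the last line of the computation.
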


\begin{proof}
Assume that $q(t)$ is a critical curve of the Lagrangian system
$(L,F)$. We now compute the derivative of the momentum map along
the trajectory and proof that it is equal to zero. Let $(q^i)$ be
a local coordinate system on $Q$ and let $\xi\in\lag$ be chosen
arbitrarily, then
\begin{eqnarray*}
\left.\frac{d}{dt} \right|_t\left(J_L(\dot q(t))(\xi)\right) &=&
\left. \frac{d}{dt} \right|_t\left.\frac{d}{d\e}
\right|_{\e=0}L(\dot q(t)+\e \xi_Q(q(t)))
 = \left.\frac{d}{dt} \right|_t \left(\fpd{L}{\dot q^i} \xi_Q^i\right)\\
&=& \frac{d}{dt}\left(\fpd{L}{\dot q^i}\right)\xi_Q^i + \fpd{L}{\dot q^i} \fpd{\xi_Q^i}{q^j}\dot q^j\\
&=& \fpd{L}{q^i} \xi_Q^i + F_i\xi_Q^i + \fpd{L}{\dot q^i} \fpd{\xi_Q^i}{q^j}\dot q^j\\
&=& \xi_{TQ}(L)(\dot q(t)) + \langle F(\dot
q(t)),\xi_Q(q(t))\rangle = 0.
\end{eqnarray*}
\end{proof}
Note that, if $L$ is invariant then the momentum map is conserved
along critical curves if $F$ satisfies condition $(2)$ from
definition~\ref{def:invarfor}.

The first step in our approach to Routh reduction of a
$G$-invariant Lagrangian system $(Q,L,F)$ consists of defining an
equivalent Lagrangian system on $Q$ whose Lagrangian is only
invariant under the the action of a subgroup of $G$ although it
has conserved momentum w.r.t. the action of $G$. This observation
should justify the above definitions.

\subsection{General constructions on the quotient spaces}
Assume throughout the remaining of this paper that a {\em regular
value} $\mu\in \lag^*$ for the momentum map of a $G$-invariant
Lagrangian system $(Q,L,F)$ is chosen. We denote the group $G_\mu$
as the isotropy subgroup of $G$ of $\mu$ under the coadjoint
action of $G$ on $\lag^*$.

This section describes the structure of the manifold $TQ/G_\mu$ on
which the Routhian lives and how variations of a curve in $Q$
behave when projected onto the quotient $TQ/G_\mu$. Many of the
properties explained below can be found in more detail
in~\cite{CMR01}. We start with fixing notations and defining
several projections and bundles appearing in the chosen framework.
Recall that the bundle $Q/G_\mu\to Q/G$ is an associated bundle of
the principal fibre bundle $Q\to Q/G$, whose typical fibre is the
orbit ${\cal O}_\mu$ of $\mu $ under the adjoint action on
$\lag^*$.
\begin{definition}\label{def:bundles}
The projections onto the different quotient spaces are denoted by
\begin{eqnarray*}
\pi:Q\to Q/G,& &\pi_\mu:Q\to Q/G_\mu,\\
\overline \pi_{\mu}:Q/G_\mu \to Q/G, & & \tilde \pi:\tilde\lag :=
(Q\times \lag)/G \to Q/G.
\end{eqnarray*}
\end{definition}
A point in $Q$, $Q/G_\mu$ and $Q/G$ is typically denoted by $q$,
$y$ and $x$ respectively. Similarly, tangent vectors in $T_qQ$,
$T_y(Q/G_\mu)$ and $T_x(Q/G)$ are denoted by $v_q$, $v_y$ and
$v_x$ respectively. Further, elements in the bundle $\tilde\lag\to
Q/G$ associated to the Lie-algebra $\lag$ are denoted by
$\tilde\xi=[q,\xi]_G$, with $\xi\in\lag$, $q\in Q$.

If a principal connection $\omega$ on $\pi:Q\to Q/G$ is chosen,
then the horizontal lift of a tangent vector in $T(Q/G)$ to the
point $q\in\pi^{-1}(x)$ is denoted by $(v_x)_q^h$. The horizontal
lift of a curve $x(t)$ from $Q/G$ to $Q$ is denoted by $q_h(t)$

According to our previous notation conventions, an element in the
product bundle $T(Q/G)\times {Q/G_\mu}\times \tilde \lag\to Q/G$
is denoted by $(v_x,y, \tilde\xi)$, where $v_x\in T_x(Q/G), y\in
Q/G_\mu$ and $\tilde\xi\in\tilde\lag$ satisfy
$\overline\pi_\mu(y)=\tilde \pi(\tilde\xi)=x\in Q/G$.
\begin{proposition}
\label{poron} The manifold $TQ/G_\mu$ is isomorphic to
$T(Q/G)\times Q/G_\mu\times \tilde \lag$. The isomorphism can be
made explicit with the choice of a principal connection $\omega$
on $Q$.
\end{proposition}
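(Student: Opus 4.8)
The plan is to use the principal connection $\w$ to split every tangent vector on $Q$ into a horizontal and a vertical piece, and then to track how the two pieces transform under the lifted action before passing to the $G_\mu$-quotient; the three factors on the right-hand side will emerge one from each piece of data retained. Concretely, for $v_q\in T_qQ$ the connection produces the pair $(T\pi(v_q),\w_q(v_q))\in T_{\pi(q)}(Q/G)\times\lag$, and the assignment $v_q\mapsto(q,T\pi(v_q),\w_q(v_q))$ is a fibrewise linear isomorphism identifying $TQ$ with $\pi^*T(Q/G)\oplus(Q\times\lag)$ over $Q$; its inverse is $(q,v_x,\xi)\mapsto (v_x)_q^h+\xi_Q(q)$, the sum of the horizontal lift and the fundamental vector field. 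Since $G$ acts freely and properly on $Q$, the closed subgroup $G_\mu$ acts freely and properly on $TQ$, so $TQ/G_\mu$ is a manifold and it suffices to produce a connection-dependent diffeomorphism onto the fibred product.

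First I would define the candidate map
\[
\Phi([v_q]_{G_\mu})=\big(T\pi(v_q),\,\pi_\mu(q),\,[q,\w_q(v_q)]_G\big)
\]
and check that it lands in the fibred product, since all three entries sit over the common base point $x=\pi(q)\in Q/G$. The crucial point is well-definedness under $G_\mu$. Replacing $v_q$ by $TR_h(v_q)$ with $h\in G_\mu$ leaves $T\pi(v_q)$ unchanged (because $\pi\circ R_h=\pi$) and leaves $\pi_\mu(q)$ unchanged (because $h\in G_\mu$); for the third slot, equivariance of the connection gives $\w_{qh}(TR_h(v_q))=Ad_{h^{-1}}(\w_q(v_q))$, and the defining relation of the associated bundle $[qh,Ad_{h^{-1}}\xi]_G=[q,\xi]_G$ absorbs exactly this twist, so $[qh,\w_{qh}(TR_h(v_q))]_G=[q,\w_q(v_q)]_G$. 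Hence $\Phi$ descends to $TQ/G_\mu$.

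Next I would exhibit the inverse. Given $(v_x,y,\tilde\xi)$ over $x$, choose any $q\in\pi_\mu^{-1}(y)$; then $q$ trivialises the fibre of $\tilde\lag$ over $x$, so there is a unique $\xi\in\lag$ with $\tilde\xi=[q,\xi]_G$, and I set $v_q=(v_x)_q^h+\xi_Q(q)$ and return $[v_q]_{G_\mu}$. Independence of the choice of $q$ is where the computation must be done carefully: replacing $q$ by $qh$ ($h\in G_\mu$) changes $\xi$ to $Ad_{h^{-1}}\xi$, and using equivariance of the horizontal lift together with the identity $TR_h(\xi_Q(q))=(Ad_{h^{-1}}\xi)_Q(qh)$ one finds $v_{qh}=TR_h(v_q)$, so the $G_\mu$-class is unchanged. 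A direct substitution, using $T\pi\circ h=0$ on the vertical part and $\w(\xi_Q)=\xi$, then verifies that this construction is a two-sided inverse of $\Phi$.

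Finally, smoothness of $\Phi$ and of its inverse follows because each is assembled from smooth, $G_\mu$-equivariant data ($T\pi$, $\pi_\mu$, $\w$, the horizontal lift and the fundamental-vector-field map) over the free and proper $G_\mu$-action, so it descends to a smooth map between the quotient manifolds; one may equally verify this in an adapted local trivialisation of $Q\to Q/G$. I expect the genuine obstacle to be bookkeeping rather than conceptual, namely keeping the adjoint twists consistent across the two equivariance checks — the connection law $R_h^*\w=Ad_{h^{-1}}\w$ and the identity $TR_h\circ\xi_Q=(Ad_{h^{-1}}\xi)_Q\circ R_h$ — since these are precisely what make the $\tilde\lag$-factor well defined and the inverse independent of the chosen representative.
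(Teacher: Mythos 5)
Your proof is correct and follows essentially the same route as the paper: you construct exactly the maps $\phi_\omega$ and $\psi_\omega$ that the paper writes down, and you supply the equivariance and well-definedness checks (via $R_h^*\omega=Ad_{h^{-1}}\cdot\omega$ and $[qh,Ad_{h^{-1}}\xi]_G=[q,\xi]_G$) that the paper dismisses as standard. No issues.
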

\begin{proof}
Assume that a connection $\omega$ on $Q$ is chosen. Let $[v_q ]_{G_\mu}\in TQ/G_\mu$ be the orbit of an arbitrary tangent vector $v_q$ in $TQ$. We now define a map
\begin{equation}
\label{poon} \phi_\omega: TQ/G_\mu \to T(Q/G)\times Q/G_\mu \times
\tilde \lag,
\end{equation}
together with its inverse $\psi_\omega$ as
\begin{eqnarray*}
&&\phi_\omega([v_q]_{G_\mu})=
(T\pi(v_q),\pi_\mu(q),[q,\omega(q)(v_q)]_G),\\
&&\psi_\omega(v_x , y,\tilde\xi) =
[(v_x)_q^h+\sigma_q(\xi)]_{G_\mu}, \ \mbox{with } y = [q]_{G_\mu}
\mbox{ and } \tilde\xi=[q,\xi]_G.
\end{eqnarray*}
It is standard to show that these maps are smooth and do not depend on the choice
of an element in the orbits under consideration.
\end{proof}

Using similar arguments as before, one can show that $TQ/G\cong
T(Q/G)\times \tilde\lag$. Taking into account these isomorphisms,
the natural projection $\lambda _\mu \colon TQ/G _\mu \to TQ/G$
simply reads
\[
\lambda_\mu(v_x, y,\tilde \xi)= (v_x,\tilde \xi).
\]
\begin{definition}
An invariant Lagrangian function $L$ on $TQ$ reduces to a function
$l$ on $TQ/G$, defined by
\[
l(v_x,\tilde\xi)=L((v_x)^h_q+\sigma_q(\xi)), \mbox{ with }
q\in\pi^{-1}(x),[q,\xi]_G=\tilde\xi.
\]
The function $\lambda^*_\mu l$, defined on $TQ/G_\mu$ is denoted
by the same symbol.

The associated momentum map $J_L:TQ \to \lag^*$, reduces to a
bundle  map $j_l: TQ/G \to \tilde\lag^*$, fibred over the identity
on $Q/G$ and defined by
\[
\langle j_l(v_x, \tilde\xi),\tilde \eta\rangle =\langle
J_L((v_x)^h_q+\sigma_q(\xi)), \eta\rangle, \mbox{ with }
q\in\pi^{-1}(x),[q,\xi]_G=\tilde\xi, [q,\eta]_G=\tilde \eta.
\]
\end{definition}

The reduced map $j_l$ can alternatively be defined from $l$ directly. Let
$\mathbb{F}_{\tilde{\xi}}l \colon
T(Q/G)\times\tilde{\mathfrak{g}}\to T(Q/G)\times
\tilde{\mathfrak{g}}^*$ be the fiber derivative of $l$ with
respect to the (linear) factor $\tilde{\mathfrak{g}}$.
\begin{proposition}
The maps $j_l$ and $l$ are related by:
\[
\langle \F_{{\tilde\xi}} l(v_x, \tilde \xi),(v_x,\tilde \eta)
\rangle = \langle j_l(v_x,\tilde \xi),\tilde \eta\rangle.
\]
\end{proposition}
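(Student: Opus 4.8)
The plan is to prove this identity by directly unwinding the four definitions involved: the fibre derivative $\F_{\tilde\xi}l$, the reduced Lagrangian $l$, the momentum map $J_L$, and the reduced map $j_l$. The whole argument is a short chain-rule computation; the only point requiring care is the compatibility between the linear structure on the associated bundle $\tilde\lag$ and the linearity of the infinitesimal action $\sigma_q$.

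First I would fix $x\in Q/G$ together with a representative $q\in\pi^{-1}(x)$, and write the two fibre elements using the \emph{same} $q$, namely $\tilde\xi=[q,\xi]_G$ and $\tilde\eta=[q,\eta]_G$. This is legitimate because, for fixed $q$, the map $\lag\to\tilde\lag_x;\ \xi\mapsto[q,\xi]_G$ is a linear isomorphism, so the bundle addition reads $\tilde\xi+\e\tilde\eta=[q,\xi+\e\eta]_G$. By the definition of the fibre derivative with respect to the linear factor $\tilde\lag$, the left-hand side is
\[
\langle \F_{\tilde\xi} l(v_x, \tilde\xi),(v_x,\tilde \eta) \rangle
= \left.\frac{d}{d\e}\right|_{\e=0} l\big(v_x,\tilde\xi+\e\tilde\eta\big)
= \left.\frac{d}{d\e}\right|_{\e=0} L\big((v_x)^h_q+\sigma_q(\xi+\e\eta)\big),
\]
where the last equality is just the definition of the reduced Lagrangian $l$ evaluated at the representative $q$.

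Next I would exploit that $\sigma_q\colon\lag\to T_qQ$ is linear, so $\sigma_q(\xi+\e\eta)=\sigma_q(\xi)+\e\,\sigma_q(\eta)=\sigma_q(\xi)+\e\,\eta_Q(q)$. Setting $w_q=(v_x)^h_q+\sigma_q(\xi)\in T_qQ$, the derivative becomes
\[
\left.\frac{d}{d\e}\right|_{\e=0} L\big(w_q+\e\,\eta_Q(q)\big) = J_L(w_q)(\eta),
\]
which is precisely the defining formula of the momentum map $J_L$. Finally, since $w_q=(v_x)^h_q+\sigma_q(\xi)$ with $[q,\xi]_G=\tilde\xi$ and $[q,\eta]_G=\tilde\eta$, the defining formula of $j_l$ gives $J_L(w_q)(\eta)=\langle j_l(v_x,\tilde\xi),\tilde\eta\rangle$, which is the right-hand side.

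I do not expect a genuine obstacle here: the statement is essentially a bookkeeping identity between two equivalent ways of extracting the vertical fibre derivative of $L$. The one subtlety worth flagging is well-definedness, i.e. independence of the chosen representative $q$; but this is already guaranteed because $l$ and $j_l$ were each shown to be well defined on the quotient, so the computation above simply verifies the identity for an arbitrary admissible $q$ and no separate descent argument is needed.
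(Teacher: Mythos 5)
Your proof is correct and follows essentially the same route as the paper: fix a common representative $q$, unwind the definitions of $l$ and $\F_{\tilde\xi}l$, use linearity of $\sigma_q$ to recognize the defining formula of $J_L$, and conclude via the definition of $j_l$. The only difference is that you spell out the linearity and well-definedness points explicitly, which the paper leaves implicit.
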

\begin{proof}Let $(v_x,\tilde\xi), (v_x,\tilde\eta) \in T(Q/G)
\times\tilde\lag$ be arbitrary. Fix a point $q\in Q$ projecting to
$x$ such that $[q,\xi]_G=\tilde \xi$ and $[q,\eta]_G=\tilde \eta$;
then
\begin{eqnarray*}
\langle \F_{{\tilde\xi}} l(v_x,\tilde \xi),(v_x,\tilde \eta)
\rangle &:=& \left.\frac{d}{d\e}\right|_{\epsilon =0} l(v_x, \tilde
\xi +\e\tilde\eta)=\left.\frac{d}{d\e}\right|_{\epsilon =0}
L((v_x)^h_q+\sigma_q(\xi+\e\eta))\\
&=& \langle J_L((v_x)^h_q+\sigma_q(\xi)),\eta \rangle = \langle
j_l(v_x,\tilde\xi),\tilde \eta\rangle.
\end{eqnarray*}
\end{proof}
The element $\mu \in \lag^*$ defines a map $\tilde \mu:Q/G_\mu\to
\tilde \lag^*$ such that
 \begin{equation}
 \label{chuchu}
 \langle\tilde \mu(y),\tilde \xi\rangle= \langle\mu,\xi\rangle,
 \mbox{ with } [q]_{G_\mu}=y \mbox{ and } [q,\xi]_G = \tilde \xi.
 \end{equation}
Due to the equivariance of $J_L$, the level set $J^{-1}_L(\mu)$ is
$G_\mu$ invariant and reduces to a subset of $TQ/G_\mu$.
\begin{proposition}
The quotient manifold $J_L^{-1}(\mu)/G_\mu$ is identified with the subset of points $(v_x,y,\tilde\xi)$ in $TQ/G_\mu$
satisfying $j_l(v_x,\tilde\xi) = \tilde\mu(y)$.
\end{proposition}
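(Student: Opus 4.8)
The plan is to use the connection-dependent isomorphism $\phi_\omega$ from Proposition~\ref{poron} to transport the subset $J_L^{-1}(\mu)/G_\mu\subset TQ/G_\mu$ into the product $T(Q/G)\times Q/G_\mu\times\tilde\lag$, and then to read off the defining equation purely in terms of $j_l$ and $\tilde\mu$. First I would fix a principal connection $\omega$ and take an arbitrary $v_q\in T_qQ$; using $\omega$ we decompose $v_q=(v_x)^h_q+\sigma_q(\xi)$ with $v_x=T\pi(v_q)$ and $\xi=\omega(q)(v_q)$, so that $\phi_\omega([v_q]_{G_\mu})=(v_x,y,\tilde\xi)$ with $y=[q]_{G_\mu}$ and $\tilde\xi=[q,\xi]_G$. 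Since $J_L$ is equivariant, the level set $J_L^{-1}(\mu)$ is $G_\mu$-invariant (as noted just above the statement), so the orbit $[v_q]_{G_\mu}$ lies in $J_L^{-1}(\mu)/G_\mu$ if and only if one (equivalently any) representative $v_q$ satisfies $J_L(v_q)=\mu$.

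The core of the argument is then to rewrite the condition $J_L(v_q)=\mu$ fibrewise. I would pair both sides against an arbitrary $\eta\in\lag$ and set $\tilde\eta=[q,\eta]_G$. On the one hand, the definition of the reduced momentum map gives $\langle J_L(v_q),\eta\rangle=\langle J_L((v_x)^h_q+\sigma_q(\xi)),\eta\rangle=\langle j_l(v_x,\tilde\xi),\tilde\eta\rangle$. On the other hand, the definition~(\ref{chuchu}) of $\tilde\mu$ gives $\langle\mu,\eta\rangle=\langle\tilde\mu(y),\tilde\eta\rangle$. Hence $J_L(v_q)=\mu$ is equivalent to $\langle j_l(v_x,\tilde\xi),\tilde\eta\rangle=\langle\tilde\mu(y),\tilde\eta\rangle$ holding for every $\eta\in\lag$.

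To close the argument I would observe that, for the fixed $q$ projecting to $x$, the assignment $\eta\mapsto\tilde\eta=[q,\eta]_G$ is a linear isomorphism of $\lag$ onto the fibre $\tilde\lag_x$, namely the fibre chart of the associated bundle determined by $q$. Consequently the displayed equality for all $\eta$ is equivalent to the single equation $j_l(v_x,\tilde\xi)=\tilde\mu(y)$ in $\tilde\lag_x^*$. Combined with the first paragraph, this shows that $\phi_\omega$ restricts to a bijection between $J_L^{-1}(\mu)/G_\mu$ and the subset $\{(v_x,y,\tilde\xi): j_l(v_x,\tilde\xi)=\tilde\mu(y)\}$, which is the claim.

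The one point requiring care — and essentially the only obstacle — is that the three constructions entering the argument ($\phi_\omega$, $j_l$ and $\tilde\mu$) are all evaluated using a \emph{single} representative $q\in\pi^{-1}(x)$ together with a \emph{single} $\xi$ satisfying $[q,\xi]_G=\tilde\xi$. This is legitimate precisely because each of these maps has already been shown to be independent of such choices: $\phi_\omega$ by Proposition~\ref{poron}, $\tilde\mu$ by~(\ref{chuchu}), and $j_l$ by the equivariance of $J_L$ together with the equivariance of $\omega$. Once the consistency of representatives is granted, the identification reduces to the purely algebraic equivalence of the middle step, and no further computation is needed.
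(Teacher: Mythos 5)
Your argument is correct: it is exactly the unwinding of the definitions of $\phi_\omega$, $j_l$ and $\tilde\mu$ that the paper intends, the paper itself stating this proposition without proof as an immediate consequence of those constructions. The only point worth adding is that you have, in effect, supplied the omitted verification, including the representative-independence checks, and nothing in your write-up deviates from the paper's setup.
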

In classical Routhian reduction, an essential condition for
reducibility of an invariant Lagrangian system is that this
submanifold $J^{-1}_L(\mu)/G_\mu$ is precisely $T(Q/G)\times
Q/G_\mu$. This situation is the subject of the next session.

\subsection{The regular case}\label{suy}

\begin{definition}
An invariant Lagrangian $L$ is said to be $G$-regular if the map
$\F_{\tilde\xi} l$ is a diffeomorphism between
$T(Q/G)\times\tilde\lag$ and $T(Q/G)\times\tilde\lag^*$.
\end{definition}

We shall write the inverse of this map as $\kappa_l:
T(Q/G)\times\tilde\lag^*\to T(Q/G)\times\tilde\lag$.

\begin{proposition}
\label{porron} If $L$ is $G$-invariant and $G$-regular then the
mapping
\begin{eqnarray*}
J^{-1}_L(\mu ) / G_\mu & \to & T(Q/G) \times Q/ G_\mu\\
{[ v_q ]_{G_\mu}} & \mapsto & (T\pi (v_q), [q]_{G_\mu})
\end{eqnarray*}
is a diffeomorphism.
\end{proposition}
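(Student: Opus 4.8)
The plan is to transport everything to the product bundle $T(Q/G)\times Q/G_\mu\times\tilde\lag$ by means of the connection-dependent isomorphism $\phi_\omega$ of Proposition~\ref{poron}, and then to solve the defining equation of $J_L^{-1}(\mu)/G_\mu$ explicitly using the $G$-regularity hypothesis. First I would invoke the identification of $J_L^{-1}(\mu)/G_\mu$ established just above: $\phi_\omega$ maps it bijectively onto the subset
\[
S=\{(v_x,y,\tilde\xi)\in T(Q/G)\times Q/G_\mu\times\tilde\lag : j_l(v_x,\tilde\xi)=\tilde\mu(y)\}.
\]
Since $\phi_\omega([v_q]_{G_\mu})=(T\pi(v_q),[q]_{G_\mu},[q,\omega(q)(v_q)]_G)$, the map under consideration, $[v_q]_{G_\mu}\mapsto(T\pi(v_q),[q]_{G_\mu})$, reads in these coordinates simply as the restriction to $S$ of the factor projection $p:(v_x,y,\tilde\xi)\mapsto(v_x,y)$. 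It therefore suffices to prove that $p|_S\colon S\to T(Q/G)\times Q/G_\mu$ is a diffeomorphism.

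The key observation is that $\F_{\tilde\xi}l$, being the fibre derivative of $l$ in the $\tilde\lag$-direction alone, fixes the $T(Q/G)$-component; combined with the identity $\langle\F_{\tilde\xi}l(v_x,\tilde\xi),(v_x,\tilde\eta)\rangle=\langle j_l(v_x,\tilde\xi),\tilde\eta\rangle$ proved above, this gives
\[
\F_{\tilde\xi}l(v_x,\tilde\xi)=(v_x,j_l(v_x,\tilde\xi)).
\]
Hence $G$-regularity --- the statement that $\F_{\tilde\xi}l$ is a diffeomorphism with inverse $\kappa_l$ --- is equivalent to saying that for each fixed $v_x$ the map $\tilde\xi\mapsto j_l(v_x,\tilde\xi)$ is a diffeomorphism of the fibre $\tilde\lag_x$ onto $\tilde\lag_x^*$. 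Consequently, given $(v_x,y)$ with $\overline\pi_\mu(y)=x$, the element $\tilde\mu(y)\in\tilde\lag_x^*$ is the image of a unique $\tilde\xi\in\tilde\lag_x$, so that the defining equation $j_l(v_x,\tilde\xi)=\tilde\mu(y)$ of $S$ admits exactly one solution. This shows $p|_S$ is a bijection, with inverse
\[
(v_x,y)\mapsto\big(v_x,\,y,\,\mathrm{pr}_{\tilde\lag}\,\kappa_l(v_x,\tilde\mu(y))\big),
\]
where $\mathrm{pr}_{\tilde\lag}$ denotes projection onto the $\tilde\lag$-factor.

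It then remains only to verify smoothness of this inverse, which is where the regularity hypothesis is used in an essential way: the inverse is built from the smooth maps $\tilde\mu$, $\kappa_l$, and the factor projection $\mathrm{pr}_{\tilde\lag}$, hence is smooth, and $p|_S$ is a diffeomorphism onto $T(Q/G)\times Q/G_\mu$. I expect the main subtlety to lie not in these formal manipulations but in the bookkeeping of base points: one must check that $\tilde\mu(y)$ and the fibre map $j_l(v_x,\cdot)$ both live over the same point $x=\overline\pi_\mu(y)$ of $Q/G$, so that the fibrewise inversion of $\F_{\tilde\xi}l$ is legitimate, and that the resulting diffeomorphism is independent of the auxiliary connection $\omega$ --- which it manifestly is, since the original assignment $[v_q]_{G_\mu}\mapsto(T\pi(v_q),[q]_{G_\mu})$ makes no reference to $\omega$.
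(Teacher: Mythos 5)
Your proposal is correct and rests on the same key mechanism as the paper's own proof: $G$-regularity lets you solve $j_l(v_x,\tilde\xi)=\tilde\mu(y)$ uniquely and smoothly for $\tilde\xi$ via $\kappa_l$, exhibiting $J_L^{-1}(\mu)/G_\mu$ as the graph of $(v_x,y)\mapsto \kappa_l(v_x,\tilde\mu(y))$ over $T(Q/G)\times Q/G_\mu$. The only difference is presentational: the paper constructs the inverse upstairs in $TQ$ as $v_q=(v_x)^h_q+\xi_Q(q)$ and checks $J_L(v_q)=\mu$ and independence of the representative $q$ by hand, whereas you stay in the quotient and let the previously established identification of $J_L^{-1}(\mu)/G_\mu$ with the level set of $j_l-\tilde\mu$ absorb those verifications --- which is in fact exactly the reformulation the authors record in the Remark following the proposition.
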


\begin{proof}
We provide the inverse of the mapping of the statement. Given a
point $(v_x , y) \in T(Q/G)\times Q/ G_{\mu}$, and let
$\tilde\xi\in\tilde\lag$ be such that
\[
(v_x,\tilde\xi) = \kappa_l(v_x,\tilde\mu(y)).
\]
Fix a representant $q$ for $y$, i.e. $y=[q]_{G_\mu}$, and let
$\xi$ be such that $\tilde\xi=[q,\xi]_G$. Then the point $v_q=((v_x)^h
_q + \xi_Q (q))\in TQ$ belongs to the set $J^{-1}_L(\mu)$. Indeed,
with $\tilde\eta=[q,\eta]_G$ arbitrary,
\[
J_L((v_x)^h _q + \sigma_q(\xi))(\eta)= \langle
j_l(v_x,\tilde\xi),\tilde\eta\rangle=\langle
\tilde\mu(y),\tilde\eta\rangle = \langle \mu ,\eta \rangle.
\]
It only remains to show that this definition is independent of the
chosen representative $q$ at $y$. Due to the equivariance of $J_L$
this is standard. Finally, with $(v_x,y)$ we can now define an element in $J_L^{-1}(\mu)/G_\mu$ determined as the orbit of $v_q$. The mapping from $T(Q/G)\times Q/ G_{\mu}$ to $J^{-1}_L(\mu)$ obtained in this way is the inverse of the map defined in the statement of the proposition.
\end{proof}

\begin{remark}
Note that the mapping given in Proposition \ref{porron} is
precisely the first two components of the diffeomorphism
(\ref{poon}) in Proposition \ref{poron}. In other words, the
$G$-regularity shows that the submanifold $J^{-1}_L(\mu)/G_{\mu}$
in $T(Q/G)\times Q/G_{\mu} \times \tilde\lag$ can be given as the
set of points $(v_x , y , \tilde{\xi})$ with $(v_x,y)\in
T(Q/G)\times Q/G_\mu$ arbitrary and with $\tilde\xi$ such that
$(v_x,\tilde\xi)=\kappa_l(v_x,\tilde\mu(y))$.

With this in mind, it follows that the converse of Proposition~\ref{porron} is also valid: if for any $\mu\in\lag^*$ the map $J_L^{-1}(\mu)/G_\mu \to T(Q/G)\times Q/G_\mu$ is a diffeomorphism, then $L$ is $G$-regular. This shows that our definition of $G$-regular Lagrangians is independent of the connection $\omega$.
\end{remark}

\begin{remark}
Let $\rho$ be a $G$-invariant Riemannian metric on $Q$ and $V \in
C^\infty (Q)$ a $G$-invariant potential energy. The Lagrangian $L(v_q)=\frac12 \rho(v_q , v_q)-
V(q)$ (kinetic minus potential) is $G$-regular (see, for
example, \cite{marsdenrouth}).

Note however that a Lagrangian can be both hyperregular (i.e., its
Legendre transformation is a diffeomorphism) and $G$-invariant but
not $G$-regular. Recall the example from the introduction $L=
(\dot q ^1)^2 + \dot q^1 \dot q ^2 - V(q^1)$ with $V\in C^\infty
(\mathbb{R})$, $Q=\mathbb{R}^2$ and $G=\mathbb{R}$ acting by
translations in the $q^2$ coordinate. Fix a connection, say $\omega=dq^2$, and let $\tilde\xi=(q^1,\xi), \tilde\eta=(q^1,\eta)$ be arbitrary in $\tilde\lag=Q/G\times\R$. Then the reduced Lagrangian $l:T(Q/G)\times \tilde{\mathfrak{g}}\to \mathbb{R}$ is $l(q^1,\dot{q}^1,\xi)
=(\dot{q}^1)^2+\dot{q}^1\xi-V(q^1)$ and $\langle \mathbb{F}l_{\tilde\xi} (q^1,\dot{q}^1,\xi),(q^1,\dot q^1,\eta)\rangle=
\dot{q}^1\eta$. Thus $\F l_{\tilde\xi}$ is not a diffeomorphism.
\end{remark}

It is standard in Routh reduction to introduce the following
definition, although due to the fact that $L$ may not be of type
`kinetic minus potential' the definition is slightly more general
than the usual definition of the {\em locked inertia tensor}.

\begin{definition} Given a Lagrangian $L:TQ \to \mathbb{R}$ and
a point $v_q\in TQ$, the mapping $\mathcal{I}_{v_q}: \mathfrak{g}
\to  \mathfrak{g}^*$ defined by
\begin{eqnarray*}
\mathcal{I}_{v_q}(\xi)(\eta) &=& \left. \frac{d}{d\epsilon}\right|
_{\epsilon =0} \left. \frac{d}{d\tau} \right|_{\tau =0} L(v_q +
\epsilon \xi _Q (q) + \tau \eta _Q (q))\\&=& \left.
\frac{d}{d\tau}\right| _{\tau =0}J_L (v_q + \tau \eta _ Q (q))(
\xi),
\end{eqnarray*}
for $\xi, \eta \in \mathfrak{g}$ arbitrary, is called the
\emph{locked intertia tensor} at $v_q$.
\end{definition}

We say that $L$ is {\em locally}  $G$-regular if the
locked-inertial tensor $\mathcal{I}_{v_q}$ is invertible for any
$v_q \in TQ$. Local $G$-regular Lagrangians only guarantee that
$J_L^{-1}(\mu)$ is locally diffeomorphic to $T(Q/G)\times
Q/G_\mu$. In any case, local $G$-regularity presents the advantage
of being a property easy to check (besides the applications of the
locked inertia tensor in other situations as stability, cf.
\cite{lew}). We refer to~\cite{mestcram} for a description of
Routh reduction for locally $G$-regular Lagrangians.

\subsection{The reduced form induced by the connection}

We now return to the general case and continue defining additional
objects living on the quotient spaces under consideration. These objects
will appear as force terms when studying Routhian reduced systems.
Recall that we fixed a principal connection $\omega$ on $Q$. The one-form on $Q$ obtained by pairing $\mu$ with the $\lag$-valued one-form $\omega$ is denoted by $\omega^\mu=\langle
\mu,\omega\rangle$. The Lie-algebra $\lag _\mu$ of the isotropy subgroup $G_\mu$ is a
subalgebra of $\lag$ which can alternatively be determined as the
set of elements $\xi\in\lag$ satisfying $\mathrm{ad}^*_\xi\mu=0$.
\begin{proposition}
\label{prop7}
 The two-form $d\omega^\mu = \langle \mu,d\omega\rangle$ on
 $Q$ is projectable to a two-form on $Q/G_\mu$, denoted by $\beta^\mu$.\\
 The invariant force form $F:TQ\to T^*Q$ reduces to a map: $f:TQ/G \to T^*(Q/G)$.
\end{proposition}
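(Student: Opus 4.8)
The plan is to treat the two assertions separately; both are instances of the general principle that an object on $Q$ descends to a quotient precisely when it is invariant under the relevant subgroup and horizontal with respect to the quotient projection.

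\emph{Projectability of $d\omega^\mu$.} First I would invoke the standard criterion: a differential form on $Q$ equals $\pi_\mu^*$ of a (necessarily unique) form on $Q/G_\mu$ if and only if it is $G_\mu$-invariant and $\pi_\mu$-horizontal, the latter meaning $i_{\xi_Q}(\cdot)=0$ for every $\xi\in\lag_\mu$ (the vertical vectors of $\pi_\mu$ are exactly the $\xi_Q$ with $\xi\in\lag_\mu$, as $G_\mu$ acts freely). For invariance, the equivariance $R_g^*\omega=\mathrm{Ad}_{g^{-1}}\cdot\omega$ gives, for $g\in G_\mu$,
\[
R_g^*\omega^\mu=\langle\mu,R_g^*\omega\rangle=\langle\mu,\mathrm{Ad}_{g^{-1}}\cdot\omega\rangle=\langle\mathrm{Ad}^*_{g^{-1}}\mu,\omega\rangle=\langle\mu,\omega\rangle=\omega^\mu,
\]
the last step being the defining property of $G_\mu$; since $d$ commutes with pull-back, $R_g^*d\omega^\mu=d\omega^\mu$ for all $g\in G_\mu$. (Note that $\omega^\mu$ is \emph{not} $G$-invariant in general, which is exactly why the target is $Q/G_\mu$ and not $Q/G$.) For horizontality I would use Cartan's formula $\mathcal{L}_{\xi_Q}\omega^\mu=i_{\xi_Q}d\omega^\mu+d\,i_{\xi_Q}\omega^\mu$. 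Here $i_{\xi_Q}\omega^\mu=\langle\mu,\omega(\xi_Q)\rangle=\langle\mu,\xi\rangle$ is constant on $Q$, so $d\,i_{\xi_Q}\omega^\mu=0$; and differentiating the invariance computation at $g=\exp(t\xi)$ yields $\mathcal{L}_{\xi_Q}\omega^\mu=-\langle\mathrm{ad}^*_\xi\mu,\omega\rangle$, which vanishes for $\xi\in\lag_\mu$ by the characterisation $\mathrm{ad}^*_\xi\mu=0$. Hence $i_{\xi_Q}d\omega^\mu=0$, and the two properties produce the unique $\beta^\mu$ on $Q/G_\mu$ with $\pi_\mu^*\beta^\mu=d\omega^\mu$. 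It is automatically closed, since $\pi_\mu^*d\beta^\mu=d\,d\omega^\mu=0$ and $\pi_\mu^*$ is injective on forms; this is worth recording, as the gyroscopic formalism of \S\ref{sec:velcon} requires $\beta^\mu$ closed.

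\emph{Reduction of the force term.} For the second assertion I would unwind the two defining conditions of a $G$-invariant force term. Condition $(2)$ of Definition~\ref{def:invarfor}, $\langle F(v_q),\xi_Q(q)\rangle=0$, says $F(v_q)\in T_q^*Q$ annihilates the vertical subspace $V_q\pi$, hence factors uniquely through $T_q\pi$: there is a covector $\bar F(v_q)\in T_{\pi(q)}^*(Q/G)$ with $F(v_q)=(T_q\pi)^*\bar F(v_q)$. Condition $(1)$, $\langle F(v_qg),w_qg\rangle=\langle F(v_q),w_q\rangle$, then shows $\bar F$ is constant along $G$-orbits in $TQ$: given $w_x\in T_x(Q/G)$ and a lift $w_q$ of it at $q$, the vector $w_qg$ is a lift of $w_x$ at $qg$ (since $\pi\circ R_g=\pi$), so $\langle\bar F(v_qg),w_x\rangle=\langle F(v_qg),w_qg\rangle=\langle F(v_q),w_q\rangle=\langle\bar F(v_q),w_x\rangle$. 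Thus $\bar F$ descends to a well-defined $f:TQ/G\to T^*(Q/G)$, $f([v_q]_G)=\bar F(v_q)$, fibred over the identity on $Q/G$; smoothness is inherited from $F$ because the quotient projections are surjective submersions (the action being free and proper).

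The computations are routine; the only genuine subtlety is tracking which invariance is available at each step. In the first part $\omega^\mu$ is merely $G_\mu$-invariant, so one must resist projecting all the way to $Q/G$, and the isotropy relations $\mathrm{Ad}^*_{g^{-1}}\mu=\mu$ (group level) and $\mathrm{ad}^*_\xi\mu=0$ (algebra level) are used essentially to secure invariance and horizontality respectively. In the second part the clean split of the two invariance conditions---condition $(2)$ giving the fibrewise descent to $T^*(Q/G)$ and condition $(1)$ giving descent across orbits---is exactly what makes the argument close.
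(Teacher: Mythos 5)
Your proposal is correct and follows essentially the same route as the paper: $G_\mu$-invariance of $\omega^\mu$ from the equivariance of $\omega$ together with $\mathrm{Ad}^*$-invariance of $\mu$ under $G_\mu$, horizontality via Cartan's formula using that $i_{\xi_Q}\omega^\mu=\langle\mu,\xi\rangle$ is constant and $\langle\mathrm{ad}^*_\xi\mu,\omega\rangle=0$ for $\xi\in\lag_\mu$, and the descent of $F$ by unwinding the two conditions of Definition~\ref{def:invarfor}. You merely spell out the invariance computation and the force-term reduction that the paper leaves as ``easily seen'' and ``straightforward''; the added remark that $\beta^\mu$ is closed is a useful bonus but not a departure.
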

\begin{proof} It is easily seen that $d\omega^\mu$ is invariant under $G_\mu$ and
annihilates vertical tangent vectors in $Q\to Q/G_\mu$. The latter
follows from the equivariance of $\omega$: let $\xi$ be an
arbitrary element in $\lag_\mu$, then
\begin{eqnarray*}
i_{\xi_Q} d\omega^\mu &=& \langle \mu , {\cal L}_{\xi_Q}\omega
\rangle - \langle \mu , di_{\xi_Q}\omega\rangle\\
&=& \langle \mu, -\mathrm{ad}_\xi \omega\rangle = -  \langle
\mathrm{ad} ^* _\xi \mu, \omega\rangle = 0\ .
\end{eqnarray*}
The second part of the proposition is a straightforward
consequence of Definition \ref{def:invarfor} of an invariant force
term.
\end{proof}
Next we study the bundle $\overline\pi_\mu:Q/G_\mu\to Q/G$ and in
particular the set of tangent vectors vertical to $\overline
\pi_\mu$.

First note that both $\pi:Q\to Q/G$ and $\pi_\mu: Q\to Q/G_\mu$
are principal fibres bundles (from now on PFB), with structure
group $G$ and $G_\mu$ respectively. The adjoint bundle
$\tilde\lag$ is the bundle associated to $\lag$ w.r.t the PFB
$\pi$. Similarly we can consider the adjoint bundle
$\tilde\lag_\mu$ associated to $\lag_\mu$ w.r.t. the PFB
$\pi_\mu$. It is not hard to see that $\tilde\lag_\mu$ is a
subbundle of the fibred product $Q/G_\mu\times\tilde\lag$: let
$[q,\xi]_{G_\mu}\in\tilde\lag_\mu$ be arbitrary, then we define an element in $Q/G_\mu\times \tilde\lag$ by fixing a representative $(q,\xi)\in Q\times \lag_\mu$ and we let $(y=\pi_\mu(q)=y,\tilde\xi=[q,\xi]_G)$ determine the corresponding element in $Q/G_\mu\times\tilde\lag$.  The Lie algebra bracket $[\cdot ,\cdot
]$ on $\lag$ carries over to a Lie algebra structure on the fibres
of $\tilde\lag$. The bracket is denoted with the same symbol:
$[\tilde\xi,\tilde\eta]= [q, [\xi,\eta]]_{G}$, with
$[q,\xi]_G=\tilde\xi$ and $[q,\eta]_G=\tilde\eta$ arbitrary. For
notational convenience we introduce the following shorthand
notations: $\overline\pi_\mu^*\tilde\lag =
Q/G_\mu\times\tilde\lag$, and the dual bundle by
$\overline\pi_\mu^*\tilde\lag^* = Q/G_\mu\times\tilde\lag^*$.
Since $\tilde\lag_\mu$ is a subbundle of
$\overline\pi_\mu^*\tilde\lag$ we may consider the quotient
bundle: $\overline\pi_\mu^*\tilde\lag/\tilde\lag_\mu$. Finally,
note that $\tilde\mu:Q/G_\mu \to \tilde\lag^*$ can be seen as
section of $Q/G_\mu\times\tilde\lag^* \to Q/G_\mu$.
\begin{definition}
Associated with the map $\tilde\mu:Q/G_\mu \to \tilde\lag^*$ we
define a section $\mathrm{ad}^*\tilde\mu$ of
$\bigwedge^2\overline\pi_\mu^*\tilde\lag^*\to Q/G_\mu$ in the
following way
\[
\mathrm{ad}^*\tilde\mu(y)((y,\tilde \xi),(y,\tilde \eta)) =
\langle \tilde \mu(y),[\tilde \xi,\tilde\eta]\rangle,
\]
where $\tilde\xi,\tilde\eta\in \tilde\lag _x, \pi_\mu (y)=x$
arbitrary (recall that $\overline\pi_\mu^*\tilde\lag=Q/G_\mu\times
\tilde \lag$).
\end{definition}
We will use the following shorthand notations:
$\mbox{ad}^*\tilde\mu(y)(\tilde \xi,\tilde \eta)$ or
$\mathrm{ad}^*_{\tilde \xi}\tilde\mu(y)=
i_{\tilde\xi}(\mathrm{ad}^*\tilde\mu(y))$. The two-form
$\mathrm{ad}^*\tilde\mu$ is clearly antisymmetric and its kernel
consists precisely of elements in $\tilde\lag_\mu$. Therefore,
without introducing a new symbol, we can regard
$\mbox{ad}^*\tilde\mu$ as a section of
$\bigwedge^2(\overline\pi_\mu^*\tilde\lag/\tilde \lag_\mu)^*\to
Q/G_\mu$ which is nondegenerate as a map
$\overline\pi_\mu^*\tilde\lag/\tilde \lag_\mu \to
(\overline\pi_\mu^*\tilde\lag/\tilde \lag_\mu)^*$. Recall that by
choosing a point $q\in\pi_\mu^{-1}(y)$, the fibre of the subbundle
$\tilde{\lag}_\mu$ at the point $y$ is isomorphic to $\lag_\mu$,
the Lie algebra of the isotropy subgroup $G_\mu$. This observation
brings us to the next proposition.

\begin{proposition}
\label{repto} The bundle $V\overline\pi_\mu\to Q/G_\mu$ of
vertical tangent vectors to the fibration
$\overline\pi_\mu:Q/G_\mu\to Q/G$ is isomorphic to
$\overline\pi_\mu^*\tilde\lag/\tilde \lag_\mu \to Q/G_\mu$.
\end{proposition}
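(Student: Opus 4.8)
The plan is to build the isomorphism fibrewise out of the infinitesimal action $\sigma_q$ and the projection $T\pi_\mu$, and then to verify that the resulting bundle map is well defined and fibrewise bijective. First I would fix a point $y\in Q/G_\mu$, a representative $q\in\pi_\mu^{-1}(y)$, and set $x=\overline\pi_\mu(y)$. Since $\pi\colon Q\to Q/G$ is a principal $G$-bundle, the map $\sigma_q\colon\lag\to V_q\pi$, $\xi\mapsto\xi_Q(q)$, is a linear isomorphism onto the vertical space of $\pi$, and under it $\sigma_q(\lag_\mu)=V_q\pi_\mu$ is the vertical space of $\pi_\mu$. Because $\overline\pi_\mu^{-1}(x)=\pi_\mu(\pi^{-1}(x))$ and $\pi_\mu$ restricts on $\pi^{-1}(x)\cong G$ to the quotient submersion $G\to G/G_\mu$, the differential $T\pi_\mu$ maps $V_q\pi=\sigma_q(\lag)$ onto $V_y\overline\pi_\mu$ with kernel $V_q\pi\cap V_q\pi_\mu=\sigma_q(\lag_\mu)$. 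This already yields $\dim V_y\overline\pi_\mu=\dim\lag-\dim\lag_\mu$ together with a linear isomorphism $\lag/\lag_\mu\to V_y\overline\pi_\mu$, $[\xi]\mapsto T\pi_\mu(\xi_Q(q))$.

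With this in hand I would define the candidate bundle map
\[
\Phi\colon \overline\pi_\mu^*\tilde\lag/\tilde\lag_\mu\to V\overline\pi_\mu,\qquad
\Phi\big([(y,[q,\xi]_G)]\big)=T\pi_\mu(\xi_Q(q)),
\]
covering the identity on $Q/G_\mu$, where $q$ is any point with $[q]_{G_\mu}=y$ and $\xi\in\lag$ satisfies $[q,\xi]_G=\tilde\xi$. Recall that the fibre of $\tilde\lag_\mu$ over $y$ is, for this $q$, the subspace $\{[q,\zeta]_G:\zeta\in\lag_\mu\}$, so the left-hand side is exactly $\tilde\lag_x$ modulo this subspace.

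The verification splits into two well-definedness checks, which I expect to be the only real content. For a fixed representative $q$, changing $\xi$ by an element $\zeta\in\lag_\mu$ changes $T\pi_\mu(\xi_Q(q))$ by $T\pi_\mu(\zeta_Q(q))$, and this vanishes because $\zeta_Q(q)=\sigma_q(\zeta)\in V_q\pi_\mu=\ker T\pi_\mu$; hence $\Phi$ is independent of the choice of $\xi$ modulo $\lag_\mu$, i.e. it descends to the quotient by $\tilde\lag_\mu$. Changing the representative to $q'=qg$ with $g\in G_\mu$ forces $\xi'=Ad_{g^{-1}}\xi$ in order that $[q',\xi']_G=[q,\xi]_G$; then the equivariance identity $(Ad_{g^{-1}}\xi)_Q(qg)=TR_g(\xi_Q(q))$ together with $\pi_\mu\circ R_g=\pi_\mu$ for $g\in G_\mu$ gives $T\pi_\mu(\xi'_Q(q'))=T\pi_\mu(TR_g(\xi_Q(q)))=T\pi_\mu(\xi_Q(q))$, so $\Phi$ is independent of $q$ as well.

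Finally I would note that $\Phi$ is smooth by the usual argument (it is the factorization through the quotient of the smooth map $(q,\xi)\mapsto T\pi_\mu(\xi_Q(q))$ on $Q\times\lag$) and fibrewise linear, and that by the first paragraph it restricts on each fibre to the isomorphism $\lag/\lag_\mu\to V_y\overline\pi_\mu$. Hence $\Phi$ is a vector bundle isomorphism over $Q/G_\mu$. The main obstacle is precisely the second well-definedness check: one must invoke the equivariance of the fundamental vector fields in the form $(Ad_{g^{-1}}\xi)_Q(qg)=TR_g(\xi_Q(q))$ and the $G_\mu$-invariance $\pi_\mu\circ R_g=\pi_\mu$; everything else reduces to linear algebra and a dimension count.
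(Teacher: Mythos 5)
Your proof is correct and follows essentially the same route as the paper: the paper defines the surjective bundle map $(y,\tilde\xi)\mapsto T\pi_\mu(\sigma_q(\xi))$ from $\overline\pi_\mu^*\tilde\lag$ to $V\overline\pi_\mu$ and identifies its kernel as $\tilde\lag_\mu$, which is exactly your $\Phi$ viewed before passing to the quotient. The only difference is that you spell out the well-definedness (via the equivariance $(\mathrm{Ad}_{g^{-1}}\xi)_Q(qg)=TR_g(\xi_Q(q))$) and the fibrewise dimension count that the paper dismisses as ``clear'' and ``standard''.
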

\begin{proof}
We define a surjective linear map from $\overline\pi^*_\mu\tilde
\lag$ to $V\overline\pi_\mu$, in the following way. Fix a point
$(y,\tilde\xi)\in \overline\pi_\mu^*\tilde\lag$ such that
$y=[q]_{G_\mu}$ and $\overline\pi_\mu(y)=x$. Let $\xi$ be a
representant of $\tilde\xi$ at the point $q$, i.e.
$\tilde\xi=[q,\xi]_G$, and consider the vertical tangent vector in
$V_{y}\overline\pi_\mu$ obtained by $T\pi_\mu(\sigma_q(\xi))$. It
is clear that this map is well defined (i.e. independent of the
choice of the point $q$ in $[q]_{G_\mu}$ and that it is onto.
Furthermore, the kernel of this map consists of the elements in
$\tilde \lag_\mu$ at the point $y$.
\end{proof}
For later purposes it is important to see that the tangent vectors
to the vertical curves
$[qg(t)]_{G_\mu}\in\overline\pi_\mu^{-1}(x)$ through $y$ with
$g(0)=e$ and $\xi=\dot g(0)\in\lag$ are mapped onto $(y,[q,\xi]_G)
+\tilde\lag_\mu$ at the point $y$. We conclude with recalling the
definition of a connection on an arbitrary associated bundle
corresponding to the fixed principal connection $\omega$ on $Q$. Let $E$
denote a manifold on which $G$ acts from the left.

\begin{definition}
The connection or horizontal distribution on the associated bundle
$Q\times_G E$ corresponding with the principal connection $\omega$
is defined as the set of tangent vectors to curves of the form
$[q(t),f]_G$ with $q(t)\in Q$ horizontal and $f\in E$ arbitrary.
In the case that $F$ is a linear space on which $G$ acts linearly,
we define the covariant derivative $D$ as
\[
\frac{D}{Dt} [q(t),e(t)]_G = [q(t),\left(\omega_{q(t)}(\dot
q(t))\right)\cdot e(t)]_G + [q(t),\dot e(t)]_G .
\]
\end{definition}
It is understood that in the above definition the action of a
Lie-algebra element $\omega_q(\dot q)$ on the element $e\in E$ is
denote by $\cdot$.

We will use the connection in associated bundles in the following
two cases:

\noindent Firstly, the principal connection $\omega$ determines a
connection in $Q/G_\mu\to Q/G$, the bundle associated to the orbit
space ${\cal O}_\mu$. The tangent space to $Q/G_\mu$ can be
written as a direct sum of the horizontal and vertical
distribution:
\[
T(Q/G_\mu) \cong \overline\pi_\mu^*T(Q/G)\oplus_{Q/G_\mu}
\overline\pi_\mu^*\tilde\lag/\tilde\lag_\mu.
\]
This in turn implies that the two-form $\beta^\mu$ (cf.
Proposition \ref{prop7} above) can be decomposed into a
horizontal-horizontal and vertical-vertical part (the
horizontal-vertical part vanishes)
\begin{equation}
\label{hhvv} \beta^\mu=(\tilde\Omega^\mu, -\mbox{ad}^*\tilde\mu),
\end{equation}
where
\begin{eqnarray*}
\tilde\Omega^\mu&:&Q/G_\mu \to  \wedge ^2 T^*(Q/G);y  \mapsto  \langle\tilde\mu(y),\tilde\Omega\rangle ,
\end{eqnarray*}
$\tilde\Omega$ being the $\tilde\lag$-valued curvature 2-form on
$Q/G$. Such a decomposition is a straightforward consequence of
the structure equation $\Omega = d\omega + [\omega,\omega]$ and
the fact that in this specific case the horizontal distribution on
$Q/G_\mu$ is the projection of the horizontal distribution on $Q$.

\noindent Secondly, if $E=\lag$ the covariant derivative on the
adjoint bundle $\tilde\lag$ equals, for any $q(t)\in Q$ and
$\xi(t)\in\lag$ (with $[\cdot,\cdot ]$ the Lie bracket on $\lag$):
\[
\frac{D}{Dt} [q(t),\xi(t)]_G = \left[q(t),\left[\omega_{q(t)}(\dot
q(t)), \xi(t)\right]\right]_G + [q(t),\dot \xi(t)]_G.
\]

\subsection{Variations}

\label{gil} The notions introduced here are described in more
detail in~\cite{CMR01}. For the sake of completeness and because
we work in a slightly different situation we nevertheless briefly
introduce the notions of vertical and horizontal variations in the
context of Lagrangian systems with symmetry. The main point is to
consider a variation of a curve $q(t)$ in the total space $Q$, and
subsequently study the projection of this variation to the
quotient spaces $Q/G_\mu$ and $Q/G$.

For that purpose, we fix a curve in $q(t)$ in $Q$ and its lift
$\dot q(t)$ in $TQ$. Following the notations from the previous
section, we shall write the projection of $\dot q(t)$ to
$TQ/G_\mu$ as $(\dot x(t), y(t),\tilde \xi(t))$, where $\dot
x(t)=T_{q(t)}\pi(\dot q(t))$, $y(t)=\pi_\mu(q(t))$ and $\tilde
\xi(t)=[q(t),\omega_{q(t)}(\dot q(t))]_G$.

\begin{definition}
A deformation of the curve $q(t)$ is smooth function
$q(t,\e)=q_\e(t)$ such that $q(t,0)=q(t)$. The corresponding
variation is defined by
\[
\delta q(t)= \left.\fpd{q(t,\e)}{\e}\right|_{\e=0}\in T_{q(t)}Q\ .
\]
\begin{enumerate}
\item A vertical deformation is a deformation that can be written
as $q(t,\epsilon )= q(t)g_\epsilon (t)$, where the family of
curves $g_\epsilon (t) \in G$ satisfies $g_0(t)=e$, with $e$ the
unity of $G$.
\item A variation $\delta q(t)$ such that
$T\pi_{q(t)} (\delta q(t))=0$ for all $t$ is said to be a vertical
variation. \item A variation $\delta q(t)$ such that
$\omega(q(t))(\delta q(t))=0$ for all $t$ is said to be a
horizontal variation.
\end{enumerate}
\end{definition}

Obviously, the variation induced by a vertical deformation is
vertical and it is not hard to see that any vertical variation can
be given in that way.

Given an arbitrary vertical deformation $q(t,\e)=q(t)g_\e(t)$,
then the projection of $(q(t,\e),\dot q(t,\e))$ to the quotient
space $TQ/G_\mu$ is written as $(\dot x(t), y(t,\e),\tilde
\xi(t,\e))$, where according to the previous definitions:
\begin{eqnarray*}
  y(t,\e)& = &[q(t,\e)]_{G_\mu}\\
  \tilde \xi(t,\e) &=& [q(t)g_\e(t),\omega(q(t,\e))
  (\dot  q(t,\e))]_G \\ &=& \left[q(t)g_\e(t),\omega(q(t)g_\e(t))\left(TR_{g_\e(t)}
  \left(\dot q(t) + \sigma_{q}(\dot g_\e g_\e^{-1})\right)\right)\right]_G\\
  &=& \left[q(t),\omega(q(t))\big(\dot q(t)\big) + \dot g_\e(t)g_\e^{-1}(t)\right]_G
  =\tilde\xi(t) + [q(t),\dot g_\e(t)g_\e^{-1}(t)]_G
\end{eqnarray*}
The typical structure equations of {\bf reduced vertical
variations} is obtained in the following way. Let $d/d\e _{\e =0}
g_\e(t)= \delta g(t)\in\lag$ and $\tilde\eta(t)=[q(t),\delta
g(t)]_G\in\tilde\lag$. Using the definition of the covariant
derivative on the associated bundle $\tilde\lag\to Q/G$, we may
write
\[
\frac{D}{Dt} \tilde\eta(t)= \left[q(t),\left[\omega(q(t))\big(\dot
q(t)\big),\delta g(t)\right]\right]_G+ [q(t),\dot{\delta g}(t)]_G.
\]
Finally, by computing the derivative to the curve $\e\to
\tilde\xi(t,\e)$ we find the structure equation~\cite{CMR01}:
\[
\tilde\lag_{x(t)}\ni \delta\tilde\xi(t):=
\left.\frac{d}{d\e}\right|_0 \tilde \xi(t,\e) = [q(t),\dot \delta
g(t)]_G = \frac{D}{Dt} \tilde \eta(t) - [\tilde \xi (t),\tilde
\eta(t)].
\]

It now remains to check the variation in the variable $y$. The
curve $\e\mapsto y(t,\e) = [q(t,\e)]_{G_\mu}$ is contained in a
fibre of the bundle $\overline \pi_\mu^{-1}(x(t))$ and
consequently, the tangent vector to this curve is in
$V\overline\pi_\mu$.  In the language of Proposition \ref{repto},
the variation $\delta y$ in $y$ then corresponds to
$(y,\tilde\eta)+\tilde\lag_\mu$.

Next, we describe the structure of reduced {\bf horizontal
variations}. Let $q(t,\e)$ denote a deformation, with
corresponding horizontal variation $\delta q(t)$. Again, the
projection of the deformation $q(t,\e)$ determines a deformation
$(\dot x(t,\e), y(t,\e),\tilde \xi(t,\e))$ of $(\dot
x(t),y(t),\tilde\xi(t))$ in $TQ/G_\mu$ (with
$x(t,\e)=\pi(q(t,\e))$). By definition we have that $\tilde \xi
(t,\e)= [q(t,\e),\omega(q(t,\e))(\dot q(t,\e))]_G$. Using the fact
that $\delta q$ is horizontal, i.e. $\delta q=\delta x^h_q$, the
equality
\[
  \left.\frac{d}{d\e}\right|_0 \omega(q(t,\e))(\dot q(t,\e)) =
  \Omega(q(t))(\delta q(t),\dot q(t))
\]
holds and we are able to write the horizontal and vertical parts
of the tangent vector $\delta \tilde\xi (t) \in
T_{\tilde\xi(t)}\tilde\lag$ as follows:
\begin{eqnarray*}
\delta \tilde \xi(t)^v=\left(\left.\frac{d}{d\e}\right|_{\e
=0}\tilde \xi (t,\e)\right)^v &=& \left(\tilde\xi(t),[q(t), \Omega(q(t))(\delta q(t),\dot q(t))]_G\right)\\
& =&\left(\tilde\xi(t),\tilde \Omega(x(t))( \delta x(t),\dot x(t))\right),\\
\delta \tilde \xi(t)^h=\left(\left.\frac{d}{d\e}\right|_{\e
=0}\tilde \xi(t,\e)\right)^h &=& (\tilde\xi,\delta x(t)),
\end{eqnarray*}
where we used the decomposition of $T_{\tilde\xi(t)}\tilde\lag =
\tilde\pi^*T(Q/G)\oplus V\tilde\pi$ into its horizontal and
vertical subspace (due to the linear structure, $V\tilde\pi$ is
identical to $\tilde\lag\times\tilde\lag$). It is important to see
that $\delta\tilde\xi$ is not necessarily horizontal w.r.t the
connection on the bundle $\tilde\lag\to Q/G$, despite the fact
that it originates from a horizontal variation of $q(t)$.

The deformation $y(t,\e)$ has a corresponding variation $\delta
y(t)$ which is a tangent vector along $y(t)$. It is horizontal in
terms of the connection associated to $\omega$ on the associated
bundle $Q/G_\mu \to Q/G$.

\subsection{Lagrange-Poincar\'e reduction}\label{sec:lagpoin}

From the above structural equations for projected vertical and
horizontal variations of a curve $q(t)$, we are now able to deduce
the Lagrange-Poincar\'e equations on $TQ/G\cong
T(Q/G)\times\tilde\lag$ for an invariant Lagrangian system
$(Q,L,F)$ on $Q$. This follows easily from the following identity:
given an arbitrary deformation of $q(t)$, then
\[
\int_I L(\dot q(t,\e))dt = \int _I l(\dot
x(t,\e),\tilde\xi(t,\e))dt .
\]
Now, if $q(t)$ is critical for $L$, then for any vertical
variation with vanishing endpoints we write, with a slight abuse
of notations
\begin{eqnarray*}
0&=&  \frac{d}{d\e} \int_I L(\dot q(t,\e))= \frac{d}{d\e} \int_I l
(\dot x(t),\tilde\xi(t,\e))dt \\ &=& \int_I  \left\langle
\F_{\tilde\xi} l(\dot x(t),\tilde \xi(t)), \frac{D}{Dt} \tilde
\eta(t) - [\tilde \xi (t),
\tilde \eta(t)]\right\rangle dt\\
&=&\int_I \left\langle -\frac{D}{Dt} \F_{\tilde\xi} l -
\mathrm{ad}^*_{\tilde \xi } \F_{\tilde\xi} l,\tilde
\eta(t)\right\rangle dt,
\end{eqnarray*}
for all $\tilde \eta(t)\in\tilde\lag$ with vanishing endpoints. On
the other hand if we consider horizontal variations, then
\begin{eqnarray}\nonumber
0&=&  \frac{d}{d\e} \int_I l(\dot x(t,\e),\tilde\xi(t,\e))dt
+\int_I \langle f(\dot x(t),\tilde\xi(t)),\delta x(t)\rangle dt \\
\label{eqn:lp} &=& \int_I \left(\left\langle {\cal EL}(l)^h +f
,\delta x \right\rangle +\left\langle  \F_{\tilde\xi} l(\dot
x,\tilde \xi), \tilde \Omega(x)( \delta x,\dot x))\right\rangle
\right)dt.
\end{eqnarray}
Recall that $l$ can be regarded as an intrinsically constrained Lagrangian on
$M=\tilde\lag$ (and $N=Q/G$), and hence ${\cal EL}(l)$ in the previous equation is the
Euler-Lagrange operator for $l$ regarded as a function on
$T\tilde\lag$. The reduced equations now read
\begin{eqnarray*}
\frac{D}{Dt}\mathbb{F}_{\tilde{\xi}}l & = & - \mathrm{ad}^* _{\tilde{\xi}} \mathbb{F}_{\tilde{\xi}}l, \mbox{ and}\\
\mathcal{EL}(l)^h &=& \langle \mathbb{F}_{\tilde{\xi}}l,i_{\dot x}
\tilde{\Omega}\rangle -f.
\end{eqnarray*}
It should be clear that these equations are not the Euler-Lagrange
equations for the intrinsically constrained Lagrangian system
$(\tilde \lag \to Q/G,l,f)$ on the fibred manifold $\tilde \lag
\to Q/G$. The introduction of the Routhian will avoid this
obstacle, paying the price that additional non-conservative forces
have to be taken into account.

\section{The Routhian reduction scheme}

Consider a critical curve $q(t), t \in[a,b]$, of the invariant
Lagrangian system $(Q,L,F)$, that is
\[
0= \delta\int_I L(\dot q(t))dt = -\int_I\langle F(\dot
q(t)),\delta q(t)\rangle dt,
\]
with $\delta q(t)$ an arbitrary variation of $q(t)$. By
considering variations with fixed endpoints, we obtain the Euler-Lagrange equations for critical curves
\[
{\cal EL}(L)(\ddot q(t))= -F(\dot q(t)), \ \forall t\in I.
\]

We assume throughout this section that we fixed a principal
connection $\omega$ and a regular momentum value $\mu \in\lag^*$
of $J_L$.

\begin{definition}
The Routhian is the function on $TQ$ defined by
\[
R^\mu(v_q) = L(v_q)-\langle \mu,\omega(q)(v_q)\rangle.
\]
\end{definition}
The Routhian depends on the choice of the connection and the
momentum $\mu$. Although the Routhian is not in general invariant
under the action of $G$, we do have the property that the momentum
map of the Routhian w.r.t the action of $G$ will be conserved (cf.
proposition~\ref{prop:cm}) . We thus study the behavior of the
Routhian under the action of $G$.
\begin{proposition}
The Routhian is invariant under the action of $G_\mu$. It
transforms under the infinitesimal action of $G$ as:
\[
\xi_{TQ}\left( R^\mu\right)(v_q) = \langle \mathrm{ad}_\xi^*
\mu,\omega(q)(v_q)\rangle , \quad \forall \xi \in \lag .
\]
\end{proposition}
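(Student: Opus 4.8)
The plan is to prove the two assertions separately, in both cases using the $G$-invariance of $L$ to reduce everything to the behaviour of the fibre-linear function $\hat\omega^\mu(v_q)=\langle\mu,\omega(q)(v_q)\rangle$ on $TQ$, so that $R^\mu=L-\hat\omega^\mu$ and the Lagrangian part drops out of every symmetry computation.

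For the $G_\mu$-invariance I would fix $g\in G_\mu$ and evaluate $R^\mu(v_qg)$ directly. Since $L$ is $G$-invariant, $L(v_qg)=L(v_q)$, so only the term $\langle\mu,\omega(qg)(v_qg)\rangle$ requires attention. Writing $v_qg=TR_g(v_q)$ and invoking the equivariance $R_g^*\omega=\mathrm{Ad}_{g^{-1}}\cdot\omega$ gives $\omega(qg)(v_qg)=\mathrm{Ad}_{g^{-1}}(\omega(q)(v_q))$, so the pairing equals $\langle\mu,\mathrm{Ad}_{g^{-1}}(\omega(q)(v_q))\rangle$. Because $g\in G_\mu$ fixes $\mu$ under the coadjoint action, $\langle\mu,\mathrm{Ad}_{g^{-1}}(\cdot)\rangle=\langle\mu,\cdot\rangle$, and the expression collapses back to $\langle\mu,\omega(q)(v_q)\rangle$; hence $R^\mu(v_qg)=R^\mu(v_q)$.

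For the infinitesimal formula I would write $\xi_{TQ}(R^\mu)=\xi_{TQ}(L)-\xi_{TQ}(\hat\omega^\mu)$, the first term vanishing by invariance of $L$. The key identity is that for any one-form $\alpha$ on $Q$ the tangent lift satisfies $\xi_{TQ}(\hat\alpha)=\widehat{\mathcal{L}_{\xi_Q}\alpha}$; I would establish it by differentiating along the flow, using that the flow of $\xi_{TQ}$ is the tangent lift $TR_{\exp(t\xi)}$ of the flow $R_{\exp(t\xi)}$ of $\xi_Q$, so that $\xi_{TQ}(\hat\alpha)(v_q)=\frac{d}{dt}\big|_{t=0}(R_{\exp(t\xi)}^*\alpha)(q)(v_q)=(\mathcal{L}_{\xi_Q}\alpha)(q)(v_q)$. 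Applying this to $\alpha=\omega^\mu$ reduces the problem to computing $\mathcal{L}_{\xi_Q}\omega^\mu$.

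Finally, since $\mu$ is a constant covector, $\mathcal{L}_{\xi_Q}\omega^\mu=\langle\mu,\mathcal{L}_{\xi_Q}\omega\rangle$, and the equivariance of $\omega$ (differentiated at the identity) gives $\mathcal{L}_{\xi_Q}\omega=-\mathrm{ad}_\xi\omega$, whence $\mathcal{L}_{\xi_Q}\omega^\mu=-\langle\mathrm{ad}_\xi^*\mu,\omega\rangle$ — this is exactly the computation underlying the proof of Proposition~\ref{prop7}, now used for an arbitrary $\xi\in\lag$ rather than only $\xi\in\lag_\mu$. Combining the two minus signs yields $\xi_{TQ}(R^\mu)(v_q)=\langle\mathrm{ad}_\xi^*\mu,\omega(q)(v_q)\rangle$, as claimed. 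I expect the only delicate point to be sign and convention bookkeeping — the direction of the right action entering $\mathcal{L}_{\xi_Q}\omega=-\mathrm{ad}_\xi\omega$ and the pairing defining $\mathrm{ad}^*$ — so I would pin these conventions down at the outset to make sure the two sign flips combine to give the stated positive sign.
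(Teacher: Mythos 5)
Your proposal is correct and follows exactly the route the paper indicates (its proof is just the one-line remark that the claim is ``straightforward from the definition of $R^\mu$ and the equivariance of $\omega$''): the $G_\mu$-invariance via $\langle\mu,\mathrm{Ad}_{g^{-1}}(\cdot)\rangle=\langle\mu,\cdot\rangle$, and the infinitesimal formula via differentiating the equivariance to get $\mathcal{L}_{\xi_Q}\omega=-\mathrm{ad}_\xi\omega$, are precisely the intended computations, and your signs are consistent with the paper's conventions (compare the computation in Proposition~\ref{prop7}).
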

\begin{proof}
This is straightforward from the definition of $R^\mu$ and the
equivariance of $\omega$.
\end{proof}

\begin{proposition}\label{prop:equiv}\hfill
 \begin{enumerate}
\item A critical curve of the invariant Lagrangian system $(Q,L,F)$ is a
critical curve of the Lagrangian system $(Q,R^\mu,F+G^\mu)$ and
vice versa, where $G^\mu$ is the gyroscopic force term (see \S
\ref{sec:velcon} above) associated to $d\omega^\mu$, that is
\[
G^\mu (v_q ) = -i _{v_q}d\omega ^\mu  = -\langle \mu,i_{v_q} d\omega \rangle , \quad v_q \in T_qQ.
\]

\item The Lagrangian system $(Q,R^\mu,F+G^\mu)$ has conserved
momentum $J_{R^\mu}$ w.r.t. the action of the entire symmetry
group $G$.
\item The momentum maps of $L$ and $R^\mu$, both considered with respect to
the full action of $G$, are related by $J_{R^\mu}=J_L-\mu$. Hence $q(t)$ is critical curve
of the invariant Lagrangian system $(Q,L,F)$ with momentum $J_L=\mu$
if and only $q(t)$ is a critical curve of
Lagrangian system $(Q,R^\mu,F+G^\mu)$ with momentum $J_{R^\mu}=0$.
 \end{enumerate}
\end{proposition}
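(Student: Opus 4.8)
The plan is to treat the three claims in order, since each rests on the previous one. For part (1), the key observation is that the Routhian differs from $L$ by the one-form $\omega^\mu=\langle\mu,\omega\rangle$, that is, $R^\mu(v_q)=L(v_q)-\langle\omega^\mu(q),v_q\rangle$. I would compute the Euler-Lagrange operator of $R^\mu$ and show that the extra contribution coming from $\omega^\mu$ is precisely the gyroscopic force term $G^\mu$ associated to $d\omega^\mu$. Concretely, for a linear term $\langle\omega^\mu,v_q\rangle$ along a curve $q(t)$, the variation produces $i_{\dot q}d\omega^\mu$, so that $\mathcal{EL}(R^\mu)(\ddot q)=\mathcal{EL}(L)(\ddot q)+i_{\dot q}d\omega^\mu=\mathcal{EL}(L)(\ddot q)-G^\mu(\dot q)$. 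Hence the equation $\mathcal{EL}(L)(\ddot q)=-F(\dot q)$ is equivalent to $\mathcal{EL}(R^\mu)(\ddot q)=-F(\dot q)-G^\mu(\dot q)=-(F+G^\mu)(\dot q)$, which is exactly the critical-curve condition for $(Q,R^\mu,F+G^\mu)$. I would verify this in the local coordinate form of the Euler-Lagrange equations already recorded in the excerpt, where the total time derivative of a curve-dependent one-form produces the antisymmetric combination recognizable as $d\omega^\mu$.

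For part (2), I would invoke Proposition~\ref{prop:cm}: the momentum map $J_{R^\mu}$ is conserved along critical curves of $(Q,R^\mu,F+G^\mu)$ if and only if $\langle dR^\mu,\xi_{TQ}\rangle=-\langle F+G^\mu,\xi_Q\rangle$ for all $\xi\in\lag$ along those curves. The left-hand side is computed from the transformation law $\xi_{TQ}(R^\mu)(v_q)=\langle\mathrm{ad}^*_\xi\mu,\omega(q)(v_q)\rangle$ established in the preceding proposition. On the right-hand side, $\langle F,\xi_Q\rangle=0$ by $G$-invariance of the force (condition~(2) of Definition~\ref{def:invarfor}), so the condition reduces to matching $\langle\mathrm{ad}^*_\xi\mu,\omega(v_q)\rangle$ against $-\langle G^\mu,\xi_Q\rangle=\langle i_{v_q}d\omega^\mu,\xi_Q\rangle=d\omega^\mu(v_q,\xi_Q)$. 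The equivariance of $\omega$ (via $\mathcal{L}_{\xi_Q}\omega=-\mathrm{ad}_\xi\omega$ and $\omega(\xi_Q)=\xi$) gives $d\omega^\mu(v_q,\xi_Q)=\langle\mathrm{ad}^*_\xi\mu,\omega(v_q)\rangle$ by essentially the same computation carried out in the proof of Proposition~\ref{prop7}. I expect this identity to be the main technical heart of the argument, and the one place where the equivariance of the connection is genuinely used.

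For part (3), I would compute $J_{R^\mu}$ directly from its definition as the fibre derivative of $R^\mu$ in vertical directions. Since $R^\mu=L-\langle\omega^\mu,\cdot\rangle$ and the subtracted term is linear in $v_q$, its vertical fibre derivative at $v_q$ evaluated on $\xi$ is $\langle\omega^\mu(q),\xi_Q(q)\rangle=\langle\mu,\omega(q)(\xi_Q(q))\rangle=\langle\mu,\xi\rangle$, using $\omega(\xi_Q)=\xi$. Therefore $J_{R^\mu}(v_q)(\xi)=J_L(v_q)(\xi)-\langle\mu,\xi\rangle$, i.e.\ $J_{R^\mu}=J_L-\mu$ as maps into $\lag^*$. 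The final equivalence is then immediate: the condition $J_L(\dot q)=\mu$ is the same as $J_{R^\mu}(\dot q)=0$, and combining this with the equivalence of critical curves from part~(1) yields the stated correspondence between critical curves of $(Q,L,F)$ with momentum $\mu$ and critical curves of $(Q,R^\mu,F+G^\mu)$ with momentum $0$.
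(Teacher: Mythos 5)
Your proposal is correct and follows essentially the same route as the paper: part (1) via the variation of the linear term $\langle\omega^\mu,\dot q\rangle$ producing $i_{\dot q}d\omega^\mu=-G^\mu(\dot q)$ (the paper phrases this as a variational integral identity with boundary terms, you phrase it through the Euler--Lagrange operator, but the computation is the same), part (2) by invoking Proposition~\ref{prop:cm} together with the transformation law of $R^\mu$ and the equivariance identity $d\omega^\mu(v_q,\xi_Q)=\langle\mathrm{ad}^*_\xi\mu,\omega(v_q)\rangle$, and part (3) by the same direct fibre-derivative computation using $\omega(\xi_Q)=\xi$. All signs check out; no gaps.
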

\begin{proof}
The first statement is proven by considering the variation of
$R^\mu$ along a critical curve $q(t)$ of $(Q,L,F)$: given an
arbitrary variation $\delta q(t)$  of $q(t)$ in $Q$ then
\begin{eqnarray}\label{variations}\nonumber
\delta \int_I R^\mu(\dot q(t)) dt &=&  \delta \int_I L(\dot q(t))
dt
-\delta \int_I \langle \omega^\mu(q(t)), \dot q(t)\rangle dt \\
\nonumber & =& \delta \int_I L(\dot q(t)) dt + \int_I \langle
i_{\dot q(t)} d\omega^\mu(q(t)),
\delta q(t)\rangle dt -\langle \omega^\mu(q(t)),\delta q(t)\rangle^b_a\\
&=&  -\int_I\langle F(\dot q(t)),\delta q(t)\rangle dt-  \int_I
\langle G^\mu(\dot q(t)), \delta q(t)\rangle dt \nonumber\\ & &
\qquad \qquad -\langle \omega^\mu(q(t)),\delta
q\rangle^b_a +\langle \F L(\dot q(t)),\delta q(t)\rangle|^b_a\ .
\end{eqnarray}
This easily proves the first statement in the proposition (the
other direction follows by reversing the arguments).

For the second statement, according to Proposition \ref{prop:cm},
it is sufficient to show that $\xi_{TQ}(R^\mu)(v_q) = -\langle
F+G^\mu,\xi_Q\rangle(q)$. The contraction of the force term
$F+G^\mu$ with vertical directions is precisely
\begin{eqnarray*}
\langle (F+G^\mu)(v_q),\xi_Q(q)\rangle &=& \langle G^\mu(v_q),\xi_Q(q)\rangle\\
&=& \langle \mu, [\omega(q)(v_q),\xi]\rangle\\ &=& -\langle
\mathrm{ad}^*_\xi\mu,\omega(q)(v_q)\rangle.
\end{eqnarray*}
Together with the previous proposition this shows the second
statement.

Finally, the third statement is proven by computing the momentum
map w.r.t action of $G$ of the Lagrangian system
$(Q,R^\mu,F+G^\mu)$:
\begin{eqnarray*}
\langle
J_{R^\mu}(v_q),\xi\rangle&=&\left.\frac{d}{d\e}\right|_{\e=0}
\bigg(L(v_q+\e\sigma_q(\xi)) - \left\langle
\mu,\omega(q)\big(v_q+\e\sigma_q(\xi)\big)\right\rangle \bigg)\\&=&
\langle J_L(v_q) -\mu,\xi\rangle.\end{eqnarray*}
\end{proof}

We are now ready to state the first theorem of this paper. First
we need to fix additional notations. The Routhian is reducible to
a function
\[
\Ro^\mu : TQ/G_\mu\cong T(Q/G)\times Q/G_\mu\times \tilde \lag \to
\R.
\]
Recall that the invariant force term $F$ reduces to a map
\[
f:TQ/G\to T^*(Q/G)
\]
and that the gyroscopic force term $G^\mu$ reduces to a gyroscopic
force term
\[
\zeta^\mu:T(Q/G_\mu) \to T^*(Q/G_\mu)
\]
on $Q/G_\mu$ associated to $\beta^\mu$ (see Proposition
\ref{prop7}). With these notations and those in \S
\ref{sec:lagsyst} in mind, we now consider the bundle $M\to Q/G$,
where
\[
M:=Q/G_\mu\times\tilde\lag \to Q/G.
\]
The triple  $(M\to Q/G,\Ro^\mu,f+\zeta^\mu)$ will determine an
intrinsically constrained Lagrangian system with
\[
T_M N= T(Q/G)\times Q/G_\mu \times \tilde{\mathfrak{g}}.
\]
To reduce the notational complexity, it is understood that $f$ and
$\zeta^\mu$ are pull-backed to the appropriate bundles in order to
fit the definition of an intrinsically constrained Lagrangian
system.

Due to the product structure of the bundle
$M=Q/G_\mu\times\tilde\lag$, we can restrict the Euler-Lagrange
operator ${\cal EL}(\Ro^\mu)$ to tangent vectors vertical to $M\to
 \tilde\lag$ or to $M\to Q/G_\mu$. The latter is denoted by $\partial_y
\Ro^\mu$ and the first is precisely $\F_{\tilde\xi} \Ro^\mu$. The
two bundles $Q/G_\mu$ and $\tilde\lag$ are equipped with a
connection determined by $\omega$, and it is standard to show that
these determine a unique connection on $M$ for which the
horizontal distribution projects onto the horizontal distribution
of $Q/G_\mu$ and $\tilde\lag$. It should be clear what is meant
when we write ${\cal EL}(\Ro^\mu)^h$ and ${\cal EL}(\Ro^\mu)^v =
(\partial_y\Ro^\mu, \F_{\tilde\xi}\Ro^\mu)$.

\begin{theorem}\label{thm:routh}
(i) A critical curve of the invariant Lagrangian system $(Q,L,F)$
with momentum $\mu$ projects onto a critical curve of the
intrinsically constrained Lagrangian system $(M\to
Q/G,\Ro^\mu,f+\zeta^\mu)$. (ii) Conversely, if a critical curve of
the intrinsically constrained Lagrangian system $(M\to
Q/G,\Ro^\mu,f+\zeta^\mu)$ is the projection of a lifted curve
$\dot q(t)$ in $TQ$, then $q(t)$ is a critical curve of $(Q, L,F)$
with momentum $\mu$.
\end{theorem}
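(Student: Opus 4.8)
The plan is to build the statement out of two reductions that have already been prepared. First I would invoke Proposition~\ref{prop:equiv} to replace the system $(Q,L,F)$ with momentum $\mu$ by the system $(Q,R^\mu,F+G^\mu)$ with momentum zero; since the two have literally the same critical curves, this converts the fixed-momentum bookkeeping into the single algebraic condition $J_{R^\mu}=0$. Next I would use the identity $\int_I R^\mu(\dot q(t))\,dt=\int_I \Ro^\mu(\dot x(t),y(t),\tilde\xi(t))\,dt$, where $(\dot x,y,\tilde\xi)$ is the projection of $\dot q$ to $TQ/G_\mu$, so that variations of the action upstairs descend to variations of the Routhian action downstairs. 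Throughout I would apply Proposition~\ref{splittingel} to split both variational problems into horizontal and vertical pieces relative to the connection induced by $\omega$ on $M=Q/G_\mu\times\tilde\lag\to Q/G$.

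The computation then breaks into three blocks matching the factors of $T_MN=T(Q/G)\times Q/G_\mu\times\tilde\lag$. For a \emph{pure fibre} variation $\delta\tilde\xi$ the Routhian varies by $\langle\F_{\tilde\xi}\Ro^\mu,\delta\tilde\xi\rangle$ and neither $f$ nor $\zeta^\mu$ acts in the $\tilde\lag$-direction, so the associated Euler--Lagrange equation is the configuration constraint $\F_{\tilde\xi}\Ro^\mu=0$; a one-line fibre-derivative computation gives $\F_{\tilde\xi}\Ro^\mu=j_l-\tilde\mu$, so this constraint is exactly the reduced momentum condition $J_L=\mu$ (equivalently $J_{R^\mu}=0$), which holds by hypothesis in part~(i). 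For a \emph{vertical} variation along the $G$-orbits, parametrised by $\tilde\eta\in\tilde\lag$ as in \S\ref{gil} (so that $\delta\tilde\xi=\tfrac{D}{Dt}\tilde\eta-[\tilde\xi,\tilde\eta]$ and $\delta y$ is the class of $\tilde\eta$ in $V\overline\pi_\mu\cong\overline\pi_\mu^*\tilde\lag/\tilde\lag_\mu$), integration by parts together with the reduction $\langle G^\mu(\dot q),\sigma_q(\eta)\rangle=\langle \mathrm{ad}^*_{\tilde\xi}\tilde\mu(y),\tilde\eta\rangle$ produces the reduced vertical equation; once $\F_{\tilde\xi}\Ro^\mu=0$ is imposed, the terms containing $\F_{\tilde\xi}\Ro^\mu$ drop and (because $\mathrm{ad}^*\tilde\mu$ annihilates $\tilde\lag_\mu$) the remainder descends to $\tilde\lag/\tilde\lag_\mu$, giving precisely the gyroscopic constraint $\partial_y\Ro^\mu+\zeta^\mu_y=0$ read off from the vertical--vertical part $-\mathrm{ad}^*\tilde\mu$ of $\beta^\mu$ in~(\ref{hhvv}). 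Finally, a \emph{horizontal} variation $\delta x$ reproduces, via the curvature term $\tilde\Omega$ in the structure equations of \S\ref{gil} and the horizontal--horizontal part $\tilde\Omega^\mu$ of $\beta^\mu$, the dynamical equation ${\cal EL}(\Ro^\mu)^h+(f+\zeta^\mu)^h=0$. By Proposition~\ref{splittingel} these three blocks are exactly the split Euler--Lagrange equations of $(M\to Q/G,\Ro^\mu,f+\zeta^\mu)$, which proves (i).

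For the converse (ii) I would run the same identities backwards. The fibre constraint again reads $\F_{\tilde\xi}\Ro^\mu=j_l-\tilde\mu=0$, so every critical curve of the reduced system automatically has $J_L=\mu$, which settles the momentum statement with no extra work. The remaining reduced equations, rewritten in the presence of this constraint, are the vertical and horizontal equations above; the hypothesis that the reduced curve is the projection of an honest lifted curve $\dot q(t)$ lets me realise the reduced variations as projections of variations $\delta q$ upstairs and lift the reduced equations back to ${\cal EL}(R^\mu)(\ddot q)=-(F+G^\mu)(\dot q)$. A last application of Proposition~\ref{prop:equiv} turns this into criticality for $(Q,L,F)$ with momentum $\mu$.

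I expect the main obstacle to be the dimension mismatch between the two pictures: since $\dim M>\dim Q$ in general, the variations of $(y,\tilde\xi)$ in $M$ are \emph{not} all induced from variations of $q$ upstairs, the vertical variations upstairs coupling $\delta y$ and $\delta\tilde\xi$ through a single $\tilde\eta$. The delicate point is to show that the one extra independent direction, the pure fibre variation $\delta\tilde\xi$, yields exactly the momentum constraint $\F_{\tilde\xi}\Ro^\mu=0$, and that imposing this constraint is precisely what decouples the single reduced vertical equation into the two separate constraints $\F_{\tilde\xi}\Ro^\mu=0$ and $\partial_y\Ro^\mu+\zeta^\mu_y=0$ demanded by the intrinsically constrained formalism. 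In part~(ii) the corresponding subtlety is that the lifted-curve hypothesis cannot be dropped: without it a reduced critical curve need not come from any curve in $Q$, so the reconstruction step has no starting point.
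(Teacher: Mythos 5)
Your proposal is correct and follows essentially the same route as the paper's proof: pass to $(Q,R^\mu,F+G^\mu)$ with zero momentum via Proposition~\ref{prop:equiv}, push the variational principle down to $M=Q/G_\mu\times\tilde\lag$, and use $\F_{\tilde\xi}\Ro^\mu=j_l-\tilde\mu=0$ both to decouple the vertical equations and to upgrade the constrained reduced variations to arbitrary ones. Your closing remarks on the dimension mismatch and on the indispensability of the lifted-curve hypothesis in part~(ii) are precisely the points on which the paper's argument turns.
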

\begin{proof}
The proof of this theorem is obtained by first considering the
Lagrange-Poincar\'e reduction of the Lagrangian system $(Q,R^\mu,
F+G^\mu)$ w.r.t the action of $G_\mu$, and consequently  by showing
that the reduced equations for critical curves with zero momentum
coincide with the critical curves for the intrinsically
constrained Lagrangian system $(M\to Q/G, \Ro^\mu,f+\zeta^\mu)$.

Let $q(t): I \to Q$ denote a critical curve of $(Q,L,F)$ with
momentum $\mu$. From proposition~\ref{prop:equiv}, we know that
$q(t)$ is a critical curve of the Lagrangian system $(Q,R^\mu,
F+G^\mu)$ with momentum $0$.

The projection of $\dot q(t)$ onto $M$ is denoted by $(y(t),\tilde
\xi(t))$. Recall that the vertical part of $\dot y(t)$ equals
$(y(t),\tilde \xi(t)) +\tilde \lag_\mu$ (this is the reduced
version of the fact that the curve in $TQ$ is the complete lift of
a curve in $Q$, see infra).  By definition of $\Ro^\mu$, we may
write $R^\mu(\dot q(t)) = \Ro^\mu(\dot x(t),y(t),\tilde \xi(t))$,
with $x(t)=\pi(q(t))$. We now show (i) by studying the reduction
of the variational equation for critical curves of
$(Q,R^\mu,F+G^\mu)$ with zero momentum (similar to
section~\S\ref{sec:lagpoin} for the Lagrangian system $(Q,L,F)$). Consider an arbitrary variation $\delta q(t)$ of $q(t)$ and the corresponding reduced variation $(\delta y(t), \delta \tilde\xi(t))$ of $(y(t),\tilde\xi(t))$, then (for notational convenience we omit the
explicit time dependence)
\begin{eqnarray}\nonumber
& \delta& \int_I R^\mu(\dot q) dt + \int_I \langle (F+G^\mu)(\dot
q),\delta q\rangle dt \\ \label{eq:varal} \quad &=&  \delta \int_I\Ro^\mu(\dot x,y,\tilde \xi)
dt  + \int_I \left( \langle f(\dot x,\tilde\xi),\delta
x\rangle+\langle\zeta^\mu(\dot y),\delta y\rangle \right)dt.
\end{eqnarray}

\noindent \emph{Vertical Variations:} We apply the Lagrange-Poincar\'e reduction scheme and we assume $\delta q(t)$ to be a vertical variation. From the previous
section, we know that the reduced variations $\delta
y(t)$ of $y(t)$ and $\delta \tilde \xi(t)$ of $\tilde \xi(t)$ both
satisfy the following equalities:
\begin{eqnarray*}
&&\delta\tilde\xi(t)= \frac{D}{Dt} \tilde \eta(t) - [\tilde \xi (t),\tilde \eta(t)]\\
&&\delta y(t) = (y,\tilde \eta(t)) + \tilde \lag_\mu\\
&&\delta x(t) =0
\end{eqnarray*}
where $\tilde\eta(t)$ can be chosen arbitrarily (with or without
vanishing endpoints). From Eq.~(\ref{eq:varal}), we conclude that
\[
\delta \int_I\Ro^\mu(\dot x,y,\tilde \xi) dt  + \int_I\left(
\langle f(\dot x,\tilde\xi),\delta x\rangle+\langle\zeta^\mu(\dot
y),\delta y\rangle \right) dt=0
\]
for {\em reduced} vertical variations of $(y(t),\tilde\xi(t))$.
Moreover, since the variations are vertical w.r.t the fibration $M\to Q/G$, the first term is
\[
\delta \int_I\Ro^\mu(\dot x,y,\tilde \xi) dt  = \int_I \langle
\partial_y\Ro^\mu, \delta y\rangle + \langle \F_{\tilde\xi}
\Ro^\mu,\delta\tilde\xi\rangle dt.
\]
From the fact that $\F _{\tilde\xi} \Ro^\mu = \F_{\tilde\xi} l -
\tilde\mu=0$, the variational equation holds for {\em arbitrary}
vertical variations $(\delta y,\delta\tilde\xi)$ of
$(y,\tilde\xi)$, and not only those satisfying $\delta\tilde\xi =
\frac{D}{Dt} \tilde \eta(t) - [\tilde \xi (t),\tilde \eta(t)]$.
Therefore, the variational equation of the intrinsically
constrained system $(M\to Q/G,\Ro^\mu,f+\zeta^\mu)$ is satisfied
for arbitrary vertical variations in $(y(t),\tilde\xi(t))$ (i.e. vertical w.r.t the bundle $M\to Q/G$).

\noindent \emph{Horizontal variations.} An arbitrary horizontal
variation of $q(t)$ determines, after reduction to $M$, an
arbitrary horizontal variation of $y(t)$ in $Q/G_\mu$. The
corresponding variation of $\tilde\xi(t)$ is not horizontal w.r.t the associated connection on $\tilde\lag$ but
the contribution of the vertical part to the variational equation
will vanish since $\F_{\tilde \xi}\Ro^\mu =0$. We then have ${\cal
EL}(\Ro^\mu)^h = -((\zeta ^\mu )^h +f)$ which is
\[
{\cal EL}(\Ro^\mu)^h = i_{\dot{x}} \tilde{\Omega} ^\mu (y) -f(\dot
x,\tilde\xi)
\]
due to formula~(\ref{hhvv}).

(ii) The converse statement is easily shown in the following way.
Let $\dot q(t)$ denote a curve in $Q$ that projects onto $(\dot
x(t),y(t),\tilde \xi(t))$, the latter curve being a a critical
curve of the intrinsically constrained system on
$Q/G_\mu\times\tilde\lag$. This implies that
\[
\delta \int_I\Ro^\mu(\dot x,y,\tilde \xi) dt  + \int_I\langle
f(\dot x,\tilde\xi),\delta x\rangle+\langle\zeta^\mu(\dot
y),\delta y\rangle dt  =0
\]
for an {\em arbitrary} variation of the critical curve. If we
consider vertical variations to $(y(t),\tilde\xi(t))$ for which
$\delta y(t)=0$, then $\F_{\tilde\xi} \Ro^\mu=j_l(\dot
x,\tilde\xi)-\tilde\mu(y)=0$. In terms of the curve upstairs, this
is precisely $J_L(\dot q(t))=\mu$. Assume that we restrict
the class of variations of $(y(t),\tilde\xi(t))$ to the projected
 variations of $q(t)$ living in the total space $Q$. By using
the invariance of $R^\mu$, it is now easily seen from the previous
variational equation for $\Ro^\mu$ that
\[
\delta \int_I R^\mu(\dot q) dt + \int_I \langle (F +G^\mu)(\dot
q), \delta q\rangle dt=0.
\]
Proposition~\ref{prop:equiv} concludes the proof.
\end{proof}

In the previous Theorem we introduced the reduction of a Lagrangian system $(Q,L,F)$
towards an intrinsically constrained Lagrangian system on a bundle $M$
over $Q/G$. It should be clear that due to the product structure
of $M$, there are two intrinsic constraints: $\F_{\tilde\xi}
\Ro^\mu =0$ or $j_l(\dot x,\tilde\xi)=\tilde\mu(y)$ and
$\partial_y\Ro^\mu + \zeta^\mu(\dot y)^v =0$. The latter constraint
belongs to the gyroscopic class (see section~\ref{sec:velcon}) and in the following Proposition we study this constraint in more detail.
\begin{proposition}
Let $(v_x,y,\tilde\xi)$ be arbitrary in $T_M(Q/G)=T(Q/G)\times Q/G_{\mu}\times
\tilde{\mathfrak{g}}$ and
$(y,\tilde\eta)$ in $M=Q/G_\mu \times \tilde{\mathfrak{g}}$
with $\tilde\eta$ arbitrary, then
\[
\langle \partial_y \Ro^\mu(v_x,y,\tilde\xi),
(y,\tilde\eta)+\tilde\lag_\mu\rangle = -\langle
\mathrm{ad}^*_{\tilde \xi}\tilde \mu(y),\tilde \eta\rangle.
\]
\end{proposition}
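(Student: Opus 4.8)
The plan is to evaluate the pairing by representing the vertical direction $(y,\tilde\eta)+\tilde\lag_\mu$ explicitly through a one-parameter family in $G$ and differentiating the Routhian along it. By the observation following Proposition~\ref{repto}, the class $(y,\tilde\eta)+\tilde\lag_\mu \in V_y\overline\pi_\mu$ is the velocity at $s=0$ of the vertical curve $y(s)=[qg(s)]_{G_\mu}$, where $q\in\pi_\mu^{-1}(y)$ is a fixed representative, $g(s)\in G$ with $g(0)=e$, $\eta=\dot g(0)$, and $\tilde\eta=[q,\eta]_G$. Since $\partial_y\Ro^\mu$ is the vertical part of the Euler--Lagrange operator along the $Q/G_\mu$-factor, which for an intrinsically constrained Lagrangian is just the fibre derivative $\partial\Ro^\mu/\partial y$ (cf. the formula ${\cal EL}(L)^v=\frac{\partial L}{\partial y^a}dy^a$), the left-hand side equals $\frac{d}{ds}\big|_{0}\Ro^\mu(v_x,y(s),\tilde\xi)$, with $v_x$ and $\tilde\xi$ held fixed in their fibres over the (unchanged) base point $x$.

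The central bookkeeping step is to express $\Ro^\mu(v_x,y(s),\tilde\xi)$ through the representative $q(s)=qg(s)$ using $\psi_\omega$ from Proposition~\ref{poron} and the definition of the Routhian. Since $\omega(q)((v_x)^h_q+\sigma_q(\xi))=\xi$, one gets
\[
\Ro^\mu(v_x,y(s),\tilde\xi)=L\big((v_x)^h_{q(s)}+\sigma_{q(s)}(\xi(s))\big)-\langle\mu,\xi(s)\rangle,
\]
where $\xi(s)$ is fixed by the requirement $\tilde\xi=[q(s),\xi(s)]_G$. The key algebraic point is that holding $\tilde\xi$ fixed while moving the representative from $q$ to $qg(s)$ forces, through the adjoint-bundle relation $(qg(s),\xi(s))\sim(q,\xi)$, the identity $\xi(s)=\mathrm{Ad}_{g(s)^{-1}}\xi$; this converts the geometric variation into a purely Lie-algebraic one.

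The crucial simplification I expect is that the $L$-term contributes nothing. Indeed, the equivariance of the horizontal lift together with that of the fundamental vector fields, $TR_{g(s)}(\sigma_q(\xi))=\sigma_{q(s)}(\mathrm{Ad}_{g(s)^{-1}}\xi)$, identify $(v_x)^h_{q(s)}+\sigma_{q(s)}(\xi(s))$ with the lifted action $TR_{g(s)}(v_q)$ of $v_q=(v_x)^h_q+\sigma_q(\xi)$, so $G$-invariance of $L$ makes $L$ constant in $s$. Differentiating the remaining term $-\langle\mu,\mathrm{Ad}_{g(s)^{-1}}\xi\rangle$ at $s=0$ then yields $-\langle\mu,[\xi,\eta]\rangle$, which by the definitions of the bracket on $\tilde\lag$, of $\tilde\mu$ (Eq.~(\ref{chuchu})), and of $\mathrm{ad}^*\tilde\mu$ is exactly $-\langle\mathrm{ad}^*_{\tilde\xi}\tilde\mu(y),\tilde\eta\rangle$. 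The only delicate points will be fixing this final sign correctly and verifying that the construction descends to the quotient by $\tilde\lag_\mu$; the latter I would confirm by the consistency check that the right-hand side vanishes whenever $\eta\in\lag_\mu$, since then $\mathrm{ad}^*_\eta\mu=0$ and hence $\langle\mu,[\xi,\eta]\rangle=0$.
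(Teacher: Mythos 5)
Your proposal is correct and follows essentially the same route as the paper: both represent the vertical direction $(y,\tilde\eta)+\tilde\lag_\mu$ by the curve $[qg(s)]_{G_\mu}$ and reduce the computation to differentiating $\langle\mu,\mathrm{Ad}_{g(s)^{-1}}\xi\rangle$ at $s=0$, using antisymmetry of $\mathrm{ad}^*\tilde\mu$ to fix the sign. The only cosmetic difference is that the paper works directly with the reduced expression $\Ro^\mu=l(v_x,\tilde\xi)-\langle\tilde\mu(y),\tilde\xi\rangle$ and notes that $l$ is independent of $y$, whereas you lift back to $Q$ and invoke $G$-invariance of $L$ to kill the same term --- the same observation in unreduced form.
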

\begin{proof}
The proposition follows from the following computations, with
$\tilde\eta=[q,\dot g(0)] \in\tilde\lag$ arbitrary for some curve
$g(\e)$ in $G$ through the identity at $\e=0$:
\begin{eqnarray*}
\langle \partial_y \Ro^\mu,(y,\tilde\eta)+\tilde\lag_\mu\rangle &=&
-\langle \partial_y\langle \tilde
\mu(y),\tilde\xi\rangle,(y,\tilde\eta)+\tilde\lag_\mu\rangle \\
&=&\left. \frac{d}{d\e}\right|_{\e =0} [q,\langle \mu, Ad_{g(\e)}
\xi\rangle]= \langle \mathrm{ad}^*_{\tilde \eta}\tilde\mu,\tilde
\xi\rangle= -\langle \mathrm{ad}^*_{\tilde \xi}\tilde \mu,\tilde
\eta\rangle.
\end{eqnarray*}
\end{proof}

To provide a complete insight into the intrinsic constraint, we now
compute $(\zeta^\mu)^v$. Let $v_y \in T_y(Q/G_\mu)$ be arbitrary,
then from formula~(\ref{hhvv}) we find
\[
\langle \zeta ^\mu(v_y),(y,\tilde\eta)+\tilde\lag_\mu\rangle = -
\beta^\mu\big(v_y, (y,\tilde\eta)+\tilde\lag_\mu\big)= \langle
\mathrm{ad}^*_{v_y^v} \tilde{\mu}, \tilde{\eta}\rangle.
\]
The constraint on critical curves $(y(t),\tilde\xi(t))$ then
reduces to $\mathrm{ad}^*_{\dot y^v(t)}
\tilde{\mu}(y(t))=\mathrm{ad}^*_{\tilde{\xi}(t)}
\tilde{\mu}(y(t))$. Due to the nondegeneracy of $\mathrm{ad}^*$,
this is equivalent to saying that $\dot y^v(t) =
(y(t),\tilde\xi(t))+\tilde\lag_\mu$.

We have thus proven the following reduction technique. Because it preserves the Lagrangian nature we refer to it as Routh reduction.

\begin{theorem}\label{thm:routhsplit}
Any critical curve $q(t)$ of an invariant
Lagrangian system $(Q,L,F)$ with momentum $\mu$ projects onto a critical curve
$(y(t),\tilde\xi(t))$ of the intrinsically constrained Lagrangian
system $(M\to Q/G,\Ro^\mu, f+\zeta^\mu)$. The Euler-Lagrange
equations for $(M\to Q/G,\Ro^\mu, f+\zeta^\mu)$ are
\begin{eqnarray}
 &&{\cal EL}(\Ro^\mu)^h\left(\ddot x,\dot
y,\dot{\tilde\xi}\right)=
i_{\dot x}\tilde \Omega^\mu(y)-f(\dot x,\tilde\xi) \nonumber\\
&&j_l(\dot x,\tilde\xi)=\tilde \mu(y)\nonumber\\
&&\dot y^v = (y,\tilde \xi) + \tilde\lag_\mu.
\end{eqnarray}
\end{theorem}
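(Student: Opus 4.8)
The plan is to assemble this statement from results already in hand. The projection claim is precisely Theorem~\ref{thm:routh}(i), so no further work is required for it; what remains is to read off the three displayed equations as the horizontal and vertical components of the Euler-Lagrange equations for the intrinsically constrained system $(M\to Q/G,\Ro^\mu,f+\zeta^\mu)$. The natural tool is Proposition~\ref{splittingel}, which splits ${\cal EL}(\Ro^\mu)$ into a horizontal and a vertical part relative to the connection on $M\to Q/G$ induced by $\omega$. I would treat these two parts in turn.

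The horizontal equation ${\cal EL}(\Ro^\mu)^h = -(f+\zeta^\mu)^h$ has in fact already been derived inside the proof of Theorem~\ref{thm:routh}: invoking the decomposition~(\ref{hhvv}) of $\beta^\mu$, whose horizontal-horizontal block is $\tilde\Omega^\mu$, the gyroscopic contribution is $-(\zeta^\mu)^h = i_{\dot x}\tilde\Omega^\mu(y)$, so that ${\cal EL}(\Ro^\mu)^h = i_{\dot x}\tilde\Omega^\mu(y) - f(\dot x,\tilde\xi)$. This is the first reduced equation, and I would simply cite that computation.

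For the vertical equation I would exploit the product structure $M = Q/G_\mu\times\tilde\lag$, which gives the refined splitting ${\cal EL}(\Ro^\mu)^v = (\partial_y\Ro^\mu, \F_{\tilde\xi}\Ro^\mu)$ and lets me treat the $\tilde\lag$- and $Q/G_\mu$-vertical directions independently. Along $\tilde\lag$ both $f$ and $\zeta^\mu$ annihilate the direction, so the equation collapses to $\F_{\tilde\xi}\Ro^\mu = 0$; since $\F_{\tilde\xi}\Ro^\mu = \F_{\tilde\xi}l - \tilde\mu = j_l - \tilde\mu$, this is exactly the second reduced equation $j_l(\dot x,\tilde\xi) = \tilde\mu(y)$. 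Along $Q/G_\mu$ only $\zeta^\mu$ survives, giving $\partial_y\Ro^\mu + (\zeta^\mu)^v = 0$; substituting the preceding Proposition for $\partial_y\Ro^\mu$ together with the explicit value of $(\zeta^\mu)^v$ computed just above, the two coadjoint terms combine into $\mathrm{ad}^*_{\dot y^v}\tilde\mu = \mathrm{ad}^*_{\tilde\xi}\tilde\mu$, and the nondegeneracy of $\mathrm{ad}^*\tilde\mu$ on $\overline\pi_\mu^*\tilde\lag/\tilde\lag_\mu$ upgrades this to the third reduced equation $\dot y^v = (y,\tilde\xi)+\tilde\lag_\mu$.

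Since every ingredient is already available, the work is essentially bookkeeping; the only delicate point is tracking which force components survive along each vertical direction and checking that the sign conventions in the Proposition for $\partial_y\Ro^\mu$ and in $(\zeta^\mu)^v$ combine to yield the coadjoint identity with the correct sign. I expect the nondegeneracy step to be the conceptual crux, as it is what converts the algebraic constraint $\mathrm{ad}^*_{\dot y^v}\tilde\mu = \mathrm{ad}^*_{\tilde\xi}\tilde\mu$ into the geometric statement $\dot y^v = (y,\tilde\xi)+\tilde\lag_\mu$; this relies precisely on $\mu$ being a regular value, which guarantees that $\mathrm{ad}^*\tilde\mu$ descends to a nondegenerate pairing on the quotient.
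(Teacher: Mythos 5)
Your proposal is correct and follows the paper's own route: the projection claim is Theorem~\ref{thm:routh}(i), the horizontal equation is the one already derived inside that proof via the decomposition~(\ref{hhvv}), and the two vertical equations are precisely the intrinsic constraints $\F_{\tilde\xi}\Ro^\mu=0$ (giving $j_l=\tilde\mu$) and $\partial_y\Ro^\mu+(\zeta^\mu)^v=0$, resolved exactly as the paper does by the proposition computing $\partial_y\Ro^\mu$, the explicit value of $(\zeta^\mu)^v$, and the nondegeneracy of $\mathrm{ad}^*\tilde\mu$ on $\overline\pi_\mu^*\tilde\lag/\tilde\lag_\mu$. One small correction to your closing remark: that nondegeneracy is not a consequence of $\mu$ being a regular value of $J_L$; it is automatic from the Lie-theoretic fact that the kernel of $\mathrm{ad}^*\tilde\mu$ is by definition exactly $\tilde\lag_\mu$, which is precisely what is quotiented out.
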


\subsection{Remarks on Routh reduction}

\begin{remark}
\textnormal{First we would like to draw the attention to the
structure of the Routhian $R^\mu=L-\omega^\mu$. If we start with a
critical curve of the Lagrangian system $(Q,L , F)$ with momentum
$\mu ' \neq \mu$ (that is, a critical curve of $(R^\mu, F-G^\mu)$
with non zero momentum), the projected curve
$(y(t),\tilde{\xi}(t))$ is a critical curve of a variational
problem for the system $(M\to Q/G,\mathcal{R}^\mu , f + \zeta
^\mu)$ with constraints in the variations for $\tilde{\xi}$. The
constraints are those described in Section~\S\ref{sec:lagpoin} but
w.r.t the action of $G_\mu$. Hence, $(y(t),\tilde{\xi}(t))$ may
not be a critical solution of the free problem defined by $(M\to
Q/G,\mathcal{R}^\mu , f + \zeta ^\mu)$. The essential point is
that, for momentum $\mu ' = \mu$, we have $\F_{\tilde \xi} \Ro^\mu
=0$ along such projected critical curves; the constraint on
$\delta \tilde\xi$ becomes negligible and the statement holds for
{\em arbitrary} variations of $(y(t),\tilde\xi(t))$ in $M$.
Moreover, the fact that $\F_{\tilde \xi} \Ro^\mu =0$ allows us to
separate the variations in the variables $y$ and $\tilde \xi$.
This is not possible if one works with the reduced lagrangian $l$.
We could summarize this by saying that the Routhian $R^\mu$ is an
alternative Lagrangian (up to gyroscopic forces) for $L$ such that
critical curves with momentum $\mu$ become critical curves with
zero momentum. Precisely this zero momentum condition will ensure
that there are no constraints in the reduction of the variations.}
\end{remark}

\begin{remark}
\textnormal{Next, we wish to explicitly mention the case where $G$
is Abelian (for instance the case with cyclic coordinates). This
assumption simplifies matters significantly. Firstly, we have that
$G_\mu=G$, $\tilde\lag =\tilde\lag_\mu$. This implies that the map
$\overline\pi_\mu$ is the identity and that the constraint $\dot
y^v= (y,\tilde\xi)+\tilde\lag_\mu$ in the Routhian reduced system
$(M\to Q/G,\Ro^\mu,f+\zeta^\mu)$ is trivial. The equations of
motion for this system then become:
\begin{eqnarray*}
&&{\cal EL}(\Ro^\mu)^h=i_{\dot x}\tilde \Omega^\mu(x)-f\\
&&j_l(\dot x,\tilde\xi)=\tilde\mu .
\end{eqnarray*}}\end{remark}

\begin{remark}\label{rem:leftright}
\textnormal{In order to relate the results presented here with the
existing literature, we give an outline of what happens when the
action of $G$ on $Q$ is a left action. The structure of the
Routhian reduced equations remains the same, except that some
objects will change sign due to the fact that the structure
equation now takes the form $d\omega=\Omega+[\omega,\omega]$. This
will affect the reduced objects in play, for instance the 2-form
$\beta^\mu$ becomes
\[
\beta^\mu = (\tilde\Omega^\mu, \mathrm{ad}^*\tilde\mu).
\]
The Routhian reduced equations of motion are
\begin{eqnarray*}
  &&{\cal EL}(\Ro^\mu)^h=i_{\dot x}\tilde \Omega^\mu(x)-f\\
  &&j_l(\dot x,\tilde\xi)=\tilde \mu(y)\\
  && \partial_y \Ro^\mu = \mathrm{ad}^*_{\dot y^v}\tilde\mu(y).
\end{eqnarray*}
Note that also $\partial_y\Ro^\mu$ also changes sign in comparison
with the right invariant case, so the third equation is equivalent
to $\dot y^v = (y,\tilde \xi) + \tilde\lag_\mu$.}\end{remark}
\begin{figure}[htb]\centering
\includegraphics{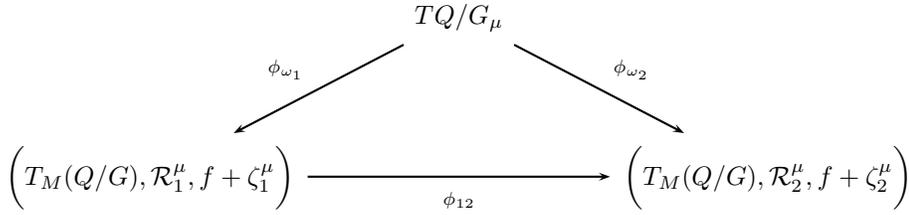}
\caption{Different realizations of $TQ/G_\mu$.}\label{fig:schema2}
\end{figure}
\begin{remark}\label{rem:altrouth}
\textnormal{The choice of the connection $\omega$ will alter the
Routhian and force term in the reduced Lagrangian system in such a
way that the critical curves of the intrinsically constrained
Lagrangian systems coincide. Assume that two principal connections
$\omega_1$ and $\omega_2$ are chosen. Then from standard
connection theory both connections are equal up to an equivariant
$\lag$-valued one-form $\delta$, i.e. $\omega_2=\omega_1 +
\delta$, with $R^*_g\delta = Ad_{g^{-1}} \cdot \delta$ and
$\langle\delta,\sigma(\xi)\rangle =0$ for all $\xi\in\lag$. The
one-form $\delta$ reduces to a $\tilde\lag$-valued one-form on
$Q/G$, denoted by $\tilde\delta$. Consider the one-form
$\tilde\delta^\mu=\langle \tilde \mu ,\tilde\delta\rangle$ on
$Q/G_\mu$ (in fact it is a map from $Q/G_\mu$ to $T^*(Q/G)$). We
now study the effect on the reduction process. First, the connections give
two possible identifications $TQ/G_\mu \simeq T(Q/G)\times Q/G_\mu \times
\tilde{\mathfrak{g}}$. The map $\phi_{12}$ (see Fig.~\ref{fig:schema2})
assumes the following form
\[
\phi_{12}\big(v_x,y,\tilde\xi\big) =
\big(v_x,y,\tilde\xi+\tilde\delta(v_x)\big).
\]  If we consider the
Routhian functions $\Ro^\mu_1$, $\Ro^\mu_2$ and the force terms
$\zeta^\mu_1$, $\zeta^\mu_2$ associated to $\omega_1$ and
$\omega_2$ respectively, then
\begin{eqnarray*}
&&\big(\phi^*_{12}\Ro^\mu_2\big)(v_x,y,\tilde\xi) =
\Ro^\mu_1(v_x,y,\tilde\xi)
-\langle \tilde \delta^\mu(y),v_x\rangle\\
&&\zeta^\mu_2(v_y) = \zeta^\mu_1(v_y) + i_{v_y}d\tilde\delta^\mu.
\end{eqnarray*}
This shows that both (intrinsically constrained) Lagrangian
systems are equivalent if we remember the fact that a linearly
velocity dependent potential in a Lagrangian is equivalent to the
addition of a gyroscopic force term (see~\cite{pars}). The choice
of the connection can be used to simplify the Routhian (paying the
price that a gyroscopic force term is added to the system) or to
simplify the gyroscopic force terms (paying the price that the
Routhian will contain linear terms in the velocity). We return to
this fact when we consider classical Routhian reduction and
compare some recent geometric results in the literature and the
formulation of the result as stated in~\cite{pars}.}
\end{remark}

\begin{remark}
\textnormal{We end this section with a presymplectic formulation
of the Routh reduced system. Using the notations from
Section~\ref{sec:velcon} we can consider the pull-back of the
canonical symplectic form $\omega_{Q/G}$ from $T^*(Q/G)$ to
$T_M(Q/G)$ using the Legendre transform of $\Ro^\mu$. Since the
intrinsic constraint associated to the fibration $Q/G_\mu\to Q/G$
comes from a regular gyroscopic term, the following theorem is a
direct consequence of the definitions in Section~\ref{sec:velcon}.
Here we use the projections $\overline\pi_1: T_M^*(Q/G)\to
T^*(Q/G)$, $\pi_2: M\to Q/G_\mu$.
\begin{theorem}\label{thm:presymp}
The Routhian reduced system of a Lagrangian system $(Q,L,F)$ can
be formulated into a presymplectic system on $TQ/G_\mu\cong
T_M(Q/G)$, where the presymplectic two-form is
$\omega_{\Ro^\mu}=(\overline\pi_1\circ\F_1\Ro^\mu)^*\omega_{Q/G}
+\pi_2^*\beta^\mu$. The critical curves $(y(t),\tilde\xi(t))$ of
the reduced system satisfy, with $\gamma(t)=(\dot
x(t),y(t),\tilde\xi(t))$
\[
\big(i_{\dot \gamma} \omega_{\Ro^\mu}  = -dE_{\Ro^\mu}
+f\big)|_{\gamma}.
\]\end{theorem}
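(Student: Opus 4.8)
The plan is to read Theorem~\ref{thm:presymp} as a direct specialization of the general presymplectic description of intrinsically constrained systems given at the end of \S\ref{sec:velcon}. By Theorems~\ref{thm:routh} and~\ref{thm:routhsplit} the Routh reduced dynamics is precisely that of the intrinsically constrained Lagrangian system $(M\to Q/G,\Ro^\mu,f+\zeta^\mu)$, with $M=Q/G_\mu\times\tilde\lag$, $N=Q/G$ and $T_MN=T(Q/G)\times Q/G_\mu\times\tilde\lag$. So I would set $L=\Ro^\mu$ and $N=Q/G$ in the framework of \S\ref{sec:velcon} and verify that the reduced force $f+\zeta^\mu$ is exactly of the allowed type $F=F_1+F_2$, a gyroscopic term plus a term $T^*\pi\circ\hat F\circ p_1$; once this is done the displayed presymplectic equation is the general formula written out.

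First I would identify the gyroscopic part. By Proposition~\ref{prop7} (and its reduction) $\zeta^\mu$ is the gyroscopic force associated to $\beta^\mu$ on $Q/G_\mu$, so its pull-back to $M$ along $\pi_2\colon M\to Q/G_\mu$ is the two-form $\beta:=\pi_2^*\beta^\mu$. It is closed: $\beta^\mu$ is the projection of $d\omega^\mu$, i.e. $\pi_\mu^*\beta^\mu=d\omega^\mu$, whence $\pi_\mu^*(d\beta^\mu)=d(d\omega^\mu)=0$ and, since $\pi_\mu$ is a submersion, $d\beta^\mu=0$; pulling back preserves closedness. This gives $F_1=\zeta^\mu$ as a gyroscopic term for $\beta$. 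Note that the general formulation of \S\ref{sec:velcon} requires only that $\beta$ be \emph{closed}, not regular: the non-degeneracy of $\mathrm{ad}^*\tilde\mu$ on the fibres of $\overline\pi_\mu$ (established before Theorem~\ref{thm:routhsplit}) makes the constraint coming from $Q/G_\mu\to Q/G$ a regular gyroscopic constraint, but $\beta$ vanishes on the $\tilde\lag$-directions and so is not regular on all of $M\to Q/G$; this is exactly why the reduced system stays genuinely presymplectic. For the base part, the reduced invariant force $f\colon TQ/G\to T^*(Q/G)$ induces $\hat f\colon T_MN\to T^*(Q/G)$ by $\hat f(v_x,y,\tilde\xi)=f(v_x,\tilde\xi)$ (using $TQ/G\cong T(Q/G)\times\tilde\lag$), so $F_2=T^*\pi\circ\hat f\circ p_1$ has the required form.

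With these identifications the general result applies verbatim. A critical curve $(y(t),\tilde\xi(t))$ of the reduced system gives $m(t)=(y(t),\tilde\xi(t))\in M$ with base curve $n(t)=\overline\pi_\mu(y(t))=x(t)$, so $\dot n(t)=\dot x(t)$ and $\g(t)=(\dot x(t),y(t),\tilde\xi(t))$ is exactly the curve $(\dot n(t),m(t))\in T_MN$ appearing in \S\ref{sec:velcon}. The general formula then yields
\[
\left(i_{\dot\g}\big((\overline\pi_1\circ\F_1\Ro^\mu)^*\omega_{Q/G}+\pi_2^*\beta^\mu\big)=-dE_{\Ro^\mu}+\pi_1^*\hat f\right)\big|_\g,
\]
which is the asserted equation once $\pi_1^*\hat f$ is abbreviated to $f$ and $\omega_{\Ro^\mu}$ is read off as the two-form on the left. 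The converse direction comes for free from the converse half of the general statement: any solution of this presymplectic equation of the form $\g=(\dot n,m)$ projects to a critical curve of $(M\to Q/G,\Ro^\mu,f+\zeta^\mu)$, hence of $(Q,L,F)$ with momentum $\mu$.

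The step needing the most care is purely the bookkeeping of pull-backs. One must check that $\pi_2^*\beta^\mu$, with $\pi_2\colon M\to Q/G_\mu$, further pulled back along the projection $T_MN\to M$, coincides with the $\pi_2^*\beta$ term of the general framework, and that $\F_1\Ro^\mu$ is the fibre derivative of $\Ro^\mu$ in the $T(Q/G)$-direction alone, so that $(\overline\pi_1\circ\F_1\Ro^\mu)^*\omega_{Q/G}$ is a well-defined two-form on $T_MN$ and $E_{\Ro^\mu}=\langle\F_1\Ro^\mu,\cdot\rangle-\Ro^\mu$ is the correct energy. Once these identifications are made explicit, no further computation is required and the theorem follows as a direct consequence of the presymplectic description of intrinsically constrained systems with gyroscopic and base forces.
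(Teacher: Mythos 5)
Your proposal is correct and takes essentially the same route as the paper, which simply asserts that the theorem is ``a direct consequence of the definitions in Section~\ref{sec:velcon}'' once the Routh reduced system is recognized as an intrinsically constrained system with force $F_1+F_2$ of the admissible type. You have merely made explicit the identifications the paper leaves implicit (the gyroscopic part $\zeta^\mu$ with closed two-form $\pi_2^*\beta^\mu$, the base force $f$ of the form $T^*\pi\circ\hat F\circ p_1$, and the bookkeeping of pull-backs), so no gap remains.
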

In the case that $(Q,L,F)$ is conservative (i.e. $F=0$), we can
apply the constraint algorithm to show that there exists a
submanifold $S$ of the final constraint submanifold on which a
unique second-order vector field $\Gamma$ exists such that
\[
\left.\left(i_{\Gamma} \omega_{\Ro^\mu} =
-dE_{\Ro^\mu}\right)\right|_{S}.
\]
The integral curves of $\Gamma$ are lifted critical curves of the
Routhian reduced system.}
\end{remark}

\section{Reconstruction}\label{sec:reconstruction}

The reconstruction process deals with the problem of finding a
critical curve $q(t)$ of the Lagrangian system $(Q,L,F)$ such that
$\dot q(t)$ projects onto a {\em given} critical curve
$(y(t),\tilde \xi(t))$ of the intrinsically constrained Lagrangian
system $(M\to Q/G,\Ro^\mu,f+\zeta^\mu)$ with
$M=Q/G_\mu\times\tilde\lag\to Q/G$. Using Theorem~\ref{thm:routh},
it is sufficient to construct a curve $q(t)$ in $Q$ that projects
onto $(y(t),\tilde \xi(t))$ for all $t$, i.e. such that (i)
$[q(t)]_{G_\mu}=y(t)$, and (ii) $\tilde\xi(t) =
[q(t),\omega_{q(t)}(\dot q(t))]_G$.

To find such a curve $q(t)$ in $Q$, we choose an arbitrary
starting point $q_a \in Q$ such that $\pi_\mu(q_a)=y(a)$. Next,
consider the unique horizontal curve $q_h(t)$ through $q_a$ that
projects onto $x(t)$. If a curve $q(t)$ exists for which (i) and
(ii) holds, then it is a gauge of $q_h(t)$, that is,
$q(t)=q_h(t)g(t)$ for a certain curve $g(t)$ in $G$. The
reconstruction process then only consists of solving a first order
differential equation on $G$ to determine $g(t)$. For that
purpose, let $\xi (t)$ be the curve in $\mathfrak{g}$ such that
$[q_h(t),\xi(t)]_G=\tilde \xi(t)\in \tilde{\mathfrak{g}}$. Then,
condition (ii) is satisfied if and only if $g(t)$ solves the first
order differential equation
\[
\dot g(t)g^{-1}(t) = \xi(t),
\]
with $g(a)=e$. This determining condition for $g(t)$ comes from
the fact that the projection of $\dot q(t)$ onto $\tilde\lag$ has
to be $\tilde \xi(t)$.

It now remains to check that (i) is true for the curve
$q(t)=q_h(t)g(t)$, i.e. $y(t) = [q_h(t)g(t)]_{G_\mu}$. Fix a time
$t$, then the horizontal part of the tangent vector to the curve
$[q_h(t)g(t)]_{G_\mu}$ is the horizontal lift of $\dot x(t)$ to
the point $y(t)$. Similarly it follows that the vertical component
of the curve $[q_h(t)g(t)]_{G_\mu}$ is precisely $(y,\tilde \xi) +
\tilde \lag_\mu$. Since $g(a)=e$ the initial condition for
$[q(t)]_{G_\mu}$ is $y(a)$, we may conclude that $y(t)$ and
$[q(t)]_{G_\mu}$ coincide (both curves have the same horizontal
and vertical parts and the same initial conditions). In this sense,
the intrinsic constraint associated to the gyroscopic force
$\zeta^\mu$ guarantees that $(y(t),\tilde\xi(t))$ is the
projection of a lifted curve $\dot q(t)$ in $TQ$, i.e. it reflects
the second-order nature of the original Lagrangian system
$(Q,L,F)$. To conclude this section we unite
Theorems~\ref{thm:routh},~\ref{thm:routhsplit}
and~\ref{thm:presymp}.
\begin{theorem}\label{thm:routh2}
Any critical curve of the Lagrangian system $(Q,L,F)$ on $Q$ with
momentum $\mu$ projects in $M=Q/G_\mu\times\tilde\lag$ onto a
critical curve of the intrinsically constrained Lagrangian system
$(M\to Q/G,\Ro^\mu,f+\zeta^\mu)$. Conversely, any critical curve of $(M\to Q/G,\Ro^\mu,f+\zeta^\mu)$ is the projection of a critical curve of $(Q,L,F)$. The
Euler-Lagrange equations of motion for the intrinsically
constrained system are
\begin{eqnarray}
&&{\cal EL}(\Ro^\mu)^h\left(\ddot x,\dot y,\dot{\tilde\xi}\right)=
i_{\dot x}\tilde \Omega^\mu(y)-f(\dot x,\tilde\xi) \nonumber\\
&&j_l(\dot x,\tilde\xi)=\tilde \mu(y)\nonumber\\
&&\dot y^v = (y,\tilde \xi) + \tilde\lag_\mu. \nonumber
\end{eqnarray}
These equations are equivalently formulated in a presymplectic
setting. Let $\gamma(t)=(\dot x(t),y(t),\tilde\xi(t))$ in
$T_M(Q/G)$ be associated to a curve $(y(t),\tilde\xi(t))$ in $M$.
This curve is critical if and only if $\gamma(t)$ solves
\[
\big(i_{\dot \gamma} \omega_{\Ro^\mu}  = -dE_{\Ro^\mu}
+f\big)|_{\gamma}.
\]
\end{theorem}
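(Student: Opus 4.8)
The plan is to recognize this statement as a synthesis of Theorems~\ref{thm:routh}, \ref{thm:routhsplit} and \ref{thm:presymp}, the only genuinely new ingredient being the reconstruction argument developed in this section, which upgrades the conditional converse of Theorem~\ref{thm:routh}(ii) to the unconditional converse claimed here. I would organize the proof around the three logical blocks of the statement: the forward implication, the explicit Euler-Lagrange equations together with their presymplectic reformulation, and the converse.

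The forward implication is immediate: by Theorem~\ref{thm:routh}(i), any critical curve $q(t)$ of $(Q,L,F)$ with momentum $\mu$ projects onto a critical curve $(y(t),\tilde\xi(t))$ of $(M\to Q/G,\Ro^\mu,f+\zeta^\mu)$. The three displayed Euler-Lagrange equations are precisely the content of Theorem~\ref{thm:routhsplit}, and the equivalence with the presymplectic equation $i_{\dot\gamma}\omega_{\Ro^\mu}=-dE_{\Ro^\mu}+f$ on $T_M(Q/G)$ is Theorem~\ref{thm:presymp}; no further work is needed for these parts beyond citing the respective results.

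The substance lies in the converse. Theorem~\ref{thm:routh}(ii) already establishes that a critical curve of the constrained system is critical for $(Q,L,F)$ \emph{provided} it is the projection of a lifted curve $\dot q(t)$ in $TQ$, so it suffices to show that every critical curve $(y(t),\tilde\xi(t))$ of $(M\to Q/G,\Ro^\mu,f+\zeta^\mu)$ arises in this way. For this I would invoke the explicit reconstruction: fix $q_a$ with $\pi_\mu(q_a)=y(a)$, take the horizontal lift $q_h(t)$ of $x(t)$ through $q_a$, and solve $\dot g(t)g^{-1}(t)=\xi(t)$ with $g(a)=e$, where $[q_h(t),\xi(t)]_G=\tilde\xi(t)$; then $q(t)=q_h(t)g(t)$ satisfies condition (ii) by construction. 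The main obstacle is verifying condition (i), namely $[q(t)]_{G_\mu}=y(t)$: here the third constraint $\dot y^v=(y,\tilde\xi)+\tilde\lag_\mu$ from Theorem~\ref{thm:routhsplit} is decisive. Both curves $[q(t)]_{G_\mu}$ and $y(t)$ share the same initial value $y(a)$, the same horizontal part (the horizontal lift of $\dot x$), and, thanks to the constraint, the same vertical part $(y,\tilde\xi)+\tilde\lag_\mu$ in the sense of Proposition~\ref{repto}; a uniqueness argument for the resulting first-order system then forces them to coincide. With $\dot q(t)$ now exhibited as a lift projecting onto $(\dot x(t),y(t),\tilde\xi(t))$, Theorem~\ref{thm:routh}(ii) concludes that $q(t)$ is critical for $(Q,L,F)$ with momentum $\mu$, completing the converse.
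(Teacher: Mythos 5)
Your proposal is correct and follows essentially the same route as the paper: the forward direction, the displayed equations, and the presymplectic form are quoted from Theorems~\ref{thm:routh}, \ref{thm:routhsplit} and \ref{thm:presymp}, and the converse is obtained by the reconstruction argument of Section~\ref{sec:reconstruction} (horizontal lift plus solving $\dot g g^{-1}=\xi(t)$, with the constraint $\dot y^v=(y,\tilde\xi)+\tilde\lag_\mu$ and an ODE uniqueness argument forcing $[q(t)]_{G_\mu}=y(t)$), followed by an appeal to Theorem~\ref{thm:routh}(ii). This matches the paper's own synthesis exactly.
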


\section{Routhian reduced systems: standard cases and examples}\label{sec:cases}

\subsection{The regular case}
The classical geometric description of Routh reduction (for
example \cite{Ar}, \cite{marsdenrouth}) mainly differs from the one described
above in that the unreduced Routhian is only defined in $J^{-1}_L
(\mu)$ thus providing a reduced Routhian independent of the
variable $\tilde{\xi}$. This is done for $T-V$ Lagrangians. This
classical setting is a special case of Theorem~\ref{thm:routh2}
when additional regularity conditions are assumed (the constraint
$\F \Ro^\mu=0$ is a regular configuration constraint). The idea
is, using the results from \S\ref{sec:regconf}, to eliminate the variable $\tilde\xi$ by means of the momentum equation
$j_l(v_x,\tilde\xi) =\tilde\mu(y)$. For that purpose we now assume
that the Lagrangian $L$ is both $G$-invariant and $G$-regular,
i.e., the mapping $\mathbb{F}_{\tilde{\xi}}l$ is invertible (see
\S \ref{suy} above). Let $\kappa_l: T(Q/G)\times\tilde\lag^*\to
T(Q/G)\times\tilde\lag$ be the inverse of
$\mathbb{F}_{\tilde{\xi}}l$.

There is a subtlety in the application of the result from
\S\ref{sec:regconf}. Until now we have applied the results from
\S\ref{sec:lagsyst} to Routh reduction by identifying $M$ with
$Q/G_\mu\times \tilde\lag$ and $N$ with $Q/G$. In the following we
deviate from this and we let $N$ be $Q/G_\mu$. In other words, we
reinterpret the constrained system $(M\to Q/G, \Ro^\mu,
f+\zeta^\mu)$ as a constrained system $(M\to
Q/G_\mu,\Ro^\mu,f+\zeta^\mu)$, i.e. for now we forget the
constraints in $y$ defined from the fact that $\Ro^\mu$ depends on
$y$ but not on $\dot{y}$. Then, from \S\ref{sec:regconf} it follows that
the intrinsic constraint $\F_{\tilde \xi}\Ro^\mu=0$ is a regular
configuration constraint, and using the inverse $\kappa_l$ we find
a section $\gamma : T(Q/G_\mu )\to Q/G_\mu \times \tilde{\mathfrak{g}}$ in the sense of
\S\ref{sec:regconf} given by
\begin{equation}\label{guti}
\gamma (v_y) = (y,\kappa^2_l (v_x,\tilde{\mu}(y))),
\end{equation}
where $v_x= T\overline\pi _\mu (v_y)$ with $\overline\pi _\mu : Q/G_\mu \to Q/G$
the natural projection; and $\kappa^2_l=\mathrm{pr}_2\circ\kappa_l$ with $\mathrm{pr}_2:T(Q/G)\times
\tilde\lag \to \tilde\lag$ the projection onto the second factor. As in \S\ref{sec:regconf}, we are now able
to consider the Lagrangian system on $N=Q/G_\mu$. However, recall that we temporarily
forgot the intrinsic constraints in the variable $y$. We can
reintroduce these constraints by making the remark that the
resulting system on $N$ carries intrinsically constraints on the
fibration $N=Q/G_\mu\to Q/G$. For notational convenience we
directly introduce the Lagrangian system on $N$ as being a
constrained system. This systems reads $(Q/G_\mu\to Q/G,\overline\Ro^\mu,
\overline f+ \zeta^\mu)$, where
\begin{eqnarray*}
&&\overline\Ro^\mu: T_{Q/G_\mu}(Q/G)\to \R:(v_x,y) \mapsto
\Ro^\mu\big(v_x,y,\kappa^2_l(v_x,\tilde\mu(y))\big)\\
&&\overline f:T_{Q/G_\mu}(Q/G)\to T^*(Q/G):(v_x,y)\mapsto
f\big(\kappa_l(v_x,\tilde\mu(y))\big)
\end{eqnarray*}
and $\zeta^\mu$ as usual the gyroscopic force associated to the
2-form $\beta^\mu$ on $Q/G_\mu$. We conclude that the critical
curves of the constrained Lagrangian system $(M\to
Q/G,\Ro^\mu,f+\zeta^\mu)$ are in a one-to-one correspondence to
the critical curves of $(Q/G_\mu\to Q/G,\overline\Ro^\mu,\overline
f+ \zeta^\mu)$.

Recall that $\beta^\mu$ can be decomposed into a horizontal and
vertical part: $(\tilde\Omega,-\mathrm{ad}^*)$. Similar to
previous definitions, we let $\partial_y\overline \Ro^\mu$ be the
shorthand notation for the restriction of the Euler-Lagrange
operator ${\cal EL}(\overline\Ro^\mu)$ to vectors vertical to
$Q/G_\mu\to Q/G$. From Theorem~\ref{thm:routh2} we easily have:

\begin{theorem}\label{ulth}
Any critical curve of the Lagrangian system $(Q,L,F)$ with
momentum $\mu$ projects in $Q/G_\mu$ onto a critical curve of the
intrinsically constrained Lagrangian system $(Q/G_\mu\to
Q/G,\overline\Ro^\mu,\overline f+\zeta^\mu)$, and vice versa. The
Euler-Lagrange equations of motion for the intrinsically
constrained system are
\begin{eqnarray}
&&{\cal EL}(\overline \Ro^\mu)^h\left(\ddot x,\dot y\right)=
i_{\dot x}\tilde \Omega^\mu(y)-\overline f(\dot x), \nonumber\\
&&\partial _y \overline{\mathcal{R}}^\mu(\dot x,y) =
-\mathrm{ad}^* _{\dot{y}^v}\tilde{\mu}(y) \nonumber
\end{eqnarray}
\end{theorem}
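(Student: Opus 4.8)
The plan is to obtain Theorem~\ref{ulth} as a direct corollary of Theorem~\ref{thm:routh2} by eliminating the fibre variable $\tilde\xi$ using the $G$-regularity hypothesis. Theorem~\ref{thm:routh2} already supplies the bijective correspondence between critical curves of $(Q,L,F)$ with momentum $\mu$ and critical curves of the intrinsically constrained system $(M\to Q/G,\Ro^\mu,f+\zeta^\mu)$ on $M=Q/G_\mu\times\tilde\lag$, together with its three equations of motion. The first task is therefore to show that under $G$-regularity this system is equivalent to $(Q/G_\mu\to Q/G,\overline\Ro^\mu,\overline f+\zeta^\mu)$ on the smaller bundle $Q/G_\mu$, and then to read off the two displayed equations.

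First I would observe that the intrinsic constraint $\F_{\tilde\xi}\Ro^\mu=0$, which by the earlier reduction equals $j_l(v_x,\tilde\xi)=\tilde\mu(y)$, is a \emph{regular configuration constraint} in the sense of \S\ref{sec:regconf}: $G$-regularity means $\F_{\tilde\xi}l$ is a diffeomorphism with inverse $\kappa_l$, so the constraint is solved uniquely by $\tilde\xi=\kappa^2_l(v_x,\tilde\mu(y))$, which is exactly the section $\gamma$ of Eq.~(\ref{guti}). The crucial maneuver is the reinterpretation of the fibration flagged in the text: I would regard $M$ as fibred over $N=Q/G_\mu$ rather than over $Q/G$, temporarily discarding the $y$-constraints, so that the elimination result of \S\ref{sec:regconf} applies verbatim to remove $\tilde\xi$. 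This produces $\overline\Ro^\mu(v_x,y)=\Ro^\mu(v_x,y,\kappa^2_l(v_x,\tilde\mu(y)))$ and the force $\overline f$, and \S\ref{sec:regconf} guarantees a one-to-one correspondence of critical curves. Reinstating the suppressed $y$-constraints then amounts to viewing the resulting system as itself intrinsically constrained over $Q/G$, giving $(Q/G_\mu\to Q/G,\overline\Ro^\mu,\overline f+\zeta^\mu)$. Composing this equivalence with Theorem~\ref{thm:routh2} yields both the forward and the converse statements.

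It remains to produce the equations of motion, which I would obtain from the connection splitting of Proposition~\ref{splittingel} applied to the intrinsically constrained system over $Q/G$, using the decomposition $\beta^\mu=(\tilde\Omega^\mu,-\mathrm{ad}^*\tilde\mu)$ of Eq.~(\ref{hhvv}). Since $\overline f$ takes values in $T^*(Q/G)$ it is purely horizontal, so the vertical Euler-Lagrange equation reads $\partial_y\overline\Ro^\mu+(\zeta^\mu)^v=0$; evaluating the vertical-vertical part of $\beta^\mu$ as in the computation preceding the theorem gives $(\zeta^\mu(\dot y))^v=\mathrm{ad}^*_{\dot y^v}\tilde\mu$, hence $\partial_y\overline\Ro^\mu=-\mathrm{ad}^*_{\dot y^v}\tilde\mu(y)$. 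For the horizontal equation, the horizontal-horizontal part of $\beta^\mu$ yields $(\zeta^\mu(\dot y))^h=-i_{\dot x}\tilde\Omega^\mu$, so Proposition~\ref{splittingel} delivers ${\cal EL}(\overline\Ro^\mu)^h=i_{\dot x}\tilde\Omega^\mu(y)-\overline f(\dot x)$.

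The main obstacle I anticipate is the bookkeeping around the double fibration $M\to Q/G_\mu\to Q/G$: one must verify that the vertical derivative $\partial_y\overline\Ro^\mu$ of the reduced Lagrangian genuinely coincides with $\partial_y\Ro^\mu$ evaluated along the section $\tilde\xi=\kappa^2_l(v_x,\tilde\mu(y))$. By the chain rule the discrepancy is $\F_{\tilde\xi}\Ro^\mu\cdot\partial_y\kappa^2_l$, which vanishes precisely because $\F_{\tilde\xi}\Ro^\mu=0$ on the constraint---the same mechanism that makes regular configuration constraints well-behaved in \S\ref{sec:regconf}. Checking that this vanishing is exactly what reconciles the vertical equation of Theorem~\ref{thm:routh2}, namely the gyroscopic constraint $\dot y^v=(y,\tilde\xi)+\tilde\lag_\mu$, with the reduced vertical equation above is the only genuinely non-formal point in the argument.
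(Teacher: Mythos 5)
Your proposal is correct and follows essentially the same route as the paper: reinterpret $M$ as fibred over $Q/G_\mu$ so that the momentum constraint becomes a regular configuration constraint solvable by $\kappa^2_l$, eliminate $\tilde\xi$ via \S\ref{sec:regconf}, reinstate the $y$-constraints over $Q/G$, and read the two equations off Theorem~\ref{thm:routh2}. The only difference is that you make explicit the chain-rule cancellation $\F_{\tilde\xi}\Ro^\mu\cdot\partial_y\kappa^2_l=0$ reconciling $\partial_y\overline\Ro^\mu$ with the gyroscopic constraint, a point the paper leaves implicit in its ``From Theorem~\ref{thm:routh2} we easily have'' step.
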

The equation $\partial_y\overline{\mathcal{R}}^\mu(\dot x,y) =
-\mathrm{ad}^* _{\dot{y}^v}\tilde{\mu}(y)$ can be rewritten as
\begin{equation}\label{eqn:regvel} \dot y^v=(y,\kappa_l^2(\dot
x,\tilde\mu(y))) + \tilde \lag_\mu.
\end{equation}

This theorem is similar to the classical geometric formulation of
Routh reduction as in e.g.~\cite{jalna,marsdenrouth} where the
variational problem is confined to the set $J^{-1}_L(\mu)$. It is
known that critical curves in $TQ$ with conserved momentum $\mu$
are in one-to-one correspondence to critical curves of the
variational problem under the constraint $J_L^{-1}(\mu)\subset
TQ$. In these references the {\em Routhian} is defined as the
restriction of $R^\mu$ to $J_L^{-1}(\mu)$, and since this
restriction $R^\mu |_{J_L^{-1}(\mu )}$ is invariant with respect
to the action of $G_\mu$, it is reducible to
$J^{-1}_L(\mu)/G_\mu$. Using the fact that in the $G$-regular case
(cf. Proposition~\ref{porron})
\[
J^{-1}_L(\mu )/G_\mu \simeq T(Q/G)\times Q/G_\mu,
\]
it is straightforward to show that this reduced {\em Routhian} is
precisely $\overline\Ro^\mu$.

In~\cite{marsdenrouth} (as well as in the original work of Routh)
the Lagrangian $L$ is of the type $L=T_2-V$ with $T_2=\frac12\rho
(v_q,v_q)$ a kinetic energy associated with a non-degenerate
positively definite invariant metric $\rho$ on $Q$ and $V:Q\to
\mathbb{R}$ an invariant potential energy.
Following~\cite{CMR01,jalna} one chooses the principal connection
to be the mechanical connection on $Q$ (the connection whose
horizontal distribution is orthogonal to $V\pi$ w.r.t the kinetic
energy metric $\rho$). The induced metric on $Q/G$ is denoted by
$\overline \rho$ (defined by $\overline
\rho(v_x,v_x)=\rho(v_x^h,v_x^h)$) and the vertical part of the
metric defines a metric $\tilde I$ on $\tilde\lag$. This is called
the inertia tensor and we assume it to be regular (i.e. the
Lagrangian is $G$-regular). The momentum map then equals
$j_l(x,\tilde\xi)=\flat_{ \tilde I}(x)(\tilde \xi) =\tilde \mu(y)$
and the Routhian on $Q/G_\mu$ is
\[
\overline \Ro^\mu (v_x,y) = \frac12\overline \rho(v_x,v_x)
-\left(V(x)+\frac12\langle\tilde\mu(y),\sharp_{\tilde
I}(x)(\tilde\mu(y))\rangle\right).
\]
One should note that this Routhian may differ from the one found
in classical books dealing with Routhian reduction. The difference
lies in the fact that one typically works in a local coordinate
chart, where the choice of the connection $\omega$ is taken to be
the local connection with vanishing connection coefficients (and
thus zero curvature). In the reduced system the curvature
force-term vanishes. This choice of connection with vanishing
curvature will imply that the Routhian contains linear terms in
the velocity, see also Remark~\ref{rem:altrouth}.

For the reconstruction process in this regular situation one may proceed
as in \S\ref{sec:reconstruction}. Indeed, given a solution $y(t)$ of
the reduced problem defined by $\overline \Ro^\mu $ we consider the curve
$\gamma (\dot{y}(t))$ in $Q/G_\mu \times
\tilde\lag$, where $\gamma $ is as in~(\ref{guti}). Then a family
of critical curves $q(t)$ of the problem defined by $L$ can be recovered
as in the general case. Alternatively (see \cite[\S IV]{marsdenrouth}), the curve $q(t)$
can be recovered by considering any curve $\bar{q}(t)$ with $y(t)=[\bar{q}(t)]_{G_\mu}$
and writing  $q(t)=\bar{q}(t)g(t)$, where $g(t)\in G_\mu$ is obtained by requiring that
$q(t)\in J_L^{-1}(\mu)$,$\forall t$.

\subsubsection{The (pre)-symplectic nature of classical Routhian reduction}
It was formulated in~\cite{marsdenrouth} how the symplectic bundle
picture for Routhian reduction looks like. Using
Proposition~\ref{prop:presymp} we can extend this to the more
general case of presymplectic geometry. The Routhian reduced
equations of motion in the classical case (so with a positive
definite kinetic energy) are intrinsically constrained
Euler-Lagrange equations living on $T_{Q/G_\mu}(Q/G)= T(Q/G)
\times Q/G_\mu$. Using the notations from
Section~\ref{sec:lagsyst} we have $M=Q/G_\mu$, $N=Q/G$. We can say
that critical curves of the Routh reduced system are solutions to
the following presymplectic system, with $\gamma(t)=(\dot
x(t),y(t))$:
\[
\left(i_{\dot \gamma (t)
}\big((\overline\pi_1\circ\F_1\overline\Ro^\mu)^*\omega_N +
\pi^*_2\beta^\mu\big)  = -dE_{\overline\Ro^\mu}+\overline
f\right)|_{\gamma},
\]
where $\mathbb{F}_1\overline{\mathcal{R}}^\mu \colon T(Q/G)\times
Q/G_\mu \to T^* (Q/G)\times Q/G_\mu$ stands for the fiber
derivative along the $T(Q/G)$ factor only. In the case that $\F _1
\overline{\mathcal{R}}^\mu$ is a diffeomorphism then the
presymplectic form is {\em symplectic}. This is the case if $L$ is
a regular Lagrangian, see also~\cite{BC}.

\subsubsection{The modified Tippe Top}

The purpose of this example is to demonstrate the relevance of
Routh reduction for nonconservative systems. In~\cite{CL} a Tippe
Top was described as a Lagrangian system with a dissipative force
term. The system is invariant under the action of $S^1$ and a
local stability analysis for relative equilibria was proven by
means of Routh reduction. We only mention here the setup of this
particular Lagrangian system within theoretical framework
developed in this article. The configuration space is $Q=SO(3)$
and we will work in local coordinate chart determined by the Euler
angles $(\varphi,\theta,\psi)$.  We consider the Abelian group
$G=S^1$ and the action on $SO(3)$ is defined by, for $\alpha \in
S^1$:
\[
(\varphi,\theta,\psi)\mapsto(\varphi+R\alpha,\theta,\psi-\e\alpha),
\]
for $\epsilon , R >0$. Next, we define the Lagrangian $L$ which is
of mechanical type and a dissipative force term $F$ which
represents the friction due to the slipping of the contact point
of the Tippe Top with the horizontal plane on which it evolves:
\begin{eqnarray*} L&=&\frac12 \left( (\e^2m\sin^2\theta+A)\dot
\theta^2 + A\sin^2\theta\dot\varphi^2+ C(\dot\psi
+\dot\varphi\cos\theta)^2\right) -
mg(R-\e\cos\theta),\\
F &=& -\mu (R-\e\cos\theta)^2\dot\theta d\theta
 -\mu \e \sin^2\theta(\e\dot \varphi+R\dot\psi)d\varphi
-\mu  R\sin^2\theta(R\dot \psi +\e\dot\varphi)d\psi,
\end{eqnarray*}

It is straightforward to see that $(Q,L,F)$ is invariant. The
associated momentum map: $TQ\to \R$ is:
\begin{eqnarray*}
J_L&=&\F L(q,\dot q)(R\partial_\phi-\e\partial_\psi) =
RA\sin^2\theta \dot \varphi +C(\dot\psi+\cos\theta\dot\varphi)
(R\cos\theta-\e)\\
&=& R\left((A\sin^2\theta+C\cos^2\theta)\dot \varphi-
C\e\dot\psi\right) +C\cos\theta(R\dot\psi-\e\dot\varphi),
\end{eqnarray*}
also called the Jellet integral. Fix a value $\mu\in \R$ of $J_L$.
The group is abelian, therefore $G_\mu =S^1$, and $SO(3)/S^1 \cong
S^2$. In order to define the Routhian, we consider the following
splitting of $TSO(3)$: the horizontal subspace is defined by the
(invariant) subspace spanned by
\[
\partial_\theta \mbox{ and } \e\partial_\varphi+R\partial_\psi.
\]
The curvature is zero, and we know that the Routh reduced system
is a Lagrangian system on $S^2$, i.e. of the form ${\cal
EL}(\overline \Ro^\mu) = -\overline f$, cf. Theorem~\ref{ulth}
with $\tilde\Omega=0$. In this case, the Routhian becomes:
\[
R^\mu=L-\mu\left(\frac{1}{\e^2+R^2}(R\dot\varphi-\e\dot\psi)\right)
\]
The next step is to use the Jellet integral and to compute
$\overline R^\mu$. Due to the complexity of the computations, we
refer the reader to~\cite{CL}.

\subsubsection{The free rigid body}\label{ex:freerigid}

We discuss the example of a free rigid body to illustrate the
meaning of the equation $\partial_y\overline\Ro^\mu = -
\mathrm{ad}^*_{\dot y^v}\tilde\mu(y)$ in Theorem~\ref{ulth}. We
have shown that it corresponds to $\dot
y^v=(y,\tilde\xi)+\tilde\lag_\mu$, with $\tilde\xi$ determined
from $j_l(\dot x, \tilde\xi)=\tilde\mu(y)$ and here we
illustrate this correspondence explicitly.

Following~\cite{marsdenrouth}, the configuration space of the free
rigid body is the entire group $Q=G=SO(3)$ on which $G$ acts from
the left (in view of remark~\ref{rem:leftright} we should change
the sign of the 2-form: $+\mathrm{ad^*}$). An element in $SO(3)$
is denoted by $A$ and corresponds to the rotation taking a
reference configuration (with principle inertia axis) of the rigid
body to its configuration at time $t$.  The Lagrangian is the left
invariant kinetic energy and the momentum equation corresponds to
the spatial angular momentum. Using the notations
from~\cite{marsdenrouth}, we assume that the the fixed momentum
equation corresponds to $\pmb\pi=\mu{\bf k}$ (where ${\pmb \pi}$
is the spatial angular momentum and $\bf k$ the unit vector in the
positive $z$-axis). We choose the standard connection
$\omega(A)(\dot A) = \dot AA^{-1}$, and we identify the
(trivial) bundle $\tilde\lag$ with $\lag$ by $[A,\xi]_G \mapsto
A^{-1}\xi A$, i.e. the representant of $[A,\xi]_G$ at unity. The
map determined by the connection $\omega$ which is used to identify $TSO(3)/G$ with
$T(Q/G)\times \tilde\lag \cong \lag$ becomes $(A,\dot
A)\mapsto A^{-1}\dot A$, i.e. the image is the angular velocity in
the body reference frame. In a local coordinate neighborhood
determined by Euler angles, the projection $TSO(3)\to \lag\cong\R^3$ equals
\[
(\phi,\theta,\psi,\dot \phi,\dot \theta,\dot \psi)\mapsto
(\xi_1,\xi_2,\xi_3)^T
\]
with  \begin{eqnarray}
  \xi_1 &=& \dot\theta\cos\psi+\dot\phi\sin\theta\sin\psi\nonumber \\
  \xi_2 &=& -\dot\theta\sin\psi+\dot\phi\sin\theta\cos\psi\nonumber \\
  \xi_3 &=& \dot\psi+\dot\phi\cos\theta.\label{eq:eulerangle}
\end{eqnarray} The reduced Lagrangian $l$ then determines
a function on $\lag$ and equals $l=\frac12 (I_1\xi^2_1+I_2\xi^2_2+I_3\xi^2_3)$,
with $(I_1,I_2,I_3)$ the inertia tensor in the body reference frame.

The isotropy subgroup $G_\mu = S^1$ consists of the rotations
about the $\bf k$-axis;  therefore $\pi_\mu: Q\to Q/G_\mu =S^2:
A\mapsto A^{-1} {\bf k}$. In terms of the Euler angles, this
projection is $(\phi,\theta,\psi)\mapsto (\theta,\psi)$. The map
$\tilde\mu: Q/G_\mu \to \tilde\lag^*\cong\lag^*$ is the
momentum $\mu{\bf k}$ expressed in body coordinates:
\[
(\theta,\psi)\mapsto (\mu \sin\theta\sin\psi,\mu
\sin\theta\cos\psi,\mu\cos\theta).
\]
The configuration space of the Routh reduced system is simply
$Q/G_\mu = S^2$ and the only meaningful equation from
Theorem~\ref{ulth} is $\partial_y\overline\Ro^\mu =
+\mathrm{ad}^*_{\dot y^v}\tilde\mu(y)$.

In the following we will first compute the equation $\xi =
\kappa^2_l(\tilde\mu(y))$ (see~(\ref{guti})) and study its projection to the tangent
bundle of $Q/G_\mu = S^2$. Secondly, we show that this projection
corresponds to the equation $\partial_y\overline\Ro^\mu =
+\mathrm{ad}^*_{\dot y^v}\tilde\mu(y)$.

To compute $\kappa_l^2$ we start with the reduced momentum
equation $j_l(\tilde \xi)=\tilde \mu(y)$:
\begin{eqnarray}
I_1\xi_1&=& \mu \sin\theta\sin\psi,\nonumber\\
I_2\xi_2&=& \mu \sin\theta\cos\psi,\nonumber\\
I_3\xi_3&=&\mu\cos\theta.\label{ex:free}
\end{eqnarray}
It should be clear that since $I_i\neq 0$, we can compute $\xi$
explicitly
\begin{equation}\label{eq:invxi}
\xi_1= \frac{1}{I_1}\mu \sin\theta\sin\psi,\ \xi_2=
\frac{1}{I_2}\mu \sin\theta\cos\psi,\
\xi_3=\frac{1}{I_3}\mu\cos\theta.
\end{equation}

Next, we need to compute the corresponding element in $TS^2$. In
the general theory this is done by identifying $TS^2$ with the
quotient $\overline\pi^*_\mu\tilde\lag/\tilde\lag_\mu$. In terms
of the Euler-angles, this boils down to computing $\dot \theta$
and $\dot \psi$ out of~(\ref{ex:free}): $\dot
\theta=\xi_1\cos\psi-\xi_2\sin\psi$ and $\dot \psi=\xi_3\sin\theta
-(\xi_1\sin\psi+\xi_2\cos\psi)\cos\theta$. If we substitute the
values for $\xi_{1,2,3}$ in these two expressions, we find the
equation $\dot y^v=(y,\tilde\xi)+\tilde\lag_\mu$, with $\tilde\xi$
determined from $j_l(\tilde\xi)=\tilde\mu(y)$:
\begin{eqnarray*}
\dot \theta &=&\mu\sin\theta\sin\psi\cos\psi\left(\frac{1}{I_1}-\frac{1}{I_2}\right),\\
\dot\psi &=&\mu\cos\theta\left(\frac{1}{I_3}
-\left(\frac{\sin^2\psi}{I_1}+\frac{\cos^2\psi}{I_2}\right)\right),
\end{eqnarray*}

The second way to retrieve these equations is using the Routhian
$\overline \Ro^\mu$. By definition it is obtained from $\Ro^\mu =
l (\xi)  - \langle \tilde \mu(y)  ,\xi\rangle$, where $\xi$ is
substituted by~(\ref{eq:invxi}):
\[
\overline\Ro^\mu(\theta,\psi) =
-\frac{\mu^2}{2}\left(\frac{\sin^2\theta\sin^2\psi}{I_1}
+\frac{\sin^2\theta\cos^2\psi}{I_2}+\frac{\cos^2\theta}{I_3}\right).
\]
Next, we compute the 2-form $\mathrm{ad}^*=\beta^\mu$ on $TS^2$.
Recall that it is the projection of $d\omega^\mu$ to $Q/G_\mu$. In
Euler angles, we have that $\langle\mu,\omega\rangle (A,\dot A) =
\tilde\mu(y)\cdot (\xi)^T = \mu(\dot \phi +\cos\theta \dot\psi)$.
Therefore $\beta^\mu = -\mu \sin\theta d\theta\wedge d\psi$, the
gyroscopic force term is $\zeta^\mu = -i_{\dot y} \beta^\mu =-
\mu\sin\theta \dot \psi d\theta + \mu\sin\theta \dot \theta d\psi$
and the Euler-Lagrange equations for the Routh reduced system are
${\cal EL}(\overline \Ro^\mu) = -\zeta^\mu$:
\begin{eqnarray*}
\fpd{\overline\Ro^\mu}{\psi}= -
\mu^2\sin^2\theta\sin\psi\cos\psi\left(\frac{1}{I_1}-\frac{1}{I_2}\right)&=& -\mu\sin\theta\dot \theta
 \\
\fpd{\overline\Ro^\mu}{\theta} =
-\mu^2\sin\theta\cos\theta\left(\left(\frac{\sin^2\psi}{I_1}+
\frac{\cos^2\psi}{I_2}\right)-\frac{1}{I_3}\right)&=&\mu\sin\theta\dot
\psi.
\end{eqnarray*}
It should be clear that these Euler-Lagrange equations, when
brought into normal form, are precisely the equations of motion
obtained above by projecting $\tilde\xi = \kappa^2_l(\tilde\mu(y))$
to $TS^2$.

\subsection{Lagrangian systems subjected to magnetic forces}\label{ssec:gyro}
Assume that we study a conservative Lagrangian system $L$, where
$L=T_2+T_1-V$, $T_2=\frac12\rho(v_q,v_q)$ is the kinetic energy
and $T_1 = \langle\alpha(q),v_q\rangle$ with $\alpha$ and
invariant 1-form. These linear terms in the kinetic energy are
found, for example, when the mechanical system is subject to
magnetic force terms (for instance, charged particle/rigid body
evolving in a magnetic field). The restriction of $\alpha$ to
horizontal vectors is denoted by $\alpha^h$ and is projectable to
a one-form $\overline\alpha$ on $Q/G$. The restriction of $\alpha$
to verticals determines a section $\tilde\alpha^v$ of
$\tilde\lag^*\to Q/G$. We consider the mechanical connection
$\omega$ defined by the metric $\rho$ as before, and let $\tilde
I$ be a regular inertia-tensor (i.e. $L$ is $G$-regular). The
momentum condition then reads
\[
\flat_{\tilde I}\tilde\xi + \tilde \alpha^v = \tilde\mu,
\]
with inverse $\tilde\xi=\sharp_{\tilde
I}(\tilde\mu-\tilde\alpha^v)$. The Routhian on $Q/G_\mu$ assumes
the form
\[
\overline \Ro^\mu (v_x,y) = \frac12\overline
\rho(v_x,v_x)+\langle\overline\alpha(x),v_x\rangle
-\left(V(x)+\frac12\langle(\tilde\mu-\tilde\alpha^v),
\sharp_{\tilde I}(\tilde\mu-\tilde\alpha^v)\rangle\right).
\]

\subsubsection{The heavy top in a magnetic field}

We consider a constant magnetic field along the $z$-axis in the
inertia frame. The Lagrangian of a charged rigid body with a fixed
point is invariant under the left action of rotations about the
$z$-axis (i.e. when $(\phi,\theta,\psi)$ is a coordinate chart
determined by the Euler-angles, then $\phi$ is (locally) cyclic).
Thus $Q=SO(3)$, $G_\mu=G=S^1$ and $Q/G = S^2$, see
also~\cite{marsdenrouth}. The Lagrangian equals, with
$\xi_i$ as written down in Eq.~(\ref{eq:eulerangle}):
\[
L=\frac12 \sum_{i=1}^3 I_i\xi_i -m g A^{-1} {\bf k}\cdot {\pmb
\chi} - \Omega B A^{-1}{\bf k}\cdot I{\pmb \xi},
\]
with $\Omega = q/mc$, with $q$ the charge of the heavy top, $g$
the earth's acceleration, $m$ the mass and ${\pmb \chi}$ the
vector with length $\epsilon$ connecting the fixed point with the
center of mass in the body reference frame. We assume that the top
is symmetric and that the moving frame is directed along its
symmetry axis (i.e. $I_1=I_2$). In this case, the Lagrangian
becomes
\[
L = \frac12 I_1\dot\theta^2 + \frac12I_3\dot \psi^2 +
I_3\cos\theta\dot\phi\dot\psi + \frac12\rho_{\phi\phi} \dot \phi^2
- mg\epsilon\cos\theta -\Omega B (\rho_{\phi\phi}\dot \phi +
I_3\cos\theta\dot\psi),
\]

with $\rho_{\phi\phi}=I_1\sin^2\theta +I_3\cos^2\theta$. Thus
$\alpha = -\Omega B (\rho_{\phi\phi}d\phi+ I_3\cos\theta d\psi)$, $\tilde \alpha^v=-\Omega B \rho_{\phi\phi}$ and $\alpha^h=0$.
Here we have chosen $\omega$ to be the mechanical connection
$\omega=d\phi+(I_3\cos\theta d\psi)/\rho_{\phi\phi}$. The
conserved momentum $J_L$ reads
\[
\rho_{\phi\phi} \dot \phi + I_3\cos\theta\dot\psi -\Omega B
\rho_{\phi\phi} = \mu.
\]

The next step is the computation of the Routhian
$\overline\Ro^\mu$, which can be carried out in two ways: either
one computes $\dot \phi$ out of the momentum equation, and then
substitute the expression in $L-\mu(\dot\phi
+(I_3\cos\theta\dot\psi)/\rho_{\phi\phi})$, or one computes
$\overline\rho$ and use the expression of $\overline\Ro^\mu$
determined above. After straightforward but tedious
computations one finds that the Routhian $\overline\Ro^\mu$ for
the magnetic heavy top assumes the following form
\[
\overline\Ro^\mu = \frac12\left(I_1 \dot \theta^2 +
\frac{I_3I_1\sin^2\theta}{\rho_{\phi\phi}}\dot\psi^2\right) -
V_\mu(\theta)
\]
with
\[
V_\mu(\theta)=mg\cos\theta+\frac12\frac{(\mu+\Omega
B\rho_{\phi\phi})^2}{\rho_{\phi\phi}}.
\]
In the case that $\Omega B \ll 1$, we may neglect the term in
$(\Omega B)^2$ and then the effective potential $V_\mu$ equals up
to constant the effective potential for the heavy top (i.e. set
$B=0$). In this case the critical curves of both problems
coincide, but the reconstruction process will differ. The
reconstruction equation, roughly said, is the inverse of the
momentum map:
\[
\dot \phi = \frac{\mu-I_3\cos\theta\dot\psi}{\rho_{\phi\phi}} +
\Omega B
\]
This result is also known as Larmor's theorem (for example, see
\cite{Gold}): the motion of a charged particle in a constant
magnetic field is the motion of the particle in the absence of the
magnetic field superposed with a rotation with constant frequency
$\Omega B$, the Larmor frequency.

\subsection{Lagrangians linear in the symmetry generators}
Here we study the case where the Lagrangian of an invariant
conservative Lagrangian system $(Q,L,0)$ generates a reduced
momentum map $j_l$ that does not depend on the variable
$\tilde\xi$, i.e. for a critical curve $(y(t),\tilde\xi(t))$
projecting onto $x(t)$:
\[
j_l(\dot x)=\tilde \mu(y).
\]
This situation occurs for instance if one studies Lagrangian
systems for which the kinetic energy is not strictly positive
definite, i.e. if $L(v_q)=\frac12 \rho(q)(v_q,v_q) - V(q)$, with
$\rho$ and $V$ right invariant, $\rho$ a symmetric two-tensor such
that $\rho(\sigma(\xi),\sigma(\xi)) \equiv 0$ for any
$\xi\in\lag$.

In this situation we can consider the metric $\overline\rho$ on
$Q/G$ defined in the usual way as $\overline\rho(v_x,v_x)=
\rho(q)((v_x)_q^h,(v_x)_q^h)$, for $\pi(q)=x$ and $v_x$ arbitrary
in $T_x(Q/G)$. The remaining part of $\rho$ yields a
$\tilde\lag^*$-valued one-form $\tilde\rho$ on the quotient space
$Q/G$.  The momentum constraint for a critical curve then yields
$\tilde\rho(x)(\dot x) =\tilde\mu(y)$.

We can apply Theorem~\ref{thm:routhsplit} and we find that the
Routhian $\Ro^\mu(v_x,y,\tilde\xi)$ equals
\[
\frac12\overline\rho(x)(v_x,v_x) - V(x) +
\langle(\tilde\rho(x)-\tilde\mu(y)),\tilde\xi\rangle.
\]
From Section~\ref{sec:lagsyst}, Proposition~\ref{prop:lagcon} we
know that the critical curves of this intrinsically constrained
Lagrangian system $(M\to Q/G, \Ro^\mu,\zeta^\mu)$ project onto
critical curves of the Lagrangian system
$(Q/G_\mu,\overline\Ro^\mu,\zeta^\mu)$ constrained to the
submanifold in $T(Q/G_\mu)$ determined by $\tilde\rho(x)(v_x)
=\tilde\mu(y)$. The function $\overline\Ro^\mu$ is defined as
\[
\overline\Ro^\mu(v_x)= \frac12\overline\rho(x)(v_x,v_x) - V(x) ,
\]
where we identified $V$ with its reduced function on $Q/G$. This
particular situation is encountered in general relativity, where
solutions to Einstein's equations admit lightlike Killing vectors.

\subsubsection{Lightlike Killing vectors}
This example shows how the presented theory can be applied to
reduce geodesic equations in general relativity where the metric
admits a lightlike Killing vector. We assume that the Killing
vector $X$ is complete, i.e. the flow of $X$ determines an Abelian
group action of $\R$ on the spacetime $(M,g)$. Since $X$ is
Killing it determines a symmetry of the kinetic energy Lagrangian
$L(v_q)=\frac12 g(v_q,v_q)$.

A relevant local example from general relativity illustrating the
above can be found in the so called pp-wave solution to Einstein's
equations (see~\cite{kramer}):
\[
g = H(u,x,y)du^2 +2dudv+dx^2+dy^2,
\]
expressed in the Brinckmann coordinates. The vector field
$\partial_v$ is Killing, and the momentum constraint then assumes
the form $J_L(u,x,y,\dot u) = \dot u=\mu$. The geodesics with
momentum $\mu$ are critical curves for the variational system on
$T\R^3$ with Lagrangian equal to
\[
\overline\Ro^\mu=\frac12\left(\mu^2 H(u,x,y)+\dot x^2+\dot
y^2\right)
\]
constrained to the submanifold $\dot u=\mu$.

\section*{Acknowledgements} The authors want to thank Tudor
S. Ratiu for useful conversations about some
examples where the classical approach can not be carried out. We
are also indebted to Frans Cantrijn for the careful reading of
this paper. MCL was partially supported by Ministerio de Ciencia y
Tecnolog\'{\i}a of Spain under grants MTM2007-60017 and
MTM2008-01386 and BL by a Research Programme of the Research
Foundation - Flanders (FWO).

\end{document}